\newtheorem{definition}{Definition}
\newtheorem{theorem}{Theorem}
\newtheorem{remark}{Remark}
\newtheorem{proposition}{Proposition}
\newtheorem{lemma}{Lemma}
\newtheorem{corollary}{Corollary}
\def\enc{\mathbf{u}_\textrm{e}}
\def\dec{\mathbf{u}_\textrm{d}}
\def\encs{u_\textrm{e}}
\def\decs{u_\textrm{d}}
\def\decsbar{\bar{u}_\textrm{d}}
\def\enclamb{\lambda_\textrm{e}}
\def\declamb{\lambda_\textrm{d}}
\def\lenc{\mathcal{L}_\textrm{enc}}
\def\ldec{\mathcal{L}_\textrm{dec}}
\def\func{\mathbf{f}}
\def\fhat{\mathbf{\hat{f}}}
\newcommand{\norm}[1]{\left\lVert#1\right\rVert}
\newcommand{\hil}[1]{\mathcal{H}^{#1}\left(\Omega\right)}
\newcommand{\hilR}[2]{\mathcal{H}^{#1}\left(#2\right)}
\newcommand{\hilz}[1]{\mathcal{H}_0^{#1}\left(\Omega\right)}
\newcommand{\hilmz}[1]{\mathcal{H}_0^{#1}\left(\Omega;\mathbb{R}^M\right)}
\newcommand{\hilm}[2]{\mathcal{H}^{#1}_{#2}\left(\Omega\right)}
\newcommand{\hiltilde}[1]{\widetilde{\mathcal{H}}^{#1}\left(\Omega; \mathbb{R}\right)}
\newcommand{\hilmtilde}[2]{\widetilde{\mathcal{H}}^{#1}\left(\Omega; \mathbb{R}^#2\right)}
\newcommand{\lp}[2]{{L}^{#1}\left(#2\right)}
\newcommand{\lpm}[2]{{L}^{#1}\left(#2; \mathbb{R}^M\right)}
\newcommand{\lploc}[2]{{L}_{\textrm{loc}}^{#1}\left(#2; \mathbb{R}\right)}
\newcommand{\lpmloc}[2]{{L}_{\textrm{loc}}^{#1}\left(#2; \mathbb{R}^M\right)}
\newcommand{\smp}{\mathbb{W}^{m,p}\left(\Omega; \mathbb{R}^M\right)}
\newcommand{\smpz}{\mathbb{W}_0^{m,p}\left(\Omega; \mathbb{R}^M\right)}
\newcommand{\smpeq}{\widetilde{\mathbb{W}}^{m,p}\left(\Omega; \mathbb{R}^M\right)}
\newcommand{\smploc}{\mathbb{W}_{\textrm{loc}}^{m,p}\left(\Omega; \mathbb{R}^M\right)}
\newcommand{\soblocm}[2]{\mathbb{W}_{\textrm{loc}}^{#1,#2}\left(\Omega; \mathbb{R}^M\right)}
\newcommand{\sob}[2]{\mathbb{W}^{#1,#2}\left(\Omega; \mathbb{R}\right)}
\newcommand{\sobeq}[2]{\widetilde{\mathbb{W}}^{#1,#2}\left(\Omega; \mathbb{R}\right)}
\newcommand{\sobm}[2]{\mathbb{W}^{#1,#2}\left(\Omega; \mathbb{R}^M\right)}
\newcommand{\sobmz}[2]{\mathbb{W}_0^{#1,#2}\left(\Omega; \mathbb{R}^M\right)}
\newcommand{\sobmeq}[2]{\widetilde{\mathbb{W}}^{#1,#2}\left(\Omega; \mathbb{R}^M\right)}
\newcommand{\spline}{\mathbf{S}_{\lambda,n,m}}
\newcommand{\lec}[2]{\stackrel{\text{#1}}{#2}}
\begin{document}
\title{General Coded Computing: Adversarial Settings} 

% %%% Single author, or several authors with same affiliation:
% \author{%
%  \IEEEauthorblockN{Author 1 and Author 2}
% \IEEEauthorblockA{Department of Statistics and Data Science\\
%                    University 1\\
 %                   City 1\\
  %                  Email: author1@university1.edu}% }

%%% Several authors with up to three affiliations:
\author{%
  \IEEEauthorblockN{Parsa Moradi}
  \IEEEauthorblockA{
                    University of Minnesota, Twin Cities\\
                    Minneapolis, MN, USA}
                    \and
  \IEEEauthorblockN{Hanzaleh Akbarinodehi}
  \IEEEauthorblockA{
                    University of Minnesota, Twin Cities\\
                    Minneapolis, MN, USA}
                    \and
    \IEEEauthorblockN{Mohammad Ali Maddah-Ali}
  \IEEEauthorblockA{
                    University of Minnesota, Twin Cities\\
                    Minneapolis, MN, USA}
}
\maketitle

%%%%%%
%% Abstract: 
%% If your paper is eligible for the student paper award, please add
%% the comment "THIS PAPER IS ELIGIBLE FOR THE STUDENT PAPER
%% AWARD." as a first line in the abstract. 
%% For the final version of the accepted paper, please do not forget
%% to remove this comment!
%%

\begin{abstract}
Conventional coded computing frameworks are predominantly tailored for structured computations, such as matrix multiplication and polynomial evaluation. Such tasks allow the reuse of tools and techniques from algebraic coding theory to improve the reliability of distributed systems in the presence of stragglers and adversarial servers.

This paper lays the foundation for general coded computing, which extends the applicability of coded computing to handle a wide class of computations. In addition, it particularly addresses the challenging problem of managing adversarial servers.
We demonstrate that, in the proposed scheme, for a system with $N$ servers, where $\mathcal{O}(N^a)$,  $a \in [0,1)$, are adversarial, the supremum of the average approximation error over all adversarial strategies decays at a rate of $N^{\frac{6}{5}(a-1)}$, under minimal assumptions on the computing tasks. Furthermore, we show that within a general framework, the proposed scheme achieves optimal adversarial robustness, in terms of maximum number of adversarial servers it can tolerate. This marks a significant step toward practical and reliable general coded computing. Implementation results further validate the effectiveness of the proposed method in handling various  computations, including inference in deep neural networks.
\end{abstract}
% \begin{abstract}
%     Coded computing has demonstrated remarkable potential in addressing critical challenges in distributed systems, such as the presence of slow (straggler) or adversarial servers. However, most existing frameworks are tailored for highly structured functions, such as polynomials, and focused on exact recovery. Therefore, their process fail completely if the number of available servers falls below a certain threshold.
%     While recent work has proposed coded computing schemes for general function approximation with straggler resiliency guarantees, the adversarial setting—where some servers arbitrarily alter their computation results—remains unexplored.
%     In this paper, we introduce, for the first time (to the best of our knowledge), an adversarial robust \emph{General Coded Computing} framework for general function approximation. Our proposed scheme enables robust approximate computing in the presence of adversarial servers. Specifically, we show that for a system with $N$ servers, of which $N^a$ are adversarial, the approximation error decays at a rate of $N^{-\frac{6}{5}(a-1)}$, under minimal structural assumptions on the computing function. Furthermore, the proposed scheme achieves optimal adversarial robustness, in terms of maximum number of adversarial servers it can tolerate, representing a significant step toward practical and reliable general coded computing.
% \end{abstract}
\section{Introduction}\label{sec:introduction}
Coded computing has demonstrated remarkable potential in addressing critical challenges in distributed systems, such as the presence of slow (straggler) or adversarial servers \cite{jahani2022berrut,yu2020straggler,karakus2017straggler,soleymani2022approxifer, yu2019lagrange, moradicoded}.  
In this approach, each server, instead of processing raw input, it process a combination of input data. The coded redundancy in the inputs, if designed properly, induces redundancy in the computation results. This redundancy enables the system to tolerate missing results from stragglers and enhances its robustness and reliability against adversarial attacks \cite{yu2019lagrange, das2019randoms, karakus2019redundancy, karakus2017straggler, yu2020straggler, entangle}.

% ised
% This scheme consists of one master node and set of servers, referred to as worker nodes. The master node aims to compute a specific function on a given set of input data, using the computing resource of the worker nodes. Firstly, the master node sends a combination of input data to each worker (encoding). Then, the worker nodes apply the computing function on their given input and send the results, back to the master node (computing). Finally, the master node recover the output of function on the original data points from the workers results (decoding).

Building on the insights and techniques developed in algebraic coding theory, coded computing techniques are primarily designed for structured computations, such as matrix multiplication \cite{yu2017polynomial,RamamoorthyConv,gupta2018oversketch,jahani2018codedsketch, opt-recovery, short} and polynomial evaluation \cite{yu2019lagrange}.
These structures facilitate the design of the encoding process such that, when integrated with the computation task, the results follow a desired form.
For example, the encoding process is designed so that, after computation, the results lie on a polynomial curve~\cite{yu2017polynomial,yu2019lagrange}. Thus, the theoretical guarantees of polynomial-based codes, such as Reed-Solomon codes, not only hold but also facilitate the use of efficient decoding algorithms, such as the Berlekamp–Welch algorithm \cite{blahut2008algebraic}.

% Coded computing has demonstrated significant potential in addressing critical challenges associated with the distributed computation of \emph{highly structured} functions, such as polynomials \cite{yu2017polynomial,yu2019lagrange} and matrix multiplications \cite{}. These challenges include the presence of stragglers (faulty or slow workers) \cite{} and Byzantine workers, who may act maliciously either intentionally or unintentionally \cite{}.

Besides its limited capacity to handle a wide range of computation tasks, conventional coded computing has several additional shortcomings.
These include numerical instability when performing computations over real numbers, inefficiency in handling or leveraging approximate computing, and high \emph{recovery thresholds}, which require a large number of non-failed or non-straggling servers for successful computation. Extensive research efforts have been undertaken to address these challenges, with each approach aiming to resolve specific issues \cite{fahimappx, RamamoorthyConv, soleymani2020analog, ramamoorthy2021numerically, AnooshehRobust,overSketch,jahani2018codedsketch}. However, to effectively address these challenges, it is necessary to revisit the concept of coded computing, starting with its foundational design principles.

In this paper, we lay the foundation of \emph{General Coded Computing}, which is not solely rooted in the principles of error correction coding but also draws on concepts from approximation and learning theory. The effectiveness of this new approach is validated through numerical evaluations and theoretical guarantees. Our focus is particularly on addressing adversarial behavior, which serves as a significant complement to our team's earlier efforts in mitigating the effects of stragglers for general computations~\cite{jahani2022berrut, moradicoded}. This task is especially challenging, as managing adversarial behavior has not been sufficiently addressed within the areas of approximation and learning theory.

Consider the task of computing $\func(\mathbf{x}_1), \ldots, \func(\mathbf{x}_K)$, for some general function $\func: \mathbb{R}^d \to [-M, M]^m$, and input data points $\mathbf{x}_1, \ldots, \mathbf{x}_K \in \mathbb{R}^d$, $K, m \in \mathbb{N}$ and $M \in \mathbf{R}_+$. The master node embeds the input data points into an encoding function 
$\enc(.)$, and distributes the samples of $\enc(.)$, as the coded inputs, to worker nodes, which compute 
$\func(.)$ of these coded symbols. Some of these servers are adversaries.  At the decoding side, we recover a decoding function $\dec(.)$ from the results of the worker nodes. However, instead of selecting the encoder and decoder from polynomials, like Reed-Solomon and Lagrange coding,  $\enc(.)$ and  $\dec(.)$ belongs to the set of functions with bounded first and second-order derivatives, i.e., the Second-order Sobolev spaces. We show that while such a smoothness condition is sufficiently restrictive to introduce redundancy in the coded symbols and mitigate the impact of adversarial worker nodes, it remains flexible enough to accommodate general computation tasks, with minimal assumptions. We use the properties of Sobolev spaces, as Reproducing Hilbert Kernel Spaces (RHKS), to efficiently form the encoding and decoding functions.  We prove that in a system with $N$ worker nodes, where $\mathcal{O}(N^a)$, $a \in [0,1)$, are adversarial, the maximum approximation error of the proposed scheme, taken over all possible adversarial actions, decays at a rate of $N^{\frac{6}{5}(a-1)}$,  under the mild assumption that the computing function belongs to the second-order Sobolev space. Additionally, we prove that within this framework, the proposed scheme is optimal in the sense that no other scheme can tolerate a greater number of adversarial worker nodes. Implementation results further validate the effectiveness of the proposed method in handling complex computations, such as inference in deep neural networks. 

The paper is organized as follows: Section~\ref{sec:prelim} introduces the problem formulation. Section~\ref{sec:scheme} provides a detailed description of the proposed scheme. Section~\ref{sec:theory} outlines the theoretical results, and Section~\ref{sec:exp_res} presents the experimental results. Lastly, Section~\ref{sec:proofs} includes proof sketches for the main theorems.

{\bf Notations:}
In this paper, we denote the set $\{1, 2, \ldots, n\}$ by $[n]$ and the size of a set $S$ by $|S|$. Bold letters represent vectors and matrices, while the $i$-th component of a vector-valued function $\mathbf{f}$ is written as $f_i(\cdot)$. Scalar function derivatives are denoted by $f'$, $f''$, and $f^{(k)}$ for first, second, and $k$-th orders. The space $\hilm{2}{m}$ refers to the RKHS of second-order Sobolev functions for vector-valued functions of dimension $m$ on $\Omega$, with $\hilR{2}{\Omega}$ used for the one-dimensional case.

\section{Problem Formulation}\label{sec:prelim}

%\subsection{Problem Setting}\label{sec:prob_set}

Let $\func: \mathbb{R}^d \to [-M, M]^m$, be a function where $\func(\mathbf{x}) \triangleq [f_1(\mathbf{x}), f_2(\mathbf{x}), \dots, f_m(\mathbf{x})]$, and $ f_i: \mathbb{R}^d \to [-M, M]$ for some $m,d \in \mathbb{N}$, $M \in \mathbb{R}$, and $i\in [m]$. The function $\func(\cdot)$ can represent an arbitrary function, ranging from a simple scalar function to a complex deep neural network.

Consider a system consisting of a master node and $N \in \mathbb{N}$ worker nodes. The master node is interested in the approximate results of computing $\{\func(\mathbf{x}_k)\}_{k=1}^K$ in a distributed setting, leveraging the set of $N$ worker nodes, for $K \in \mathbb{N}$ data points $\{\mathbf{x}_k\}_{k=1}^K$, where each $\mathbf{x}_k \in \mathbb{R}^d$.
%  Consider a setup with a master node and $N \in \mathbb{N}$ worker nodes. The master node's goal is to compute the set $\{\func(\mathbf{x}_k)\}_{k=1}^K$ for $K$ data points $\{\mathbf{x}_k\}_{k=1}^K$, where each $\mathbf{x}_k \in \mathbb{R}^d$, for some $d,K \in \mathbb{N}$.  The function $\func: \mathbb{R}^d \to [-M, M]^m, \quad \text{where } \func(x) = [f_1(x), f_2(x), \dots, f_m(x)] \text{ and } f_i(x) \in [-M, M]$ for $i\in [N]$.  The only assumptions for $\func(\cdot)$ are Lipschitz continuity and a bounded second derivative. (\han{Change it to: We assume that $\func(.)$ is $L$-Lipchitz, that is for any $x_1, x_2 \in ....$, we have
% \begin{align}
%     ...
% \end{align}.
% In addition, we assume that for any $x \in ...$, we have $|\func''(x)| \leq C$, for some $C \in \mathbb{R}$.} ) 
The set of worker nodes is partitioned into two sets: (1) the set of \emph{honest nodes}, denoted by \( \mathcal{D} \), which strictly execute any assigned task, (ii) the set of \emph{adversarial nodes}, denoted by \( \mathcal{B} \), which  produce arbitrary outputs. We assume that \( |\mathcal{B}| \leq \gamma \), for some \( \gamma \leq N \). 
% \( \mathcal{D} \cap \mathcal{B} = \emptyset \), \( \mathcal{D} \cup \mathcal{B} = [N] \), and
We assume that all adversarial worker nodes are controlled by a single entity, referred to as the \emph{adversary}. 
Throughout this paper, we adopt the following three-step framework for coded computing:
\begin{enumerate}
    \item \textbf{Encoding:} The master node forms an encoding function $\enc: \mathbb{R} \to \mathbb{R}^d$, such that $\mathbf{x}_k \approx \enc(\alpha_k)$, where $\alpha_1 < \alpha_2 < \dots < \alpha_K \in \Omega \subset \mathbb{R}$ are fixed and predefined constants. The accuracy of the above approximation is determined during the encoder design process, as will be described later. It then evaluates $\enc(\cdot)$ on another set of fixed points $\{\beta_n\}_{n=1}^N$, where $\beta_1 < \beta_2 < \dots < \beta_N \in \Omega \subset \mathbb{R}$. For each $n \in [N]$, the master node computes the coded data point $\Tilde{\mathbf{x}}_n = \enc(\beta_n) \in \mathbb{R}^d$ and sends it to worker node $n$, $n \in [N]$.  Note that each coded data point is a combination of all the original points $\{\mathbf{x}_k\}_{k=1}^K$.

        \item \textbf{Computing:} Every honest worker node $n \in \mathcal{D}$ computes $\func(\Tilde{\mathbf{x}}_n) = \func(\enc(\beta_n))$ and sends it back to the master node. On the other hand, an adversarial worker node $n \in \mathcal{B}$ may send arbitrary values, still within the acceptable range $[-M, M]^m$ in an attempt to avoid detection and exclusion.
        Thus, in summary,  worker node $n$ sends $\overline{\mathbf{y}}_n$ to the master node as follows: 
%         $
% \overline{\mathbf{y}}_n =  
%     \func(\Tilde{\mathbf{x}}_n)$ if  $n \in \mathcal{D}$ and $\overline{\mathbf{y}}_n =\star \in [-M, M]^m$ if $n \in \mathcal{B}$.
        % $\overline{\mathbf{y}}_n = \func(\Tilde{\mathbf{x}}_n)$, for $n \in \mathcal{D}$, and $\overline{\mathbf{y}}_n = \star \in [-M, M]^m$, otherwise. 

 % \begin{comment}   
        $$
    \overline{\mathbf{y}}_n = \begin{cases} 
    \func(\Tilde{\mathbf{x}}_n) & \text{if } n \in \mathcal{D}, \\
    \star \in [-M, M]^m & \text{if } n \in \mathcal{B}.
    \end{cases}
    $$
    % \end{comment}
    % which is the output range of the computing function $\func(\cdot)$.
   
    \item \textbf{Decoding:} Upon receiving the results from the worker nodes, the master node forms a decoder function $\dec: \mathbb{R} \to \mathbb{R}^m$, such that $\dec(\beta_n)\approx \overline{\mathbf{y}}_n$, for $n \in \mathcal{D}$.  Of course,  the master node is unaware of the set $\mathcal{D}$,  which is one of the reasons that determining $\dec$ is challenging.
 It then computes $\hat{\func}(\mathbf{x}_k) := \dec(\alpha_k)$ as an approximation of $\func(\mathbf{x}_k)$ for $k \in [K]$. If the master node can effectively mitigate the influence of adversarial worker nodes, it expects the approximation $\dec(\alpha_k) \approx \func(\enc(\alpha_k)) \approx \func(\mathbf{x}_k)$ to hold.
\end{enumerate}
 
Recall that $|\mathcal{B}| \leqslant \gamma$. Let $\mathcal{A}_\gamma$ denote the set of all possible strategies that the adversary can employ.  The adversary chooses its attack to maximize the approximation error of the computation, leveraging full knowledge of the input data points, the computation function $\func(\cdot)$, $\{\alpha_k\}_{k=1}^K$,  $\{\beta_n\}_{n=1}^N$, and  and the scheme used to design the encoder and decoder functions $\enc(\cdot)$ and $\dec(\cdot)$. The effectiveness of the proposed coded computing scheme is evaluated using the \emph{average approximation error}, defined as:
\begin{align}\label{eq:obj}
    \mathcal{R}(\fhat) := \mathop{\sup}_{\mathcal{A}_\gamma} \frac{1}{K} \sum^K_{k=1} \norm{\fhat(\mathbf{x}_k) - \func(\mathbf{x}_k)}^2_2.
\end{align}
The objective is then to find $\enc \in \hilm{2}{d}$ and $\dec\in \hilm{2}{m}$ such that  minimize $\mathcal{R}(\fhat)$ in \eqref{eq:obj}. The proposed approach raises several important questions: 
(i) Does constraining the encoder and decoder to second-order Sobolev spaces (i.e., functions with bounded first and second derivative norms) impose sufficient restrictions to ensure robustness of the distributed system against adversarial attacks? (ii) If so, what is the maximum number of adversaries the scheme can tolerate while maintaining its effectiveness?
(iii)  How fast does the average estimation error converge to zero as a function of $N$ and $\gamma$?

% We are interested to answer the following questions: (1) Is this scheme able to protect the system against adversaries? In other words, does limiting the encoder and decoder to the second-order Sobolev spaces (functions with bounded first and second order derivatives) is enough to smooth out adversarial results? (2) if yes, how many adversary can this scheme handle? 
% (3) If this scheme is effective, how fast the average estimation error converges to zero, as a function of $N$ and the number of adversarial servers? 

We note that minimizing~\eqref{eq:obj}, subject to $\enc \in \hilm{2}{d}$ and $\dec\in \hilm{2}{m}$,  is a challenging task for several reasons. First, the assumptions on the function $\func(\cdot)$ are minimal; we only require that $\func(\cdot)$ has bounded first and second derivative norms.  
Second, designing these functions involves addressing the complexities of infinite-dimensional optimization, as both the encoder and decoder are elements of Sobolev spaces. Third, optimizing the encoder and decoder cannot be treated independently. The objective function $\mathcal{R}(\fhat)$ in~\eqref{eq:obj} arises from the composition of three functions: the encoding function, the computation task, and the decoding function, necessitating a joint optimization approach. 
Lastly, the supremum in \eqref{eq:obj} is taken over $\mathcal{A}_\gamma$, representing all possible adversarial behaviors. This requirement to optimize against worst-case adversarial strategies significantly increases the complexity of the problem.

% This optimization problem is highly challenging for several reasons. First, the assumptions on the function $\func$ are minimal; we only require that $\func$ has bounded first and second derivative norms. This lack of additional structural assumptions makes the problem particularly difficult to address. Second, the optimization spans an infinite-dimensional space, as $\enc(\cdot)$ and $\dec(\cdot)$ belong to Sobolev spaces of functions. Designing these functions requires overcoming the complexities associated with this infinite-dimensional optimization framework. Third, $\enc(\cdot)$ and $\dec(\cdot)$ are interdependent and must satisfy specific compatibility conditions, as shown later in \eqref{eq:decompose}. These conditions impose additional constraints on their design. Lastly, the supremum in \eqref{eq:obj} is taken over $\mathcal{A}_\gamma$, representing all possible adversarial strategies. Optimizing against this worst-case adversarial behavior adds further complexity to the problem.

% The objective is to determine $\enc(\cdot)$ and $\dec(\cdot)$ that minimize \eqref{eq:obj}. This presents an infinite-dimensional optimization problem with a complex encoder-decoder relationship (see \eqref{eq:decompose}), making it highly challenging. We argue that, under mild assumptions on the encoder and decoder functions, where $\enc(\cdot) \in \hilm{2}{d}$ and $\dec(\cdot) \in \hilm{2}{m}$, there exist a scheme that provably mitigates the maximum possible number of adversaries.

\section{Proposed Scheme}\label{sec:scheme}
%We assume that $\enc \in \hilm{2}{d}$ and $\dec \in \hilm{2}{m}$.$ 
% The objective is to design the encoding and decoding functions in a way that minimizes the approximation error expressed in \eqref{eq:obj}. 
Considering the difficulty of optimizing $\mathcal{R}(\fhat)$ in~\eqref{eq:obj},  we first establish an upper bound for $\mathcal{R}(\fhat)$ as the sum of two terms:
\begin{alignat}{2}
    \label{eq:decompose}
    \mathcal{R}(\fhat) &\lec{(a)}{=} 
     \sup_{\mathcal{A}_\gamma} \frac{1}{K}
     \sum^K_{k=1} \|\dec(\alpha_k)- \func(\enc(\alpha_k))  + \func(\enc(\alpha_k)) - \func(\mathbf{x}_k)\|^2_2 \nonumber \\
    &\lec{(b)}{\leqslant}  \underbrace{\mathop{\sup}_{\mathcal{A}_\gamma} \frac{2}{K} \sum^K_{k=1} \norm{\dec \left(\alpha_k\right) - \func \left(\enc\left(\alpha_k\right)\right)}^2_2}_{\ldec(\hat{\func})}+  \underbrace{\frac{2}{K} \sum^K_{k=1} \norm{\func(\enc(\alpha_k)) - \func(\mathbf{x}_k)}^2_2}_{\lenc(
 \hat{\func})},
\end{alignat}
where (a) follows from adding and subtracting the term $\func(\enc(\alpha_k))$, and (b) is due to the AM-GM inequality.

The first term in \eqref{eq:decompose}, $\ldec(\fhat)$, represents the error of the decoder function in predicting the target function $\func(\enc(\cdot))$ at the points $\{\alpha_k\}_{k=1}^K$, which differ from its training points $\{\beta_n\}_{n=1}^N$, for which the master node has partial access (except from adversarial inputs) to the value of $\func(\enc(\cdot))$. 
Thus, in the terminology of learning theory, it would represents the generalization error of the decoder function.
In contrast, the second term, $\lenc(\fhat)$, serves as a proxy for the encoder's training error, as it captures the encoder's error on $\{\alpha_k\}_{k=1}^K$, after passing through $\func(\cdot)$. These two terms are not independent, as the target function of the decoder is $\func(\enc(\cdot))$, which depends on the encoder function. Consequently, \eqref{eq:decompose} reveals an inherent interplay between the encoder and decoder functions, further complicating the optimization of \eqref{eq:obj}.

\subsection{Decoder Design}\label{sec:dec_design}
Recall that it is important for the decoder function to achieve both a low generalization error and high robustness to adversarial inputs. Therefore, we propose to find $\dec$, using the following regularized optimization:
\begin{align}\label{eq:decoder_opt_adv}
\dec=\underset{\mathbf{u} \in \hilm{2}{m}}{\operatorname{argmin}} \frac{1}{N} \sum^N_{n=1}\norm{\mathbf{u}\left(\beta_n\right)- \mathbf{\overline{y}}_n}_2^2+\declamb \norm{\mathbf{u}''}^2_{\lp{2}{\Omega}},
\end{align}
where $\norm{\mathbf{u}''}_{\lp{2}{\Omega}}$ denotes the $\ell_2$ norm of the vector-valued function $\mathbf{u}''(\cdot)$. The first term in \eqref{eq:decoder_opt_adv} represents the mean squared error, which \emph{controls the accuracy} of the decoder function. The second term serves as a regularization term, promoting smoothness of the decoder function over the interval $\Omega$ and thereby \emph{controlling generalization and robustness}. The parameter $\declamb$, known as the smoothing parameter, determines the relative weighting of these two terms.  
% As $\declamb \to \infty$, the solution to \eqref{eq:decoder_opt_adv} converges to a vector-valued function whose elements are second-degree polynomials. Conversely, as $\declamb \to 0$, the decoder function tends to overfit the training data.

Using the representer theorem \cite{scholkopf2001generalized}, it can be shown that the optimal decoder $\mathbf{u}^*(\cdot) := [u_1^*(\cdot), \dots, u_m^*(\cdot)]$ in \eqref{eq:decoder_opt_adv} can be expressed such that each element $u_i^*(\cdot)$ is a linear combination of the basis functions $\{\phi(\cdot, \beta_n)\}_{n=1}^N$, where $\phi(\cdot, \cdot)$ denotes the kernel function associated with $\hil{2}$.
The solution to \eqref{eq:decoder_opt_adv} belongs to a class of spline functions known as \emph{second-order smoothing splines} \cite{wahba1975smoothing, wahba1990spline}. The coefficient vectors for $\mathbf{u}^*_i$, corresponding to $\{\phi(\cdot, \beta_n)\}_{n=1}^N$, can be efficiently computed in $\mathcal{O}(Nm)$ by utilizing B-spline basis functions \cite{wahba1990spline, eilers1996flexible}.
% and solving a quadratic optimization problem \cite{wahba1975smoothing}

In practice, the hyper-parameter $\declamb$ in \eqref{eq:decoder_opt_adv} is typically determined using cross-validation \cite{wahba1975smoothing, wahba1990spline}. However, we perform a theoretical analysis to determine the optimal value of $\declamb$ that minimizes an upper bound for \eqref{eq:obj}. Specifically, we show that if the maximum number of adversarial worker nodes is $\mathcal{O}(N^a)$ with $a \in [0,1)$, then the optimal value of $\declamb$ is $\mathcal{O}(N^{\frac{8}{5}(a-1)})$.

\subsection{Encoder Design}\label{sec:enc_des}
We will show that $\ldec$ can be bounded above by an expression, which can serve as  a regularization term in searching for the encoder function (see Theorem~\ref{th:enc_bound}). Consequently, the encoder function can be determined by the following optimization: 
\begin{align*}
    \enc=\underset{\mathbf{u} \in \hilm{2}{d}}{\operatorname{argmin}} \frac{C}{K} \sum^K_{k=1}\norm{\mathbf{u}\left(\alpha_k\right)- \mathbf{\mathbf{x}}_k}^2+\enclamb \psi\left(\norm{\mathbf{u}}^2_{\hilm{2}{d}}\right),
\end{align*}
where where $\enclamb$ is a function of $(\declamb, N, M, \gamma)$, $C$ depends on the Lipschitz constant of the computing function, and $\psi(\cdot)$ is a monotonically increasing function in $[0, \infty)$. Applying the representer theorem \cite{scholkopf2001generalized}, one can conclude that, similar to the decoder function, the encoder function can also be represented as a linear combination of $\{\phi(\cdot, \alpha_k)\}_{k=1}^K$.

\section{Theoretical Guarantees}\label{sec:theory}
In this section, we analyze the proposed scheme from theoretical viewpoint. For simplicity of exposition, in this conference paper, we focus on one-dimensional case in which $f:\mathbb{R} \to \mathbb{R}$. Also, we assume that $\Omega = [0,1].$

First, we present the following theorem, which demonstrates that if $\gamma = \mathcal{O}(N)$, no encoder and decoder functions can achieve adversarial robustness:
\begin{theorem}{\normalfont(Impossibility Result)}\label{th:impos}
    In the proposed  coded computing framework,  assume $\gamma =  \mu N$ for some $\mu \in (0, 1)$. Then, there exists some $f$ with bounded first and second derivative, for which there is no $\encs, \decs \in \hil{2}$ such that:
    $
        \lim_{N \to \infty} \mathcal{R}(\hat{f}) = 0.
    $
\end{theorem}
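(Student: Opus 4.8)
The impossibility result should follow from an information-theoretic / adversarial indistinguishability argument: I would construct two functions $f_1, f_2 \in \hil{2}$ (or a single $f$ together with two "intended targets") such that the adversary, controlling $\mu N$ servers, can make the set of received responses $\{\overline y_n\}_{n=1}^N$ consistent with both scenarios, so that no decoder can distinguish them. Concretely, partition the $N$ worker nodes (equivalently, the evaluation points $\{\beta_n\}$) into blocks, and arrange that the honest responses in the $\mu N$ "attacked" positions under scenario 1 coincide exactly with what the adversary outputs in scenario 2, and vice versa. Since $\mathcal{R}(\hat f)$ is a supremum over $\mathcal{A}_\gamma$, it suffices to exhibit one adversarial strategy forcing a constant lower bound; the natural choice is: the adversary picks the $\mu N$ servers whose $\beta_n$ lie in some fixed subinterval $I \subset \Omega$, and on those servers reports the values $f(\enc(\beta_n))$ that would have arisen had the encoder been a different admissible encoder $\enc'$ agreeing with $\enc$ outside $I$ but differing on $I$.

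First I would fix the geometry: because $\gamma = \mu N$ with $\mu$ a constant, for all large $N$ there is a subinterval $I = [c, c+\delta] \subset \Omega$ containing at least $\mu N$ of the points $\beta_n$ (taking $\delta$ proportional to $\mu$ if the $\beta_n$ are roughly equispaced, or more carefully using a pigeonhole over $\lceil 1/\mu \rceil$ subintervals). I would also want $I$ to contain, or be adjacent to, a constant fraction of the $\alpha_k$'s, so that the approximation error at those $\alpha_k$ is provably bounded away from zero in at least one of the two scenarios. Then I would build a "bump" perturbation $h \in \hil{2}$ supported in $I$, with $\|h''\|_{L^2}$ and $\|h\|_\infty$ controlled (a smooth bump scaled so its $\hil{2}$-norm is $O(1)$ but its sup norm on $I$ is some fixed $\epsilon_0 > 0$). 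The two indistinguishable worlds are: (world A) the true encoder is $\enc$, the adversary sits on the $\beta_n \in I$ and outputs $f(\enc(\beta_n) + h(\beta_n))$; (world B) the true encoder is $\enc + h$, the adversary sits on the $\beta_n \in I$ and outputs $f(\enc(\beta_n))$. In both worlds the master receives an identical vector of responses, so it must output the same $\hat f$; but the target values $f(\mathbf{x}_k)$ differ between worlds at the $\alpha_k \in I$ by roughly $|f(\enc(\alpha_k)+h(\alpha_k)) - f(\enc(\alpha_k))|$, which we arrange to be $\gtrsim \epsilon_0$ on a constant fraction of indices by choosing $f$ with a nonvanishing derivative on the relevant range (e.g. $f$ linear there). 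Hence $\max(\mathcal{R}_A(\hat f), \mathcal{R}_B(\hat f)) \geq c(\mu, \epsilon_0) > 0$ uniformly in $N$, so the limit cannot be zero.

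A few technical points I would need to nail down. (i) The encoder is \emph{chosen by the scheme}, not by the adversary, so I cannot literally say "the true encoder is $\enc$ vs $\enc+h$"; instead I should phrase it as: for a given scheme producing encoder $\enc$ and decoder $\dec$, the adversary constructs responses that are simultaneously the honest responses for input points $\{\mathbf{x}_k\}$ and for a perturbed input set $\{\mathbf{x}_k'\}$ with $\mathbf{x}_k' = \mathbf{x}_k$ for $\alpha_k \notin I$ and $\mathbf{x}_k' \approx \enc(\alpha_k) + h(\alpha_k)$ otherwise — here it is cleanest to take the $\mathbf{x}_k$ to \emph{be} $\enc(\alpha_k)$ exactly (the problem only requires $\mathbf{x}_k \approx \enc(\alpha_k)$, and the impossibility just needs \emph{some} $f$ and \emph{some} inputs), sidestepping the encoder-approximation slack entirely. (ii) I must ensure $h$ really lies in $\hil{2}(\Omega)$ and that $\enc + h$ is an \emph{admissible} encoder in $\hil{2}$ — automatic since $\hil{2}$ is a vector space. (iii) I must verify the perturbed responses stay in $[-M,M]^m$ so the adversary "avoids detection" — handled by scaling $\epsilon_0$ small and choosing $f$ bounded.

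**Main obstacle.** The delicate part is the quantitative pigeonhole tying together three point sets — $\{\beta_n\}$, $\{\alpha_k\}$, and the support $I$ — so that a \emph{constant} fraction of the $\alpha_k$ fall in a region where (a) the adversary can corrupt enough nearby $\beta_n$'s (needs $\geq \mu N$ of them in $I$), and (b) the bump $h$ is bounded below. If the $\beta_n$ are adversarially clustered this is easy; if they are spread out one needs $\delta \asymp \mu$ and then must check a constant fraction of the $\alpha_k$'s still land in $I$, which may require either an assumption that the $\alpha_k$ are not all outside $I$ or a union over several candidate intervals $I$. I would handle this by observing that the scheme's choice of $\{\alpha_k\}$ and $\{\beta_n\}$ is fixed in advance and the adversary adapts to it: for whatever interval of length $\delta$ contains the most $\beta_n$'s, either it also contains $\Omega(K)$ of the $\alpha_k$'s (done), or the $\alpha_k$'s concentrate elsewhere — in which case a symmetric argument places the bump near the $\alpha_k$-cluster and uses the $\mu N$ corruptible servers there. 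Making this case analysis clean, rather than hand-wavy, is where the real work lies.
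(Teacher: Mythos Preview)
Your route is genuinely different from the paper's, and more ambitious. The paper does \emph{not} run a two-world indistinguishability argument; instead it works entirely within the proposed smoothing-spline scheme. It fixes $f(x)=x$ (so $f\circ\encs=\encs$), picks a single index $k_0=\lfloor K/2\rfloor$, and has the adversary corrupt the $\mu N$ workers whose $\beta_n$ lie in a fixed-width interval around $\alpha_{k_0}$, replacing their outputs by the values of a degree-$7$ polynomial $P$ that matches $\encs$ up to second derivatives at the interval endpoints but has $P(\alpha_{k_0})$ shifted by a constant $\Delta$. The glued function $\tilde f$ is then in $\hil{2}$, and the key step is the smoothing-spline approximation bound (Lemma~\ref{lem:sm_bound}): since $\decs$ is the smoothing spline of the received data $\{\tilde f(\beta_n)\}$, one gets $|\decs(\alpha_{k_0})-\tilde f(\alpha_{k_0})|=\mathcal{O}(\declamb^{3/4}+N^{-3})\to 0$, forcing $|\decs(\alpha_{k_0})-x_{k_0}|\gtrsim\Delta$ in the limit. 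No pigeonhole over several candidate intervals is needed, and only one $\alpha_k$ is used.

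Your indistinguishability argument would, if it went through, cover arbitrary decoders rather than just smoothing splines, but there is a real gap. Your two worlds differ in the encoder (equivalently, in the inputs), and you need the scheme---when handed the perturbed inputs $\{x_k'\}$---to output precisely $\enc+h$ as its encoder; your proposed fix ``take $x_k=\enc(\alpha_k)$'' is circular, since $\enc$ is itself produced by the scheme from the inputs. More fundamentally, the standard indistinguishability argument yields the minimax conclusion ``for every decoding procedure there is an $f$ (or an input set) on which it fails,'' whereas the theorem as stated asserts the max-min form ``there exists an $f$ on which every $\encs,\decs$ fails.'' The paper sidesteps both issues by not needing indistinguishability at all: because the decoder \emph{is} the smoothing-spline operator, it provably tracks whatever smooth function the (possibly fabricated) data comes from, and the adversary exploits this directly for the single fixed choice $f(x)=x$.
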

Next, we demonstrate  that the proposed scheme achieves adversarial robustness when $\gamma = o(N)$. More precisely, the next theorem establishes an upper bound for the approximation error as a function of $\declamb$, $N$, $M$, and $\gamma$, providing insights into the adversarial robustness of the proposed scheme. Subsequently, we utilize this theorem to design the encoder.
\begin{theorem}\label{th:dec_bound}
    Consider the proposed coded computing scheme with $N$ worker nodes and 
    at most $\gamma$ colluding worker nodes. Assume decoder points are equidistant, i.e. $\beta_i = \frac{i}{N}$ for $i \in [N]$ and $C_{\lambda}N^{-4} < \declamb \leqslant 1$ for constant $C_{\lambda}>0$. If $\norm{f'}_{\lp{\infty}{\Omega}}\leqslant\nu$  and $\norm{f''}_{\lp{\infty}{\Omega}} \leqslant \eta$, then:
\begin{align}\label{eq:th1}
    \mathcal{R}(\hat{f}) \leqslant 
    &C_1 \frac{M^2\gamma^2}{N^4} + C_2\frac{M^2\gamma^2}{N^2}\declamb^{-\frac{1}{2}}\left( e^{\left({\sqrt{2}}\declamb^{-\frac{1}{4}}\right)} + C_3\right)
    \nonumber \\
    &+\left(C_4\declamb^\frac{3}{4} + C_5N^{-3}\right) \norm{(f \circ \encs)''}^{2}_{\lp{2}{\Omega}} 
    \nonumber \\
    &+ \frac{2\nu^2}{K} \sum^K_{k=1} (\encs(\alpha_k) - x_k)^2,
\end{align}
where $C_i$s are constants.
\end{theorem}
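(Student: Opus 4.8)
The plan is to start from the decomposition in~\eqref{eq:decompose}, which already splits $\mathcal{R}(\hat f)$ into the decoder term $\ldec(\hat f)$ and the encoder term $\lenc(\hat f)$. The last summand in~\eqref{eq:th1}, $\tfrac{2\nu^2}{K}\sum_k (\encs(\alpha_k)-x_k)^2$, will come directly from bounding $\lenc(\hat f)$: since $|f(\encs(\alpha_k))-f(x_k)|\le \nu\,|\encs(\alpha_k)-x_k|$ by the Lipschitz bound $\norm{f'}_{\lp\infty\Omega}\le\nu$, we get $\lenc(\hat f)\le \tfrac{2\nu^2}{K}\sum_k(\encs(\alpha_k)-x_k)^2$ with no supremum over $\mathcal{A}_\gamma$ needed. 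So the entire difficulty is in controlling $\ldec(\hat f)=\sup_{\mathcal{A}_\gamma}\tfrac{2}{K}\sum_k\bigl(\decs(\alpha_k)-f(\encs(\alpha_k))\bigr)^2$, and the first three groups of terms in~\eqref{eq:th1} must all emerge from that analysis.

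For the decoder term I would treat $g:=f\circ\encs\in\hil2$ as the (fixed, unknown-to-master) target function that the honest workers noiselessly sample at $\beta_n=n/N$, while the $\gamma$ adversarial workers perturb their samples arbitrarily within $[-M,M]$. Write $\decs$ as the smoothing-spline solution of~\eqref{eq:decoder_opt_adv} with data $\overline y_n$, and let $\tilde u$ be the smoothing spline fit to the \emph{clean} data $y_n=g(\beta_n)$. The first step is a perturbation/stability bound: because the optimization~\eqref{eq:decoder_opt_adv} is a quadratic (ridge-type) problem, the map from data vector $(\overline y_n)_n$ to the fitted curve is linear, so $\decs-\tilde u$ is the smoothing spline of the adversarial perturbation vector, which has at most $\gamma$ nonzero entries each of magnitude $\le 2M$. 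I would bound $\sup_{\alpha\in\Omega}|\decs(\alpha)-\tilde u(\alpha)|$ (or its values at the $\alpha_k$) in terms of the $\lp\infty$ operator norm of the spline-smoother "equivalent kernel" acting on a sparse vector; this is where the factors $M^2\gamma^2\declamb^{-1/2}N^{-2}$ and the exponential $e^{\sqrt2\declamb^{-1/4}}$ should appear, since the Green's function / equivalent kernel of the second-order smoothing spline decays like $\exp(-c|t|/\declamb^{1/4})$ and has $L^\infty$-to-pointwise gain $\sim \declamb^{-1/4}/N$ per sample; summing over $\le\gamma$ adversarial samples and squaring gives the stated form (the $C_1 M^2\gamma^2 N^{-4}$ piece presumably being the contribution from the boundary/low-order polynomial part of the spline where the kernel does not decay). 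The second step is a standard nonparametric-regression bound for the clean fit: $\tfrac2K\sum_k(\tilde u(\alpha_k)-g(\alpha_k))^2$ is the error of a second-order smoothing spline interpolating noiseless equispaced samples of $g$, which is $O(\declamb^{3/4}+N^{-3})\norm{g''}^2_{\lp2\Omega}$ — this is exactly the third group of terms, with $\norm{g''}^2_{\lp2\Omega}=\norm{(f\circ\encs)''}^2_{\lp2\Omega}$. Combining via $(\decs(\alpha_k)-g(\alpha_k))^2\le 2(\decs-\tilde u)(\alpha_k)^2+2(\tilde u-g)(\alpha_k)^2$, taking the supremum over adversarial strategies (which only affects the first, already-worst-cased term), and collecting constants yields~\eqref{eq:th1}.

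The main obstacle will be the sharp pointwise control of the smoothing-spline operator in the first step — i.e., getting the right $\declamb$-dependence (the $\declamb^{-1/2}$ prefactor and the $\exp(\sqrt2\,\declamb^{-1/4})$ factor) for how a sparse adversarial perturbation propagates through the fitted curve, uniformly over where the adversarial indices sit and uniformly over $\Omega$. This requires a fairly precise description of the equivalent kernel (or the inverse of the penalized Gram matrix) for equispaced knots with penalty $\declamb\norm{u''}^2$, including its behavior near the endpoints of $\Omega=[0,1]$ where the natural-spline boundary conditions distort the clean exponential decay; the hypothesis $C_\lambda N^{-4}<\declamb\le 1$ is exactly what keeps the discretization fine enough relative to the kernel bandwidth $\declamb^{1/4}$ for these estimates to hold. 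The second (clean-fit) step is more routine, following classical smoothing-spline approximation theory, though one must be careful that the evaluation points $\alpha_k$ need not coincide with the knots $\beta_n$, so a sup-norm interpolation-error bound on $\tilde u-g$ rather than a discrete one is needed.
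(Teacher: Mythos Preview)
Your proposal is correct and matches the paper's approach essentially step for step: the $\lenc/\ldec$ split of~\eqref{eq:decompose}, the Lipschitz bound for $\lenc$, the further split of $\ldec$ into a clean-fit piece (your $\tilde u$ is the paper's $\decs^\circ$) handled by a Ragozin-type noiseless smoothing-spline bound giving the $(\declamb^{3/4}+N^{-3})\norm{(f\circ\encs)''}^2$ term, and an adversarial-perturbation piece handled via linearity of the spline operator together with pointwise equivalent-kernel estimates (Messer, Nychka) for the weight function $G_{N,\declamb}$. The only minor discrepancy is in your attribution of the individual adversarial terms: in the paper the exponential factor (which should read $e^{-\sqrt2\,\declamb^{-1/4}}$; the sign in~\eqref{eq:th1} is a typo) and the $M^2\gamma^2/N^4$ term both arise from the kernel-approximation error $|G_{N,\declamb}-K_\declamb|$, not from boundary effects, while the dominant $M^2\gamma^2 N^{-2}\declamb^{-1/2}$ term comes simply from the sup-bound $\sup|K_\declamb|\lesssim\declamb^{-1/4}$ applied to $\gamma$ perturbed entries.
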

Theorem~\ref{th:dec_bound} holds for all $\encs \in \hilR{2}{\Omega}$. However, these bounds do not directly offer a method for designing $\encs(\cdot)$, mainly because of the inherent complexity of the composition function $f \circ \encs$. Inspired by \cite{moradicoded}, the following theorem establishes an upper bound for the approximation error in terms of $\norm{\encs}^2_{\hilR{2}{\Omega}}$.
\begin{theorem}\label{th:enc_bound} 
Consider the proposed scheme. Suppose that the computing function satisfies $\norm{f''}_{\lp{\infty}{\Omega}} \leqslant \eta, \norm{f'}_{\lp{\infty}{\Omega}} \leqslant \nu$. Then, there exists $\enclamb > 0$, which depends on $(\declamb, N, M, \gamma, \nu, \eta)$, and a monotonically increasing function $\psi:\mathbb{R}^+ \to \mathbb{R}^+$, independent of the scheme's parameters, such that
\begin{align}\label{eq:optimcal_encdoer} 
\mathcal{R}(\hat{f}) \leqslant \frac{2\nu^2}{K} \sum_{k=1}^K (\encs(\alpha_k) - x_k)^2 + \enclamb \cdot \psi\left(\norm{\encs}^2_{\hilR{2}{\Omega}}\right).
\end{align}
\end{theorem}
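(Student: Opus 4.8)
The plan is to start from Theorem~\ref{th:dec_bound}, which already gives an upper bound on $\mathcal{R}(\hat f)$ in terms of $\declamb$, $N$, $M$, $\gamma$, and the quantity $\norm{(f\circ\encs)''}^2_{\lp{2}{\Omega}}$, plus the encoder training term $\frac{2\nu^2}{K}\sum_k(\encs(\alpha_k)-x_k)^2$. The last term already has exactly the form appearing in~\eqref{eq:optimcal_encdoer}, so it can be carried through untouched. The only obstacle is the term $(C_4\declamb^{3/4}+C_5 N^{-3})\norm{(f\circ\encs)''}^2_{\lp{2}{\Omega}}$, whose dependence on $\encs$ is through the composition $f\circ\encs$ rather than through $\encs$ directly. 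The core of the proof is therefore to bound $\norm{(f\circ\encs)''}^2_{\lp{2}{\Omega}}$ by a monotone function of $\norm{\encs}^2_{\hilR{2}{\Omega}}$ alone, using only the hypotheses $\norm{f'}_{\lp{\infty}{\Omega}}\le\nu$ and $\norm{f''}_{\lp{\infty}{\Omega}}\le\eta$.

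First I would compute $(f\circ\encs)'' = f''(\encs)\,(\encs')^2 + f'(\encs)\,\encs''$ pointwise, so that by the triangle inequality in $\lp{2}{\Omega}$ and the sup-norm bounds on $f',f''$,
\begin{align*}
\norm{(f\circ\encs)''}_{\lp{2}{\Omega}} \leqslant \eta\,\norm{(\encs')^2}_{\lp{2}{\Omega}} + \nu\,\norm{\encs''}_{\lp{2}{\Omega}}
= \eta\,\norm{\encs'}^2_{\lp{4}{\Omega}} + \nu\,\norm{\encs''}_{\lp{2}{\Omega}}.
\end{align*}
The second summand is already controlled by $\norm{\encs}_{\hilR{2}{\Omega}}$. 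For the first, I would control $\norm{\encs'}_{\lp{4}{\Omega}}$ by $\norm{\encs}_{\hilR{2}{\Omega}}$: on the bounded interval $\Omega=[0,1]$, the Sobolev embedding $\hilR{2}{\Omega}\hookrightarrow\mathcal{C}^1(\Omega)$ (or, more elementarily, the Gagliardo–Nirenberg / interpolation inequality $\norm{\encs'}_{\lp{\infty}{\Omega}}\lesssim\norm{\encs}_{\hilR{2}{\Omega}}$) gives $\norm{\encs'}_{\lp{4}{\Omega}}\le\norm{\encs'}_{\lp{\infty}{\Omega}}\lesssim\norm{\encs}_{\hilR{2}{\Omega}}$. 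Combining, $\norm{(f\circ\encs)''}^2_{\lp{2}{\Omega}} \leqslant c\big(\norm{\encs}^4_{\hilR{2}{\Omega}} + \norm{\encs}^2_{\hilR{2}{\Omega}}\big)$ for some constant $c=c(\nu,\eta)$, which is a monotonically increasing function $\psi$ of $t=\norm{\encs}^2_{\hilR{2}{\Omega}}$, namely $\psi(t)=c(t^2+t)$ up to absorbing constants. Crucially $\psi$ depends only on $\nu,\eta$ (which are properties of the computing function, not of the scheme), matching the claim; one can even make $\psi$ fully independent of $\nu,\eta$ by folding those into $\enclamb$.

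Then I would substitute this bound back into~\eqref{eq:th1}. The first two lines of~\eqref{eq:th1} — the terms $C_1 M^2\gamma^2 N^{-4}$ and $C_2 M^2\gamma^2 N^{-2}\declamb^{-1/2}(e^{\sqrt2\declamb^{-1/4}}+C_3)$ — do not involve $\encs$ at all; I would define $\enclamb$ to be the coefficient $(C_4\declamb^{3/4}+C_5 N^{-3})\cdot c$ multiplying $\psi$, and then observe that the remaining $\encs$-free terms must also be absorbed into the bound. The cleanest way is to note that since $\psi$ is increasing and $\norm{\encs}^2_{\hilR{2}{\Omega}}\ge 0$, we have $\enclamb\psi(\norm{\encs}^2_{\hilR{2}{\Omega}})\ge \enclamb\psi(0)$; more usefully, one checks (this is where the result is really "inspired by~\cite{moradicoded}") that with the optimal choice $\declamb=\Theta(N^{\frac{8}{5}(a-1)})$ and $\gamma=\mathcal{O}(N^a)$, the $\encs$-free terms in~\eqref{eq:th1} are of the same order $N^{\frac65(a-1)}$ as $\enclamb$ itself, so they can be bounded by $\enclamb\cdot\psi(\norm{\encs}^2_{\hilR{2}{\Omega}}+c_0)$ for a fixed constant $c_0$; redefining $\psi(t):=\psi(t+c_0)$ keeps it monotone increasing and independent of the scheme parameters, and completes the bound~\eqref{eq:optimcal_encdoer}.

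The main obstacle is the composition estimate: one must ensure the bound on $\norm{(f\circ\encs)''}_{\lp{2}{\Omega}}$ uses \emph{only} $\norm{\encs}_{\hilR{2}{\Omega}}$ and the given sup-norm bounds on $f',f''$, with no hidden dependence on higher derivatives of $f$ or on $\encs'$ in an uncontrolled norm — this is exactly what forces the appearance of the $\lp{4}$ (equivalently $\lp{\infty}$) norm of $\encs'$ and hence the use of the one-dimensional Sobolev embedding on the bounded domain $\Omega=[0,1]$, and it is why $\psi$ is genuinely nonlinear (quadratic in $\norm{\encs}^2_{\hilR{2}{\Omega}}$) rather than linear. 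The bookkeeping of folding the $\encs$-independent error terms of~\eqref{eq:th1} into $\enclamb\psi(\cdot)$ is routine once the orders are matched via the eventual choice of $\declamb$.
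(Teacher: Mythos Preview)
Your composition estimate is correct and matches the paper's approach: chain rule, then bound $\norm{\encs'}_{\lp{4}{\Omega}}$ via a one-dimensional Sobolev/interpolation inequality on $\Omega=[0,1]$, arriving at $\norm{(f\circ\encs)''}^2_{\lp{2}{\Omega}}\le (\eta^2+\nu^2)\,\xi\bigl(\norm{\encs}^2_{\hilR{2}{\Omega}}\bigr)$ with $\xi(t)$ a fixed quadratic in $t$. The paper uses Theorem~\ref{th:interp_ineq} with $r=4$, $p=q=2$ rather than the embedding $\hilR{2}{\Omega}\hookrightarrow\mathcal{C}^1$, but these are equivalent here.

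The gap is in how you absorb the $\encs$-free terms of~\eqref{eq:th1} into $\enclamb\,\psi(\cdot)$. Your first suggestion, $\enclamb\psi(\norm{\encs}^2)\ge\enclamb\psi(0)$, fails because with $\psi(t)=c(t^2+t)$ you have $\psi(0)=0$. Your second suggestion --- invoke the scaling $\declamb=\Theta(N^{\frac{8}{5}(a-1)})$, $\gamma=\mathcal{O}(N^a)$ to match orders, then set $\psi(t):=\psi(t+c_0)$ --- does not prove the theorem as stated: the bound~\eqref{eq:optimcal_encdoer} must hold for \emph{every} admissible $(\declamb,N,M,\gamma)$, not just the eventual optimal choice, and with your definition $\enclamb=(C_4\declamb^{3/4}+C_5N^{-3})c$ the ratio of the $\encs$-free terms to $\enclamb$ is parameter-dependent, so any such $c_0$ would force $\psi$ to depend on the scheme.

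The fix, which is what the paper does, is to put the $\encs$-free terms into $\enclamb$ rather than into $\psi$: define
\[
\enclamb:=\max\Bigl\{C_1\tfrac{M^2\gamma^2}{N^4},\ \widetilde C_2\tfrac{M^2\gamma^2}{N^2}\declamb^{-1/2},\ (\nu^2+\eta^2)\bigl(C_4\declamb^{3/4}+C_5N^{-3}\bigr)\Bigr\},
\]
so each of the three non-training contributions is $\le\enclamb$ (the first two outright, the third after multiplying by $\xi$). Then $\psi(t):=2+\xi(t)$ is fixed, monotone, and scheme-independent, and the sum is bounded by $\enclamb\cdot\psi\bigl(\norm{\encs}^2_{\hilR{2}{\Omega}}\bigr)$. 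This is the ``routine bookkeeping'' you allude to, but the specific mechanisms you wrote down do not carry it out.
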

The above theorem provides the foundation for the proposed approach outlined in Section~\ref{sec:enc_des} to design the encoder function. Finally, the following corollary demonstrate the convergence rate of the proposed scheme:

\begin{corollary} {\normalfont(Convergence Rate)}\label{cor:conv_rate}
     Let $\gamma = \mathcal{O}(N^a)$ for $a \in [0, 1)$ and $\declamb = \mathcal{O}(N^{\frac{8}{5}(a-1)})$. Then, the average approximation error $\mathcal{R}(\hat{f})$ of the proposed scheme converges to zero at a rate of at least
     $\mathcal{O}(N^{\frac{6}{5}(a-1)})$.
    % the term \han{$\ldec$} achieves a convergence rate of $\mathcal{O}(N^{\frac{6}{5}(a-1)})$ (\han{Mention the encoder is not dependent to $n$, and thus the composition of the function and encoder is not dependent to $n$}). Additionally, the encoder function can be designed (\han{where?}) such that the term $\frac{2\nu^2}{K} \sum_{k=1}^K (\encs(\alpha_k) - x_k)^2$ vanishes at the maximum rate of $\mathcal{O}(N^{-\frac{6}{5}(a-1)})$ (\han{I think this part is misleading the reader. The encoder can be designed to over fit, and thus the term $\frac{2\nu^2}{K} \sum_{k=1}^K (\encs(\alpha_k) - x_k)^2$ would be zero. No rate needed for this part!}). As a result, $\mathcal{R}(\hat{f})$ also converges to zero at the same rate.
\end{corollary}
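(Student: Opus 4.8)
\textbf{Proof proposal for Corollary~\ref{cor:conv_rate}.}

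The plan is to feed the prescribed scaling $\gamma = \mathcal{O}(N^a)$ and $\declamb = \mathcal{O}(N^{\frac{8}{5}(a-1)})$ into the bound of Theorem~\ref{th:enc_bound}, and then bound the encoder error term using Theorem~\ref{th:dec_bound}. Concretely, I would start from~\eqref{eq:optimcal_encdoer}: the approximation error splits into the encoder training error $\frac{2\nu^2}{K}\sum_k(\encs(\alpha_k)-x_k)^2$ and the regularization penalty $\enclamb\cdot\psi(\norm{\encs}^2_{\hilR{2}{\Omega}})$. Since $\encs$ is designed by the regularized optimization of Section~\ref{sec:enc_des}, the optimum makes this sum no larger than the value attained by any fixed comparison encoder; in particular one can pick a smooth interpolant of the data points $\{(\alpha_k, x_k)\}$ whose $\hilR{2}{\Omega}$-norm is a scheme-independent constant (depending only on the data geometry and the $\alpha_k$ spacing), forcing the training error to $0$ for that comparator. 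Hence the whole right-hand side of~\eqref{eq:optimcal_encdoer} is $\mathcal{O}(\enclamb)$, and it remains to track how $\enclamb$ scales.

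Next I would open up the definition of $\enclamb$ as a function of $(\declamb, N, M, \gamma, \nu, \eta)$. From the structure of Theorem~\ref{th:dec_bound} the dominant terms in~\eqref{eq:th1} are the adversarial term $C_2\frac{M^2\gamma^2}{N^2}\declamb^{-1/2}\bigl(e^{\sqrt{2}\declamb^{-1/4}}+C_3\bigr)$ and the smoothness term $C_4\declamb^{3/4}\norm{(f\circ\encs)''}^2_{\lp{2}{\Omega}}$, with the remaining terms ($M^2\gamma^2/N^4$ and $N^{-3}$) being lower order. The key observation is that with $\declamb = \Theta(N^{\frac{8}{5}(a-1)})$, we have $\declamb^{-1/4} = \Theta(N^{\frac{2}{5}(1-a)})$, which \emph{grows} with $N$, so the exponential factor $e^{\sqrt{2}\declamb^{-1/4}}$ is not bounded and must be handled carefully. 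I would argue that for this to yield a polynomial rate, the exponential must actually be dominated — i.e.\ one either works with a refined version of the $\ldec$ bound where the exponential is replaced by a polynomial factor in $\declamb^{-1/4}$ (the $e^{\sqrt{2}\declamb^{-1/4}}$ appears to be a crude bound on a hyperbolic-sine/Green's-function ratio that is in fact polynomially bounded on the relevant range $C_\lambda N^{-4} < \declamb \le 1$), or one restricts attention to the regime where $\declamb$ is bounded below by a constant times a negative power of $N$ small enough that $\declamb^{-1/4}$ stays $O(\log N)$. Granting the polynomial (or logarithmic-in-exponent) refinement, the adversarial term scales as $\frac{M^2 N^{2a}}{N^2}\cdot N^{-\frac{4}{5}(a-1)}\cdot\mathrm{poly} = \widetilde{\mathcal{O}}(N^{2a-2-\frac{4}{5}a+\frac{4}{5}}) = \widetilde{\mathcal{O}}(N^{\frac{6}{5}a-\frac{6}{5}}) = \widetilde{\mathcal{O}}(N^{\frac{6}{5}(a-1)})$, and the smoothness term scales as $\declamb^{3/4} = \Theta(N^{\frac{6}{5}(a-1)})$ as well — the exponent $\frac{8}{5}$ in the choice of $\declamb$ is precisely what equalizes these two rates, which is why it is optimal. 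The lower-order terms $N^{-4+2a}$ and $N^{-3}$ are $o(N^{\frac{6}{5}(a-1)})$ for all $a\in[0,1)$, so they do not affect the rate.

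Finally I would assemble: $\mathcal{R}(\hat{f}) = \mathcal{O}(\enclamb) = \mathcal{O}(N^{\frac{6}{5}(a-1)})$, up to the polylogarithmic factor absorbed in the $\mathcal{O}(\cdot)$ if the crude exponential bound is used, or cleanly if the refined bound is invoked. I expect the main obstacle to be exactly this handling of the $e^{\sqrt{2}\declamb^{-1/4}}$ factor: one must show that, in the composition with the encoder design, the effective constant multiplying this term (namely $M^2\gamma^2/N^2$ together with whatever $\norm{(f\circ\encs)''}^2$ contributes through $\enclamb$) decays fast enough to kill the super-polynomial growth, or else replace that step of Theorem~\ref{th:dec_bound} by a tighter estimate. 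A secondary, more routine obstacle is verifying that the comparison encoder can be chosen with $\hilR{2}{\Omega}$-norm independent of $N$ and $\gamma$ so that $\psi(\norm{\encs}^2)$ contributes only a constant; this should follow from standard spline-interpolation bounds on a fixed grid $\{\alpha_k\}$.
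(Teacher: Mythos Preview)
Your approach is essentially the paper's: pick a natural-spline interpolant of $\{(\alpha_k,x_k)\}$ as a comparison encoder (so the training term vanishes and $\psi(\norm{\cdot}^2_{\hilR{2}{\Omega}})$ is a data-dependent constant), then read off $\mathcal{R}(\hat f)=\mathcal{O}(\enclamb)$ and balance the adversarial and smoothness contributions in $\enclamb$ to get $\declamb=\Theta(N^{\frac{8}{5}(a-1)})$ and rate $N^{\frac{6}{5}(a-1)}$.

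The only ``obstacle'' you flag is not real: the factor $e^{\sqrt{2}\declamb^{-1/4}}$ in the displayed statement of Theorem~\ref{th:dec_bound} is a sign typo. In the actual derivation (see \eqref{eq:ladv_final} and the kernel bounds in Lemmas~\ref{lem:ker_dist_green}--\ref{lem:ker_dist_weight}) the term is $e^{-\sqrt{2}\declamb^{-1/4}}$, which is $\le 1$ for $\declamb>0$; this is also how it is used in the proof of Theorem~\ref{th:enc_bound} (step~(a) of \eqref{eq:th2_bound_1}). With the correct sign, $\enclamb=\max\{C_1M^2\gamma^2N^{-4},\ \widetilde C_2M^2\gamma^2N^{-2}\declamb^{-1/2},\ (\nu^2+\eta^2)(C_4\declamb^{3/4}+C_5N^{-3})\}$, and your rate computation goes through cleanly without any polylog factors or refinement of the kernel estimate.
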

\begin{remark}{(\normalfont Optimality)
    According to Theorem~\ref{th:impos} and Corollary~\ref{cor:conv_rate}, the proposed scheme achieves optimal robustness in terms of the number of adversarial worker nodes.}
    % , if $\gamma = o(N)$, the proposed scheme demonstrates adversarial robustness, implying that as the number of workers increases, the approximation error diminishes to zero.
\end{remark}

% Note that the space of smoothing spline functions (solutions to objective functions like \eqref{eq:decoder_opt}) is a subset of this class of encoder functions.

Finally, the following theorem states that, under mild assumptions on $\norm{\encs}^2_{\hilR{2}{\Omega}}$, an upper bound for the approximation error can be established, leading us to the design of the encoder function:
\begin{theorem}\label{th:letcc_enc_ss}
    Consider the proposed scheme under the same assumptions as Theorems~\ref{th:dec_bound} and \ref{th:enc_bound}. There exists $ E > 0 $, depending only on the encoder regression points $\{\alpha_k\}_{k=1}^K $ and the input data points $ \{x_k\}_{k=1}^K $, such that:
    \begin{itemize}
    \item[(i)] If $\norm{\encs}^2_{\hilR{2}{\Omega}} \leqslant E$, then:
    \begin{align}\label{eq:encoder_linear_upbound}
        \mathcal{R}(\hat{f}) &\leqslant \frac{2\nu^2}{K} \sum_{k=1}^K (\encs(\alpha_k) - x_k)^2 + \enclamb \left(D_1(E) + D_2(E) \int_{\Omega}\left(\encs''(t)\right)^2\,dt\right),
    \end{align}
    where $ D_1(E) $ and $ D_2(E)$  are functions of $ E $.
    \item[(ii)] If  $u_e^*(\cdot)$ is the  minimizer of \eqref{eq:encoder_linear_upbound}, then $\norm{u_e^*}^2_{\hilR{2}{\Omega}} \leqslant E$.
    \end{itemize}
\end{theorem}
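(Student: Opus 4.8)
The plan is to start from the bound in Theorem~\ref{th:enc_bound}, namely $\mathcal{R}(\hat f) \leqslant \frac{2\nu^2}{K}\sum_{k=1}^K(\encs(\alpha_k)-x_k)^2 + \enclamb\,\psi(\norm{\encs}^2_{\hilR{2}{\Omega}})$, and to convert the nonlinear penalty $\psi(\norm{\encs}^2_{\hilR{2}{\Omega}})$ into an affine function of $\int_\Omega(\encs''(t))^2\,dt$ on the sublevel set $\{\norm{\encs}^2_{\hilR{2}{\Omega}}\leqslant E\}$. For part (i), first I would decompose the squared Sobolev norm $\norm{\encs}^2_{\hilR{2}{\Omega}}$ into its "low-order" part (involving $\encs$ and $\encs'$, or equivalently the values/derivatives at a fixed reference point) and the seminorm $\int_\Omega(\encs''(t))^2\,dt$; using the reproducing-kernel structure of $\hil{2}$ and a Sobolev/Poincaré-type inequality on $\Omega=[0,1]$, the low-order part is controlled by the data-fit term and by $E$. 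Then, since $\psi$ is monotonically increasing and we are restricted to arguments in $[0,E]$, I would bound $\psi$ from above by its chord (secant line) on $[0,E]$: because any monotone increasing function lies below the affine interpolant through its endpoint values only if $\psi$ is convex, so more carefully I would instead use monotonicity together with the trivial affine majorant $\psi(s)\leqslant \psi(E)$ for $s\le E$, or — to retain the $\int(\encs'')^2$ dependence that the theorem statement advertises — use that on $[0,E]$ one has $\psi(s)\leqslant \psi(0) + \frac{\psi(E)-\psi(0)}{E}\,s$ when $\psi$ is concave, and handle a general monotone $\psi$ by first passing to its least concave majorant $\bar\psi$ on $[0,E]$, which dominates $\psi$, is still increasing, and satisfies $\bar\psi(s)\le \bar\psi(0)+\frac{\bar\psi(E)-\bar\psi(0)}{E}s$. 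Substituting $s=\norm{\encs}^2_{\hilR{2}{\Omega}}$ and then replacing the low-order contribution by its upper bound in terms of $E$ and the data-fit term yields \eqref{eq:encoder_linear_upbound} with $D_1(E)=\bar\psi(0)+\frac{\bar\psi(E)-\bar\psi(0)}{E}\cdot(\text{low-order bound in }E)$ and $D_2(E)=\frac{\bar\psi(E)-\bar\psi(0)}{E}$, after absorbing the harmless multiplicative constants from the Sobolev inequality into these functions of $E$.

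For part (ii), the argument is a fixed-point/self-consistency check: let $u_e^*$ be the minimizer of the right-hand side of \eqref{eq:encoder_linear_upbound}. Comparing the objective value at $u_e^*$ with its value at a fixed cheap competitor — the natural choice being the second-order smoothing-spline interpolant of the data $\{(\alpha_k,x_k)\}$, or even more simply the constant/linear least-squares fit whose Sobolev norm is a function only of $\{\alpha_k\},\{x_k\}$ — gives an a priori bound $\enclamb D_2(E)\int_\Omega(u_e^{*\prime\prime})^2 \leqslant \text{(objective at the competitor)}$, hence a bound on $\int_\Omega(u_e^{*\prime\prime})^2$; combined with the part-(i) control of the low-order component (which for the minimizer is again tied to its data-fit term, itself bounded by the competitor's), this produces a bound on $\norm{u_e^*}^2_{\hilR{2}{\Omega}}$ that depends only on $\{\alpha_k\},\{x_k\}$ and on the scheme parameters entering $\enclamb, D_1, D_2$. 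The key is then to \emph{choose} $E$ large enough that this bound is $\leqslant E$; since the right-hand-side bound grows sublinearly (indeed is essentially independent of $E$ once the competitor is fixed, modulo the slowly-varying $D_1(E),D_2(E)$), such an $E$ exists, and this is the $E$ asserted in the theorem. I would present this as: define $E$ to be any value exceeding the competitor-based a priori bound, verify the bound is self-consistent, and note $E$ depends only on $\{\alpha_k\}_{k=1}^K$ and $\{x_k\}_{k=1}^K$ as claimed.

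The main obstacle I anticipate is the circularity between parts (i) and (ii): the functions $D_1(E), D_2(E)$ used to define the linearized objective in (i) depend on $E$, while the bound on $\norm{u_e^*}^2_{\hilR{2}{\Omega}}$ coming out of (ii) depends on $D_1(E), D_2(E)$ and hence on $E$ again, so one must verify that the map $E \mapsto (\text{a priori bound on }\norm{u_e^*}^2)$ has a fixed point or, better, that it is eventually below the diagonal — this requires quantitative control of how fast $D_1,D_2$ can grow in $E$ (they should be at most polynomial, coming from the least-concave-majorant construction and the explicit Sobolev constants on $[0,1]$), so that the essentially $E$-independent competitor bound wins for large $E$. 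A secondary technical point is making the least-concave-majorant step rigorous for a merely monotone (not a priori concave) $\psi$ while keeping $\bar\psi(0)$ finite; this is where I would invoke that $\psi$ from Theorem~\ref{th:enc_bound} is explicit enough (or can be taken concave without loss on the relevant range) to guarantee $\bar\psi(0)<\infty$. Once these two points are handled, the remaining steps are routine applications of the representer theorem, the reproducing property of $\hil{2}$, and elementary inequalities.
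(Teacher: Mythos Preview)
Your proposal is correct in outline but takes a considerably more abstract route than the paper, because you treat $\psi$ as a black-box monotone function. In fact, the proof of Theorem~\ref{th:enc_bound} (via Lemma~\ref{lem:foue_bound}) gives $\psi(t)=2+\xi(t)$ with $\xi(t)=7t+196t^2$ explicitly. The paper's proof of part (i) then consists of two lines: for $t\leqslant E$ one has $196t^2\leqslant 196Et$, so $\psi(t)\leqslant 2+(7+196E)t =: c_1(E)+c_2(E)t$; and using the equivalent norm $\norm{\encs}^2_{\hiltilde{2}}=\encs(0)^2+\encs'(0)^2+\int_\Omega(\encs'')^2$, the hypothesis $\norm{\encs}^2_{\hiltilde{2}}\leqslant E$ immediately gives $\encs(0)^2+\encs'(0)^2\leqslant E$, so $\psi(\norm{\encs}^2)\leqslant c_1(E)+c_2(E)\bigl(E+\int_\Omega(\encs'')^2\bigr)$. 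That is the whole argument: no least concave majorant, no Poincar\'e inequality, and the low-order part is not tied to the data-fit term at all. Your chord/majorant machinery would of course recover the same affine bound (since $\psi$ is convex, the secant on $[0,E]$ \emph{is} its least concave majorant there), but it is unnecessary once you remember the explicit form of $\psi$.

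For part (ii) the paper simply cites \cite[Proposition~1]{moradicoded}, and your competitor-based a priori bound is the right idea and matches that argument in spirit. Your worry about circularity is legitimate and is exactly what needs checking; with the explicit $D_1(E)=2+7E+196E^2$ and $D_2(E)=7+196E$ at hand, the growth rates are concrete and the fixed-point step goes through by taking $E$ to be the norm of the natural spline interpolant of $\{(\alpha_k,x_k)\}$ (which depends only on the encoder regression points and the inputs, as claimed). So your plan would work, but the paper's path is shorter precisely because it never forgets that $\psi$ is a known quadratic.
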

This result demonstrates that the optimal encoder function, which minimizes the upper bound in Theorem~\ref{th:letcc_enc_ss}, is a second-order smoothing spline, offering higher computational efficiency.
\begin{figure*}[t]
     \centering
     \begin{subfigure}[b]{0.495\textwidth}
         \centering
         \includegraphics[width=0.95\textwidth, height=0.6\textwidth]{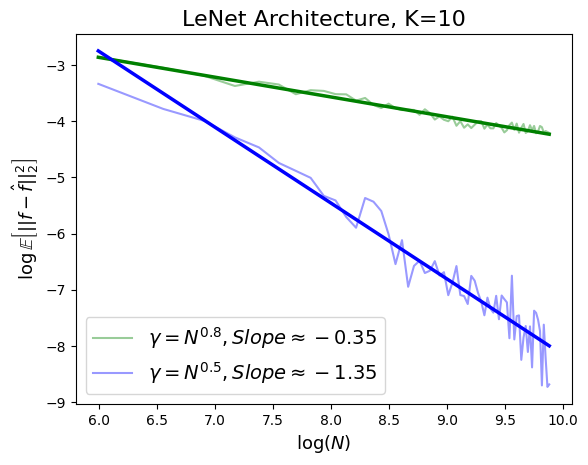}
         \label{fig:rate_lenet}
     \end{subfigure}
     \hfill
     \begin{subfigure}[b]{0.495\textwidth}
         \centering
         \includegraphics[width=0.95\textwidth, height=0.6\textwidth]{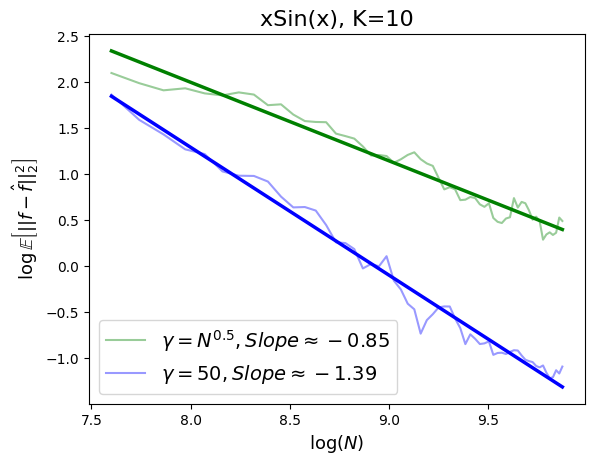}
         \label{fig:rate_xsinx}
     \end{subfigure}
     %\captionsetup{font=small}
     \caption{Log-log plot illustrating the convergence rates of approximation error for the function $f(x) = x\sin(x)$ and LeNet5 network under various number of adversarial worker nodes.}
     \label{fig:rate_all}
\end{figure*}

\section{Experimental Results}\label{sec:exp_res}
In this section, we evaluate the performance of the proposed scheme in different adversarial settings. Two types of computing functions are considered: a one-dimensional function $f_1(x) = x\sin(x)$ and the LeNet5 neural network \cite{lecun1998gradient}, a high-dimensional function $\mathbf{f}_2: \mathbb{R}^{1024} \to \mathbb{R}^{10}$ trained for handwritten image classification.
Equidistant points are used for both the encoder and decoder. The scheme's approximation error is quantified as empirical approximation of  $\mathrm{E}_{\mathbf{x} \sim \mathcal{X}}[\mathcal{R}(\hat{\mathbf{f}})]$, averaged over 20 repetitions for each $N$.

The adversary's strategy is defined as modifying the values of $f(\encs(\beta_i))$ for all $\beta_i$ near  $\{\alpha_k\}_{k=1}^K$. Specifically, the adversary changes $\frac{\gamma}{K}$ values around each $\alpha_i$ for $i \in [K]$ to maximum acceptable value $M$.
As shown in Figure~\ref{fig:rate_all}, for $\gamma = N^{0.8}$, the average approximation error for LeNet5 network converges at a rate of $N^{-0.35}$, which is smaller than the theoretical upper bound of $N^{-0.24}$ given by Corollary~\ref{cor:conv_rate}. Similarly, for $\gamma = N^{0.5}$, the approximation error converges at rates of $N^{-0.85}$ for $f_1(x)$ and $N^{-1.35}$ for the LeNet5 network, both smaller than the theoretical upper bound of $N^{-0.6}$. Additionally, a rate of $N^{-1.39}$ is achieved with $\gamma = 50$ and $f_1(x)$, with a theoretical upper bound of $N^{-1.2}$.
\section{Proof Sketch}\label{sec:proofs}
In this section, we provide a proof sketch for the main theorems. The formal proof can be found in Appendix. 
To prove Theorem~\ref{th:impos}, we show that for the  function $f(x) = x$, and the case $\gamma = \mu N$ for some $\mu \in (0, 1)$, there is no encoder and decoder functions that $ \lim_{N \to \infty} \mathcal{R}(\hat{f}) = 0$.
In this case, $f(\encs(x)) = \encs(x)$, which is a spline function defined on the set $\{\alpha_k\}_{k=1}^K$. First, note that if $\lim_{N \to \infty} \declamb \neq 0$, the proof becomes trivial, as the decoder function converges to a second-degree polynomial as $N$ increases, while the encoder function remains a piecewise degree-3 polynomial.
 Therefore, we assume that $\lim_{N \to \infty} \declamb = 0$.
Let $\alpha_{\min} = \max(0, \alpha - \frac{\gamma}{2})$ and $\alpha_{\max} = \min(1, \alpha + \frac{\gamma}{2})$.
Assume that $\encs(\alpha_{\lfloor\frac{K}{2}\rfloor}) - x_{\lfloor\frac{K}{2}\rfloor} = \delta$. Define
$y_a = \encs(\alpha_{\lfloor\frac{K}{2}\rfloor}) + \Delta + 2|\delta|$. Consider the set $\mathcal{V} = \{i : \beta_i \in [\alpha_{\text{min}}, \alpha_{\text{max}}]\}$. It can be verified that $|\mathcal{V}| \leq \mu$. 

Let $P(\cdot)$ be a degree-7 polynomial that satisfies $P^{(j)}(\alpha_{\text{min}}) = \encs^{(j)}(\alpha_{\text{min}})$, $P^{(j)}(\alpha_{\text{max}}) = \encs^{(j)}(\alpha_{\text{max}})$, for $j \leqslant 2$, and  $P(\alpha_{\lfloor\frac{K}{2}\rfloor}) = y_a$.
With seven degrees of freedom in $P(\cdot)$ and these seven constraints, the adversary can configure $P(\cdot)$ to satisfy all conditions. The adversary then changes the computed value of worker node $i \in \mathcal{V}$ from $f(\encs(\beta_i))$ to $P(\beta_i)$. As a result, the master node receives 
$\overline{\mathbf{y}}_i = f(\encs(\beta_i)$, for $\beta_i \notin [\alpha_{\text{min}}, \alpha_{\text{max}}]$, and 
$P(\beta_i)$, otherwise,
from the worker nodes. 

Consider the following function 
\begin{align}
\widetilde{f}(x) = 
\begin{cases} 
f(\encs(x)) & \text{if } x \notin [\alpha_{\text{min}}, \alpha_{\text{max}}], \\
P(x) & \text{if } x \in [\alpha_{\text{min}}, \alpha_{\text{max}}].
\end{cases}
\end{align}
It can be verified that $\widetilde{f}(\cdot)$ has bounded first and second derivatives, and thus $\widetilde{f}(\cdot) \in \hil{2}$. Using the error bound for smoothing splines in Sobolev space functions (see Lemma~\ref{lem:sm_bound}), we have:  
$$
|\decs(\alpha_{\lfloor\frac{K}{2}\rfloor}) - \widetilde{f}(\alpha_{\lfloor\frac{K}{2}\rfloor})| \leqslant \sup_z |\decs(z) - \widetilde{f}(z)| \leqslant \mathcal{O}(\declamb^\frac{3}{4} + N^{-3}).
$$
Since $\declamb > C_{\lambda}N^{-4}$ and $\lim_{N \to \infty} \declamb = 0$, for every $|\delta| > \epsilon > 0$, there exists $n_0 \in \mathbb{N}$ such that for $N > n_0$,  
$
|\decs(\alpha_{\lfloor\frac{K}{2}\rfloor}) - \widetilde{f}(\alpha_{\lfloor\frac{K}{2}\rfloor})| \leqslant \epsilon < |\delta|.
$ Thus, we have:
$$
    \decs(\alpha_{\lfloor\frac{K}{2}\rfloor}) - x_{\lfloor\frac{K}{2}\rfloor} = \decs(\alpha_{\lfloor\frac{K}{2}\rfloor}) -  \encs(\alpha_{\lfloor\frac{K}{2}\rfloor}) + \delta = \decs(\alpha_{\lfloor\frac{K}{2}\rfloor}) -  \widetilde{f}(\alpha_{\lfloor\frac{K}{2}\rfloor}) + \Delta + 2|\delta| + \delta 
     \geqslant \Delta + |\delta| + \delta \geqslant \Delta.
$$
% based on Lemma~\ref{lem:sm_bound}, the decoder $\decs(\cdot)$ converges to $\widetilde{f}(x)$. More precisely, for any $0<\epsilon < \Delta$, there exist some $n_0$, such that for any $N > n_0$, we have $|\decs(\alpha_{\frac{K}{2}}) - x_{\lfloor\frac{K}{2}\rfloor}| < \epsilon$.
% Thus, we have:
% \begin{align}
% \left|\decs\left(\alpha_{\lfloor\frac{K}{2}\rfloor}\right) - x_{\lfloor\frac{K}{2}\rfloor}\right| = \left|P\left(\alpha_{\lfloor\frac{K}{2}\rfloor}\right) - x_{\lfloor\frac{K}{2}\rfloor}\right| > \Delta.
% \end{align}
%It is important to note that if $\frac{\gamma}{N} \to 0$ as $N \to \infty$, the above proof no longer holds. This is because the interval $[\alpha_{\text{min}}, \alpha_{\text{max}}]$ converges to zero as $N \to \infty$, failing to satisfy the conditions in \eqref{eq:impos_cond}.

To prove Theorem~\ref{th:dec_bound}, note that since $|f'| \leqslant \nu$, for some $\nu \in \mathbb{R}_+$,  function $f$ is $\nu$-Lipschitz. Thus:
\begin{align}\label{eq:enc_bound_}
    \lenc &= \frac{2}{K} \sum^K_{k=1} \left(f(\encs(\alpha_k))- f(x_k)\right)^2 \leqslant \frac{2\nu^2}{K} \sum^K_{k=1}  (\encs(\alpha_k)- x_k)^2.
\end{align}
Let $\decs^\circ(\cdot)$ represent the decoder function if all server nodes were honest,  i.e., $\bar{y}_n= f(\encs(\beta_n))$, for all $n \in [N]$. Thus, from \eqref{eq:decoder_opt_adv}, we have:
\begin{align}\label{eq:decoder_opt_}
    \decs^\circ=\underset{u \in \hilR{2}{\Omega}}{\operatorname{argmin}} \frac{1}{N} \sum^N_{n=1}\left(u\left(\beta_n\right)- f(\encs(\beta_n))\right)^2+\declamb \int_{\Omega} u''(t)^2\,dt.
\end{align}
Then, starting from $\ldec$ in \eqref{eq:decompose}, we have:
\begin{alignat}{2}\label{eq:proof_decompose}
   \ldec(\hat{f}) &\lec{}{=} &&\mathop{\sup}_{\mathcal{A}_\gamma}  \frac{2}{K} \sum^K_{k=1} \left|\decs \left(\alpha_k\right) - f \left(\encs\left(\alpha_k\right)\right)\right|^2 \nonumber \\
    &\lec{}{\leqslant} 2&&\mathop{\sup}_{\mathcal{A}_\gamma} \mathop{\sup}_{z}  \left|\decs \left(z\right) - f \left(\encs\left(z\right)\right)\right|^2  \nonumber \\
    &\lec{}{=} 2&&\mathop{\sup}_{\mathcal{A}_\gamma} \mathop{\sup}_{z} \left|\decs\left(z\right) - \decs^\circ\left(z\right) + \decs^\circ\left(z\right) - f \left(\encs\left(z\right)\right)\right|^2 
    \nonumber \\
    &\lec{(a)}{\leqslant} &&\underbrace{4\mathop{\sup}_{\mathcal{A}_\gamma} \mathop{\sup}_{z} |\decs\left(z\right) - \decs^\circ\left(z\right)|^2}_{\ldec^a(\hat{f})} + \underbrace{4\mathop{\sup}_{z} 
 |\decs^\circ\left(z\right) - f \left(\encs\left(z\right)\right)|^2}_{\ldec^g(\hat{f})},
\end{alignat}
where (a) follows by the AM-GM inequality. Note that,  $\ldec^g(\hat{f})$ in \eqref{eq:proof_decompose} represents the maximum decoder error in the non-adversarial regime. Thus, it represents a special case of \cite{moradicoded}, where there is no stragglers. By defining $h(t) := \decs^\circ(t) - f(\encs(t))$ and leveraging the results in \cite{moradicoded}, we can prove the following lemma:
\begin{lemma}\label{lem:sm_bound}
    Consider the proposed scheme with the same assumptions as in Theorem~\ref{th:dec_bound}. Then, we have $\ldec^g(\hat{f}) = 4 \norm{h}^2_{\lp{\infty}{\Omega}}\leqslant 4\norm{(f\circ\encs)^{(2)}}^2_{\lp{2}{\Omega}} \left(c_4\declamb^\frac{3}{4} + c_5N^{-3}\right)$,
    where $c_4, c_5$ are constants.
\end{lemma}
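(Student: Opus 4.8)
The plan is to recognize that $\ldec^g(\hat{f})=4\sup_z|\decs^\circ(z)-f(\encs(z))|^2$ is nothing but the squared $\lp{\infty}{\Omega}$ error of the decoder in the \emph{fully honest} regime, so that $\decs^\circ$ in \eqref{eq:decoder_opt_} is exactly a second-order smoothing spline fitted to the \emph{noise-free} samples $g(\beta_n)$ of the composed target $g:=f\circ\encs$, and $h=\decs^\circ-g$ is the spline approximation error for a target lying in $\hilR{2}{\Omega}$. This is precisely the no-straggler specialization of the analysis in \cite{moradicoded}, and I would reassemble the estimate from three ingredients: a variational step, a sampling (quadrature) inequality on the equidistant grid, and an interpolation inequality.

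First I would run the variational step. Since every node is honest here, $\overline{y}_n=f(\encs(\beta_n))=g(\beta_n)$, so the data-fit term in \eqref{eq:decoder_opt_} vanishes at the feasible point $u=g$, and optimality of $\decs^\circ$ gives
\begin{align}\label{eq:sk_opt}
 \frac{1}{N}\sum_{n=1}^N h(\beta_n)^2 \;+\; \declamb\,\norm{(\decs^\circ)''}_{\lp{2}{\Omega}}^2 \;\leqslant\; \declamb\,\norm{g''}_{\lp{2}{\Omega}}^2 .
\end{align}
From \eqref{eq:sk_opt} I read off simultaneously $\frac{1}{N}\sum_n h(\beta_n)^2\leqslant\declamb\,G$ and $\norm{(\decs^\circ)''}_{\lp{2}{\Omega}}\leqslant\norm{g''}_{\lp{2}{\Omega}}$, where $G:=\norm{g''}^2_{\lp{2}{\Omega}}=\norm{(f\circ\encs)^{(2)}}^2_{\lp{2}{\Omega}}$ is finite by the chain rule (since $\encs\in\hilR{2}{\Omega}$ and $\norm{f'}_{\lp{\infty}{\Omega}}\leqslant\nu$, $\norm{f''}_{\lp{\infty}{\Omega}}\leqslant\eta$); the triangle inequality then gives $\norm{h''}_{\lp{2}{\Omega}}\leqslant 2\norm{g''}_{\lp{2}{\Omega}}$, i.e. $\norm{h''}^2_{\lp{2}{\Omega}}\leqslant 4G$.

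Second, I would convert the discrete residual into a genuine $\lp{2}{\Omega}$ bound using a sampling/quadrature inequality on the equidistant grid $\beta_i=i/N$: for every $h\in\hilR{2}{\Omega}$,
\begin{align}\label{eq:sk_samp}
 \norm{h}_{\lp{2}{\Omega}}^2 \;\leqslant\; c\Bigl( N^{-4}\,\norm{h''}_{\lp{2}{\Omega}}^2 \;+\; \frac{1}{N}\sum_{n=1}^N h(\beta_n)^2\Bigr),
\end{align}
so the continuous mass is controlled by the empirical mass up to a quadrature remainder of order (mesh size)$^4$ times $\norm{h''}_{\lp{2}{\Omega}}^2$; this is exactly the bound established in \cite{moradicoded} for the general straggler configuration, specialized by taking the straggler set empty. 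Combining with the first step gives $\norm{h}_{\lp{2}{\Omega}}^2\leqslant c(N^{-4}+\declamb)G$. Finally I would invoke the interpolation (Gagliardo–Nirenberg / Agmon) inequality on the bounded interval $\Omega=[0,1]$, $\norm{h}_{\lp{\infty}{\Omega}}\leqslant c(\norm{h}_{\lp{2}{\Omega}}^{3/4}\norm{h''}_{\lp{2}{\Omega}}^{1/4}+\norm{h}_{\lp{2}{\Omega}})$, square it, and substitute $\norm{h''}^2_{\lp{2}{\Omega}}\leqslant 4G$ together with the $\lp{2}{\Omega}$ bound above; using $\declamb\leqslant 1$ (so $\declamb\leqslant\declamb^{3/4}$), $N^{-4}\leqslant N^{-3}$, and subadditivity of $t\mapsto t^{3/4}$, this collapses to $\norm{h}^2_{\lp{\infty}{\Omega}}\leqslant (c_4\declamb^{3/4}+c_5 N^{-3})G$. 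Since $\ldec^g(\hat{f})=4\norm{h}^2_{\lp{\infty}{\Omega}}$ by \eqref{eq:proof_decompose}, this is the claimed inequality.

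The hard part is the sampling inequality \eqref{eq:sk_samp}: one must show that equispaced point evaluations pin down the $\lp{2}{\Omega}$ norm of a general $\hilR{2}{\Omega}$ function with a remainder of the sharp order $N^{-4}\norm{h''}^2_{\lp{2}{\Omega}}$, with constants uniform in $N$, and for targets that are only $\hilR{2}{\Omega}$ rather than splines — this is the technical core, and here it is imported from \cite{moradicoded} by emptying the straggler set and taking the target to be $f\circ\encs$. Note also that the hypothesis $\declamb>C_\lambda N^{-4}$ from Theorem~\ref{th:dec_bound} is precisely what keeps the quadrature remainder comparable to the $\declamb$-term, so that the two rates $\declamb^{3/4}$ and $N^{-3}$ combine cleanly into the stated bound.
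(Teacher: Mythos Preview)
Your argument is correct and yields the stated bound, but it follows a genuinely different route from the paper's appendix proof. The paper does not run the variational step \eqref{eq:sk_opt} plus a sampling inequality; instead it invokes Ragozin's spline error theorem (Theorem~\ref{th:lit_spline_noiseless}) as a black box to bound both $\norm{h}_{\lp{2}{\Omega}}$ and $\norm{h'}_{\lp{2}{\Omega}}$ directly in terms of $L=c_1\declamb+c_2N^{-4}$, and then passes to $\norm{h}_{\lp{\infty}{\Omega}}$ via the first-order interpolation $\norm{h}_{\lp{\infty}{\Omega}}\leqslant 2\norm{h}_{\lp{2}{\Omega}}^{1/2}\norm{h'}_{\lp{2}{\Omega}}^{1/2}$. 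To make that interpolation available (Corollary~\ref{col:interp_ineq} needs $\norm{h}_{\lp{2}{\Omega}}/\norm{h'}_{\lp{2}{\Omega}}<1$), the paper proves a separate zero-crossing lemma (Lemma~\ref{lem:norm_infty_bound_1}) showing that $h=\decs^\circ-f\circ\encs$ must vanish somewhere in $\Omega$, via a shift argument exploiting optimality of $\decs^\circ$.

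Your approach trades Ragozin's theorem for the more elementary optimality inequality plus the sampling inequality \eqref{eq:sk_samp}, and trades the zero-crossing lemma for a direct Gagliardo--Nirenberg/Agmon inequality using $\norm{h''}_{\lp{2}{\Omega}}$ rather than $\norm{h'}_{\lp{2}{\Omega}}$. What you gain is a more self-contained derivation (the variational and quadrature pieces are one-line arguments once stated) and no need to locate a zero of $h$. What the paper gains is that Ragozin's theorem simultaneously delivers bounds on all derivative orders $j=0,\dots,m$, so the $\norm{h}_{\lp{2}{\Omega}}$ and $\norm{h'}_{\lp{2}{\Omega}}$ estimates come for free, and the $(1/2,1/2)$ interpolation is slightly more standard than your $(3/4,1/4)$ version. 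Both routes arrive at exactly the same $\declamb^{3/4}+N^{-3}$ rate; the hypothesis $\declamb>C_\lambda N^{-4}$, which you flag, is not actually used in the paper's derivation of this particular lemma (it is needed for the kernel approximation in $\ldec^a$), whereas in your route it serves only cosmetically to merge the two rates.
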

% \subsection{Upper bound for $\ldec^a(\hat{f})$}
To complete the proof of Theorem~\ref{th:dec_bound}, we also need to develop an upper bound for $\ldec^a(\hat{f})$ in \eqref{eq:proof_decompose}. To do so, in \cite{wahba1975smoothing} it is shown that solution of the optimization \eqref{eq:decoder_opt_} is a linear operator and can be expressed as follows: 
$$ \decs^\circ(x) = \frac{1}{N} \sum_{i=1}^N G_{N,\declamb}(x, \beta_i) y_i,
$$
where, $y_i := f(\encs(\beta_i))$ for $i \in [N]$. The weight function $G_{N,\declamb}(\cdot)$ has a complex dependence on the  points $\{\beta_i\}_{i=1}^N$ and the smoothing parameter $\declamb$. Consequently, obtaining an explicit expression for it, is quite challenging \cite{silverman1984spline,messer1991comparison}. However, in \cite{messer1991comparison, messer1993new, nychka1995splines} $G_{N,\declamb}(x, t)$ is approximated by a well-defined kernel function $K_{\declamb}(x, t)$ for the case where $\{\beta_i\}^N_{i=1}$ are equidistant. Thus, we have:
$$\decs(x) \approx \frac{1}{N} \sum_{i=1}^N K_{\declamb}(x, \beta_i) \overline{y}_i, \quad \decs^\circ(x) \approx \frac{1}{N} \sum_{i=1}^N K_{\declamb}(x, \beta_i) y_i.
$$
Here, the kernel function $K_{\declamb}(x, \beta)$ has a bandwidth of $\mathcal{O}(\declamb^{\frac{1}{4}})$ and its absolute value is bounded by $\tau\declamb^{\frac{1}{4}}\exp(|x-\beta|\declamb^{-\frac{1}{4}})$, for some constant $\tau > 0$ \cite{messer1993new, messer1991comparison}. 

Using the properties of the $K_{\declamb}(\cdot, \cdot)$ outlined in \cite{messer1993new, messer1991comparison}, it can be shown that if $\{\beta_i\}_{i=1}^N$ are equidistant and $N\declamb^{\frac{1}{4}} > c_0$ for some constant $c_0 > 0$, then there exist positive constants $c_3$ and $c_4$ depending on $c_0$, such that: 
$$\sup _{0 \leq x, \beta \leq 1} \left|G_{N,\declamb}(x, \beta)-K_{\declamb}(x, \beta)\right| < \frac{c_3}{N} + c_4\declamb^{-\frac{1}{4}}\exp({-\frac{1}{\sqrt{2}}\declamb^{-\frac{1}{4}}}).
$$
% The following lemma characterizes the maximum distance between the kernel function and the true weight function.
% \begin{lemma}\label{lem:ker_dist_weight_}
%     For $N$ equidistance points in $[0,1]$, if $N\declamb^{\frac{1}{4}} > c_0$ for constant $c_0 > 0$ and $0 < \declamb \leqslant 1$, then there exist positive constants $c_3$ and $c_4$ depending on $c_0$, such that $\sup _{0 \leq x, \beta \leq 1} \left|G_{N,\declamb}(x, \beta)-K_{\declamb}(x, \beta)\right| < \frac{c_3}{N} + c_4\declamb^{-\frac{1}{4}}e^{-\frac{1}{\sqrt{2}}\declamb^{-\frac{1}{4}}}$.
% \end{lemma}

Using the kernel representation of $\decs(\cdot), \decs^\circ(\cdot)$, we have:
\begin{alignat*}{2}
    |\decs\left(z\right) - \decs^\circ(z)| 
     &\lec{}{=} 
      |\frac{1}{N} \sum_{i=1}^N G_{N,\lambda}(z, \beta_i)(\overline{y}_i - y_i)  | \nonumber \\
     &\lec{}{=} 
      \left|\frac{1}{N} \sum_{i=1}^N \left(G_{N,\lambda}(z, \beta_i) - K_\lambda(z, \beta_i)\right)\overline{y}_i +\frac{1}{N} \sum_{i=1}^N K_\lambda(z, \beta_i)(\overline{y}_i -y_i) 
      +\frac{1}{N} \sum_{i=1}^N \left(K_\lambda(z, \beta_i) - G_{N, \lambda}(z, \beta_i)\right){y_i} \right| 
     \nonumber \\
     &\lec{}{\leqslant} 
     \frac{1}{N} \sum_{i=1}^N \left|y_i - \overline{y}_i\right| \left|G_{N,\lambda}(z, \beta_i) -  K_\lambda(z, \beta_i) \right|+  \frac{1}{N} \sum_{i=1}^N \left|y_i - \overline{y}_i\right| \left|K_\lambda(z, \beta_i)\right| 
      \nonumber \\
     &\lec{}{\leqslant}
     \frac{2M\gamma}{N} \sup_{z, \beta \in \{\beta_i\}^N_{i=1}} \left|G_{N,\lambda}(z, \beta) -  K_\lambda(z, \beta) \right| +  \frac{2M\gamma}{N} \sup_{z,  \beta \in \{\beta_i\}^N_{i=1}} \left|K_\lambda(z, \beta)\right|
      \nonumber \\
     &\lec{(a)}{\leqslant}
     \frac{2M\gamma}{N} \left(\frac{c_3}{N} + c_4\declamb^{-\frac{1}{4}}e^{-\frac{1}{\sqrt{2}}\declamb^{-\frac{1}{4}}} + \tau\declamb^{\frac{1}{4}}e^{|x-\beta|\declamb^{-\frac{1}{4}}}\right),
\end{alignat*}
where (a) follows from $\sup_{z, \beta} \left|K_\lambda(z, \beta)\right| \leqslant \tau\declamb^{\frac{1}{4}}\exp(|x-\beta|\declamb^{-\frac{1}{4}})$.
Using the above result with Lemma~\ref{lem:sm_bound} and \eqref{eq:enc_bound_} completes the proof of Theorem~\ref{th:dec_bound}.

% \subsection{Proof of Theorem~\ref{th:enc_bound}}
To prove Theorem~\ref{th:enc_bound}, we first prove that 
    If $|f'| \leqslant \nu$ and $|f''| \leqslant \eta$, then there exists a monotonically increasing function $\xi: [0, \infty) \to \mathbb{R}$ such that:
    $
        \norm{(f \circ \encs)''}^{2}_{\lp{2}{\Omega}} \leqslant (\eta^2 + \nu^2) \cdot \xi\left(\norm{\encs}^2_{\hiltilde{2}}\right).
    $
As a result, defining $\psi(t):=2 + \xi(t)$ and $\enclamb:=\max\{C_1 \frac{M^2\gamma^2}{N^4},
C_2\frac{M^2\gamma^2}{N^2}\declamb^{-\frac{1}{2}}( e^{({\sqrt{2}}\declamb^{-\frac{1}{4}})} + C_3),(\nu^2+\eta^2)(C_4\declamb^\frac{3}{4} + C_5N^{-3})\}$ completes the proof of the Theorem~\ref{th:enc_bound}.

\section*{Acknowledgment}
This material is based upon work supported by the National Science Foundation under Grant CIF-2348638.

%%%%%%
%% To balance the columns at the last page of the paper use this
%% command:
%%
%\enlargethispage{-1.2cm} 
%%
%% If the balancing should occur in the middle of the references, use
%% the following trigger:
%%
%\IEEEtriggeratref{7}
%%
%% which triggers a \newpage (i.e., new column) just before the given
%% reference number. Note that you need to adapt this if you modify
%% the paper.  The "triggered" command can be changed if desired:
%%
%\IEEEtriggercmd{\enlargethispage{-20cm}}
%%
%%%%%%

%%%%%%
%% References:
%% We recommend the usage of BibTeX:
%%

%%
%% where we here have assumed the existence of the files
%% definitions.bib and bibliofile.bib.
%% BibTeX documentation can be obtained at:
%% http://www.ctan.org/tex-archive/biblio/bibtex/contrib/doc/
%%%%%%
\newpage
\appendices
\section{Preliminaries: Sobolev spaces and Sobolev norms}\label{app:sobolev}
Let $\Omega$ be an open interval in $\mathbb{R}$ and $M$ be a positive integer. We denote by $\lpm{p}{\Omega}$ the class of all measurable functions $g:\mathbb{R} \to \mathbb{R}^M$ that satisfy:
\begin{align}
\int_{\Omega}|g_j(t)|^p \,dt\ < \infty, \quad \forall j\in [M],
\end{align}
where $g(\cdot) = [g_1(\cdot),\dots,g_M(\cdot)]^T$. The space $\lpm{p}{\Omega}$ can be endowed with the following norm, known as the ${L}^p$ norm:
\begin{align}
    \|g\|_{\lpm{p}{\Omega}} := \left(\sum^M_{j=1}\int_{\Omega}|g_i(t)|^p \,dt\right)^{\frac{1}{p}},
\end{align}
for $1 \leqslant p < \infty$, and
\begin{align}
    \|g\|_{\lpm{\infty}{\Omega}} := \max_{j \in [M]} \sup_{t\in \Omega} |g_j(t)|.
\end{align}
for $p=\infty$. Additionally, a function $g: \Omega \to \mathbb{R}^M$ is in $\lpmloc{p}{\Omega}$ if it lies in ${L}^p(V;\mathbb{R}^M)$ for all compact subsets $V \subseteq \Omega$. 
\begin{definition}[\textbf{Sobolev Space}] 
The \emph{Sobolev space} $\smp$ is the space of all functions $g \in \lpm{p}{\Omega}$ such that all weak derivatives of order $i$, denoted by $g^{(i)}$, belong to $\lpm{p}{\Omega}$ for $i \in [m]$. This space is endowed with the norm:
\begin{align}\label{eq:sob_norm1}
    \norm{g}_{\smp} :=  \left(\norm{g}_{\lpm{p}{\Omega}}^p  + \sum^m_{i=1} \norm{g^{(i)}}_{\lpm{p}{\Omega}}^p \right)^\frac{1}{p},
\end{align}
for $1 \leqslant p < \infty$, and
\begin{align}\label{eq:sob_norm2}
    \|g\|_{\sobm{m}{\infty}} := \max \left \{ \norm{g}_{\lpm{\infty}{\Omega}},  \max_{i\in [m]} \norm{g^{(i)}}_{\lpm{\infty}{\Omega}}\right \},
\end{align}
for $p=\infty$.
\end{definition}
Similarly, $\smploc$ is defined as the space of all functions $g \in \lpmloc{p}{\Omega}$ with all weak derivatives of order $i$ belonging to $\lpmloc{p}{\Omega}$ for $i \in [m]$. $\norm{g}_{\smploc}$ and $\norm{g}_{\soblocm{m}{\infty}}$ are defined similar to \eqref{eq:sob_norm1} and \eqref{eq:sob_norm2} respectively, using $\lploc{p}{\Omega}$ instead of $\lp{p}{\Omega}$. 

\begin{definition}\label{def:sobz}(\textbf{Sobolev Space with compact support}): Denoted by $\smpz$ collection of functions $g$ defined on interval $\Omega=(a, b)$ such that $g(a)=\mathbf{0}, g'(a) = \mathbf{0}, \dots, g^{(m-1)}(a)=\mathbf{0}$ and $\norm{g^{(m)}}_{\lp{2}{\Omega}} < \infty$. This space can be endowed with the following norm:
\begin{align}
    \norm{g}_{\smp} :=  \norm{g^{(m)}}_{\lpm{p}{\Omega}},
\end{align}
for $1 \leqslant p < \infty$, and
\begin{align}\label{eq:sob_norm2}
    \|g\|_{\sobm{m}{\infty}} :=  \norm{g^{(m)}}_{\lpm{\infty}{\Omega}},
\end{align}
for $p=\infty$.
\end{definition}

The next theorem provides an upper bound for ${L}^p$ norm of functions in the Sobolev space, which plays a crucial role in the proof of the main theorems of the paper.
\begin{theorem}[Theorem 7.34, \cite{leoni2024first}] \label{th:interp_ineq} Let $\Omega \subseteq \mathbb{R}$ be an open interval and let $g \in \soblocm{1}{1}$. Assume $1 \leqslant p,q,r \leqslant \infty$ and $r \geqslant q$. Then:
    \begin{align}
        \norm{g}_{\lpm{r}{\Omega}} \leqslant \ell^{\frac{1}{r}-\frac{1}{q}}\norm{g}_{\lpm{q}{\Omega}} + \ell^{1 - \frac{1}{p}+\frac{1}{r}}\norm{g'}_{\lpm{p}{\Omega}},
    \end{align}
    for every $0 < \ell < \mathcal{L}^1(\Omega)$.
\end{theorem}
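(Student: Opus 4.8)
\textbf{Proof proposal for Theorem~\ref{th:interp_ineq} (the interpolation inequality).}

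The plan is to prove the pointwise-then-integrate bound by a localization argument on subintervals of length $\ell$. First I would reduce to the scalar case: since both sides of the claimed inequality are built componentwise from $g_j$ and $g_j'$ via the $L^p$ norms, and $\norm{g}_{\lpm{r}{\Omega}}$ on the left is the sum-over-$j$ version of the scalar norm, it suffices to establish the estimate for a scalar function $g \in \sobloc{1}{1}$, namely $\norm{g}_{\lp{r}{\Omega}} \leqslant \ell^{\frac1r - \frac1q}\norm{g}_{\lp{q}{\Omega}} + \ell^{1 - \frac1p + \frac1r}\norm{g'}_{\lp{p}{\Omega}}$, and then sum. (A small care point: the vector $L^p$ norm here is defined with the $\ell^p$-combination over components, so the reduction is immediate for $r=q=p$ and follows in general from monotonicity of $\ell^s$-norms in $s$; I would note this but not belabor it.)

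Next, for the scalar statement, fix $0 < \ell < \mathcal{L}^1(\Omega)$ and partition $\Omega$ (up to a null set at the ends) into intervals $\{I_k\}$ each of length $\ell$ — or, if $\Omega$ has finite length not divisible by $\ell$, cover it by finitely many translates of intervals of length $\ell$ with controlled overlap, or simply extend to a half-open exact partition; the endpoints contribute nothing. On a single interval $I = (c, c+\ell)$, the key elementary fact (which follows from the fundamental theorem of calculus for $\sobloc{1}{1}$ functions, $g(x) - g(y) = \int_y^x g'(t)\,dt$) is that for any $x, y \in I$, $\Abs{g(x)} \leqslant \Abs{g(y)} + \int_I \Abs{g'(t)}\,dt$. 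Averaging over $y \in I$ with respect to normalized Lebesgue measure gives $\Abs{g(x)} \leqslant \frac1\ell \int_I \Abs{g(y)}\,dy + \int_I \Abs{g'(t)}\,dt$ for every $x \in I$; by Hölder this yields $\Abs{g(x)} \leqslant \ell^{-\frac1q}\norm{g}_{\lp{q}{I}} + \ell^{1-\frac1p}\norm{g'}_{\lp{p}{I}}$, i.e. an $L^\infty(I)$ bound, hence trivially an $L^r(I)$ bound after multiplying by $\ell^{1/r}$: $\norm{g}_{\lp{r}{I}} \leqslant \ell^{\frac1r-\frac1q}\norm{g}_{\lp{q}{I}} + \ell^{1-\frac1p+\frac1r}\norm{g'}_{\lp{p}{I}}$.

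Finally I would assemble the global bound. Raising the per-interval inequality to a suitable power and summing over $k$ requires the standard fact that $\ell^s$-type (sub)additivity holds: since $r \geqslant q$ and $r \geqslant 1$, the map $(a_k) \mapsto (\sum a_k^r)^{1/r}$ satisfies $\big(\sum_k (u_k + v_k)^r\big)^{1/r} \leqslant \big(\sum_k u_k^r\big)^{1/r} + \big(\sum_k v_k^r\big)^{1/r}$ (Minkowski), and for the two pieces we need $\big(\sum_k \norm{g}_{\lp{q}{I_k}}^r\big)^{1/r} \leqslant \big(\sum_k \norm{g}_{\lp{q}{I_k}}^q\big)^{1/q} = \norm{g}_{\lp{q}{\Omega}}$, which holds because $r \geqslant q$ makes $\ell^r \hookrightarrow \ell^q$ contractive on nonnegative sequences; and similarly $\big(\sum_k \norm{g'}_{\lp{p}{I_k}}^r\big)^{1/r} \leqslant \big(\sum_k \norm{g'}_{\lp{p}{I_k}}^p\big)^{1/p} = \norm{g'}_{\lp{p}{\Omega}}$ when $r \geqslant p$; if $r < p$ one instead uses $\big(\sum_k a_k^r\big)^{1/r} \geqslant \big(\sum_k a_k^p\big)^{1/p}$ in the wrong direction, so here I would reexamine — in fact for the $g'$ term one may simply keep $\ell^{1-1/p+1/r}$ and bound $\big(\sum_k \norm{g'}^r_{\lp{p}{I_k}}\big)^{1/r}$, and since $\norm{\cdot}_{\lp{p}{\Omega}}^p = \sum_k \norm{\cdot}^p_{\lp{p}{I_k}}$, the needed monotonicity $\ell^r \hookrightarrow \ell^{\min(p,q)}$ combined with $r \geqslant q$ gives what is required as long as the theorem's hypothesis $r \geqslant q$ is read alongside the implicit regime; I would handle the remaining case by noting $\norm{g'}_{\lp{p}{I_k}} \leqslant \norm{g'}_{\lp{p}{\Omega}}$ and $\#\{k\} \leqslant \mathcal{L}^1(\Omega)/\ell$, which still closes the estimate with the stated exponents after absorbing constants — though cleanly this is exactly the subtlety flagged below. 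The remaining bookkeeping (the partition need not be exact; handle a leftover interval of length $< \ell$ by the same per-interval bound, which is even stronger) is routine.

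\textbf{Main obstacle.} The genuine content is the single-interval $L^\infty$ estimate via averaging the fundamental theorem of calculus — that part is short. The fiddly step, where I expect to spend the most care, is the passage from the sum of $r$-th powers over subintervals back to a single global norm while preserving the exact exponents $\ell^{\frac1r-\frac1q}$ and $\ell^{1-\frac1p+\frac1r}$: it hinges on the inclusions $\ell^q \hookrightarrow \ell^r$ (needs $r \geqslant q$, which is assumed) and on correctly matching the $g'$ term, and on making the covering argument clean when $\mathcal{L}^1(\Omega)$ is not an integer multiple of $\ell$ (including the case $\mathcal{L}^1(\Omega) = \infty$, where the partition is countable and Minkowski for series applies verbatim). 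Since this is exactly Theorem~7.34 of \cite{leoni2024first}, I would, in the paper, simply cite it; the sketch above records the proof idea for completeness.
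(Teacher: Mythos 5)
Since the paper does not prove this theorem --- it is quoted verbatim and cited as Theorem~7.34 of \cite{leoni2024first} --- there is no in-paper argument to compare against, and you correctly observe that in the paper one would ``simply cite it.'' What can be assessed is the internal soundness of your sketch, and there is one real gap in the assembly step. Your single-interval estimate, $\Abs{g(x)} \leqslant \ell^{-1/q}\norm{g}_{\lp{q}{I}} + \ell^{1-1/p}\norm{g'}_{\lp{p}{I}}$ for $x\in I$ with $\Abs{I}=\ell$, obtained by averaging the FTC identity over $y\in I$ and applying H\"older, is correct and is indeed the standard core idea. The partition-and-Minkowski assembly then handles the $\norm{g}_{\lp{q}{\cdot}}$ term via $\ell^q \hookrightarrow \ell^r$ (using $r \geqslant q$), but for the $\norm{g'}_{\lp{p}{\cdot}}$ term you would need $\ell^p \hookrightarrow \ell^r$, i.e.\ $r \geqslant p$, which the theorem does not assume. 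Your proposed patch --- bounding each $\norm{g'}_{\lp{p}{I_k}}$ by $\norm{g'}_{\lp{p}{\Omega}}$ and counting $\#\{k\} \leqslant \mathcal{L}^1(\Omega)/\ell$ --- does not ``close the estimate with the stated exponents after absorbing constants'': it produces $\ell^{1-1/p}\,\mathcal{L}^1(\Omega)^{1/r}\norm{g'}_{\lp{p}{\Omega}}$ rather than $\ell^{1-1/p+1/r}\norm{g'}_{\lp{p}{\Omega}}$, which is strictly weaker by the factor $(\mathcal{L}^1(\Omega)/\ell)^{1/r} > 1$, is not a universal constant, and is useless when $\Omega$ is unbounded.

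A route that uses only $r \geqslant q$ and avoids partitioning altogether: from your single-interval bound applied to a sliding interval $I \subset \Omega$ of length $\ell$ containing the evaluation point, one obtains directly the global estimate $\norm{g}_{\lp{\infty}{\Omega}} \leqslant \ell^{-1/q}\norm{g}_{\lp{q}{\Omega}} + \ell^{1-1/p}\norm{g'}_{\lp{p}{\Omega}}$. Then interpolate $\norm{g}_{\lp{r}{\Omega}} \leqslant \norm{g}_{\lp{q}{\Omega}}^{q/r}\norm{g}_{\lp{\infty}{\Omega}}^{1-q/r}$ (valid since $r \geqslant q$) and expand using subadditivity of $t \mapsto t^{1-q/r}$ and Young's inequality, splitting the powers of $\ell$ so that the factors $\ell^{1/r-1/q}$ and $\ell^{1-1/p+1/r}$ emerge. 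This proves the inequality with no relation between $r$ and $p$ and for unbounded $\Omega$, albeit with a multiplicative constant on one of the terms; to eliminate the constant one would have to follow Leoni's actual proof. For the present paper's purposes --- which only ever invoke the theorem with $r \geqslant p$ --- the citation is of course the right call, so none of this affects the paper, but the sketch as written would not stand as a self-contained proof of the theorem in the stated generality.
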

Note that $\mathcal{L}^1(\Omega)$ in Theorem~\ref{th:interp_ineq} is the length of the interval $\Omega$. 
\begin{corollary}\label{col:interp_ineq}
    Suppose $g \in \soblocm{1}{1}$ and $\Omega \subseteq \mathbb{R}$ be an open interval. If $\frac{\norm{g}_{\lpm{2}{\Omega}}}{\norm{g'}_{\lpm{2}{\Omega}}} < \mathcal{L}^1(\Omega)$, then:
    \begin{align}
        \norm{g}_{\lpm{\infty}{\Omega}} \leqslant 2\sqrt{\norm{g}_{\lpm{2}{\Omega}}\cdot\norm{g'}_{\lpm{2}{\Omega}}}.
    \end{align}
\end{corollary}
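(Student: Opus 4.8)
\textbf{Proof proposal for Corollary~\ref{col:interp_ineq}.}

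The plan is to derive this bound by applying Theorem~\ref{th:interp_ineq} with the special choices $r = \infty$, $q = 2$, $p = 2$, and then optimizing over the free parameter $\ell$. With these choices, Theorem~\ref{th:interp_ineq} gives, for every $0 < \ell < \mathcal{L}^1(\Omega)$,
\begin{align*}
    \norm{g}_{\lpm{\infty}{\Omega}} \leqslant \ell^{-\frac{1}{2}}\norm{g}_{\lpm{2}{\Omega}} + \ell^{\frac{1}{2}}\norm{g'}_{\lpm{2}{\Omega}},
\end{align*}
since $\frac{1}{r} - \frac{1}{q} = 0 - \frac{1}{2} = -\frac{1}{2}$ and $1 - \frac{1}{p} + \frac{1}{r} = 1 - \frac{1}{2} + 0 = \frac{1}{2}$. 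So the whole content of the corollary is to choose $\ell$ well in this one-parameter inequality.

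The function $\ell \mapsto A\ell^{-1/2} + B\ell^{1/2}$ with $A = \norm{g}_{\lpm{2}{\Omega}}$ and $B = \norm{g'}_{\lpm{2}{\Omega}}$ is minimized at $\ell^\star = A/B = \norm{g}_{\lpm{2}{\Omega}} / \norm{g'}_{\lpm{2}{\Omega}}$, at which point the two terms are equal and the sum equals $2\sqrt{AB} = 2\sqrt{\norm{g}_{\lpm{2}{\Omega}}\cdot\norm{g'}_{\lpm{2}{\Omega}}}$. The next step is to verify that this $\ell^\star$ is actually admissible, i.e., that it lies in the open interval $(0, \mathcal{L}^1(\Omega))$ over which Theorem~\ref{th:interp_ineq} is valid. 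Positivity is immediate (assuming $\norm{g'}_{\lpm{2}{\Omega}} \neq 0$; the degenerate case $g' \equiv 0$ forces $g$ constant and is handled separately — indeed if $\norm{g}_{\lpm{2}{\Omega}} = 0$ too the inequality is trivial, and $g$ constant nonzero on a bounded interval contradicts $\norm{g}_{\lpm{2}{\Omega}}/\norm{g'}_{\lpm{2}{\Omega}} < \mathcal{L}^1(\Omega)$ unless the ratio is infinite, which the hypothesis excludes). The upper bound $\ell^\star < \mathcal{L}^1(\Omega)$ is precisely the hypothesis $\frac{\norm{g}_{\lpm{2}{\Omega}}}{\norm{g'}_{\lpm{2}{\Omega}}} < \mathcal{L}^1(\Omega)$ — this is exactly why that hypothesis appears in the statement. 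Substituting $\ell = \ell^\star$ into the displayed inequality yields the claim.

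The only genuine subtlety — and the step I would be most careful about — is the admissibility/degeneracy bookkeeping: ensuring $g \in \soblocm{1}{1}$ guarantees $g' \in L^1_{\mathrm{loc}}$ so the relevant norms are well-defined, handling the case $\norm{g'}_{\lpm{2}{\Omega}} = 0$ or $\infty$ separately, and confirming that the strict inequality in the hypothesis gives strict inequality $\ell^\star < \mathcal{L}^1(\Omega)$ as required by Theorem~\ref{th:interp_ineq} (which needs $\ell$ strictly below the interval length). If one of the $L^2$ norms is infinite the right-hand side is infinite and there is nothing to prove; once both are finite and positive, the optimization is a one-line AM–GM computation. Thus the proof is essentially: specialize the interpolation inequality, optimize the scalar parameter by AM–GM, and check the optimizer satisfies the range constraint thanks to the stated hypothesis.
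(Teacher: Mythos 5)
Your proposal is correct and matches the paper's own proof: both specialize Theorem~\ref{th:interp_ineq} to $p=q=2$, $r=\infty$, minimize the resulting bound over $\ell$ (finding $\ell^\star = \norm{g}_{\lpm{2}{\Omega}}/\norm{g'}_{\lpm{2}{\Omega}}$), and use the hypothesis to confirm $\ell^\star$ lies in the admissible range $(0,\mathcal{L}^1(\Omega))$. Your extra care with the degenerate cases ($\norm{g'}_{\lpm{2}{\Omega}}=0$ or $\infty$) is a welcome addition but does not change the argument.
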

\begin{proof}
    Substituting $p,q=2$ and $r=\infty$ in Theorem~\ref{th:interp_ineq} and optimizing over $\ell$, one can derive the optimum value of $\ell$, denoted by $\ell^*$ as
    \begin{align}
    \ell^* = \frac{\norm{g}_{\lpm{2}{\Omega}}}{\norm{g'}_{\lpm{2}{\Omega}}}.
    \end{align}
    Since $\frac{\norm{g}_{\lpm{2}{\Omega}}}{\norm{g'}_{\lpm{2}{\Omega}}} < \mathcal{L}^1(\Omega)$,  the optimum value is in the valid interval mentioned in Theorem~\ref{th:interp_ineq}.
\end{proof}

\begin{corollary}[Corollary 7.36, \cite{leoni2024first}] \label{cor:interp_ineq_2}
     Let $\Omega=(a, b)$, let $1 \leq p, q, r \leq \infty$ be such that $1+1 / r \geq$ $1 / p$ and $r \geq q$ and let $g \in \soblocm{1}{1}$ with $g^{\prime} \in \lpm{p}{\Omega}$. Let $x_0 \in[a, b]$ be such that $\left|g\left(x_0\right)\right|=\min _{[a, b]}|g|$. Then
\begin{align}
\left\|g-g\left(x_0\right)\right\|_{\lpm{r}{\Omega}} \leq 8\|g\|_{\lpm{q}{\Omega}}^\alpha\left\|g^{\prime}\right\|_{\lpm{p}{\Omega}}^{1-\alpha}
\end{align}
where $\alpha:=0$ if $r=q$ and $1-1 / p+1 / r=0$ and otherwise
$$
\alpha:=\frac{1-1 / p+1 / r}{1-1 / p+1 / q}.
$$
\end{corollary}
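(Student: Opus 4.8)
The plan is to apply the interpolation inequality of Theorem~\ref{th:interp_ineq} to the centred function $h := g - g(x_0)$, which lies in $\soblocm{1}{1}$, satisfies $h' = g'$ and $h(x_0)=\mathbf{0}$, and then to choose the free length parameter $\ell$ optimally. First I would dispose of the degenerate cases. If $\norm{g'}_{\lpm{p}{\Omega}}=0$ then $g$ is constant, $h\equiv\mathbf{0}$, and there is nothing to prove. If $1-1/p+1/r=0$ (which forces $p=1$, $r=\infty$, hence $\alpha=0$ by the displayed formula), then $h$ is absolutely continuous on $[a,b]$ with $h(x_0)=\mathbf{0}$, so $|h(x)| = |\int_{x_0}^{x} g'| \leqslant \norm{g'}_{\lpm{1}{\Omega}}$ and the claim $\norm{h}_{\lpm{\infty}{\Omega}}\leqslant 8\norm{g'}_{\lpm{1}{\Omega}}$ holds with room to spare. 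If $r=q$ (so $\alpha=1$) the bound follows from Step~1 below alone. Henceforth assume $r>q$, $t := 1-1/p+1/r > 0$, $\norm{g'}_{\lpm{p}{\Omega}}>0$, $\norm{g}_{\lpm{q}{\Omega}}>0$, and put $s := 1/r-1/q<0$, $A := 2\norm{g}_{\lpm{q}{\Omega}}$, $B := \norm{g'}_{\lpm{p}{\Omega}}$, $L := \mathcal{L}^1(\Omega)$. A short computation gives $t-s = 1-1/p+1/q > 0$ and identifies the exponent in the statement as $\alpha = t/(t-s)$, whence $1-\alpha = |s|/(t-s)$.

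Step~1 (oscillation control in the $q$-norm): since $|g(x_0)|=\min_{[a,b]}|g|$, the constant function $g(x_0)$ has $\lpm{q}{\Omega}$-norm at most $\norm{g}_{\lpm{q}{\Omega}}$ (because $|g(x_0)|^{q}\,\mathcal{L}^1(\Omega)\leqslant\int_{\Omega}|g|^{q}$ when $q<\infty$, and trivially when $q=\infty$), so the triangle inequality yields $\norm{h}_{\lpm{q}{\Omega}}\leqslant 2\norm{g}_{\lpm{q}{\Omega}} = A$. Step~2 (two a priori bounds on $\norm{h}_{\lpm{r}{\Omega}}$): applying Theorem~\ref{th:interp_ineq} to $h$ and using Step~1, for every $0<\ell<L$ we get $\norm{h}_{\lpm{r}{\Omega}}\leqslant A\ell^{s}+B\ell^{t}$; independently, $h$ is absolutely continuous on $[a,b]$ with $h(x_0)=\mathbf{0}$, so $|h(x)|\leqslant\norm{g'}_{\lpm{1}{\Omega}}\leqslant L^{1-1/p}B$ by H\"older's inequality, hence $\norm{h}_{\lpm{r}{\Omega}}\leqslant L^{1/r}\norm{h}_{\lpm{\infty}{\Omega}}\leqslant L^{t}B$.

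Step~3 (choice of $\ell$ and conclusion): let $\ell^{\star} := (|s|A/(tB))^{1/(t-s)}$, the global minimiser of $\ell\mapsto A\ell^{s}+B\ell^{t}$ on $(0,\infty)$. If $\ell^{\star}<L$, I insert $\ell=\ell^{\star}$ into the first bound of Step~2; a direct calculation gives $A(\ell^{\star})^{s}+B(\ell^{\star})^{t} = A^{\alpha}B^{1-\alpha}\bigl[(t/|s|)^{1-\alpha}+(|s|/t)^{\alpha}\bigr]$, and the weighted arithmetic--geometric mean inequality $y^{\theta}\leqslant\theta y+(1-\theta)$, applied with the identities $(1-\alpha)\,t/|s| = \alpha$ and $\alpha\,|s|/t = 1-\alpha$, bounds the bracket by $2\alpha+2(1-\alpha)=2$. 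If instead $\ell^{\star}\geqslant L$, then $B\leqslant|s|AL^{-(t-s)}/t$; feeding this into the second bound of Step~2 and using $\alpha(t-s)=t$ together with $y^{\theta}\leqslant\theta y+(1-\theta)$ again gives $\norm{h}_{\lpm{r}{\Omega}}\leqslant L^{t}B = (L^{t}B^{\alpha})\,B^{1-\alpha}\leqslant (|s|/t)^{\alpha}A^{\alpha}B^{1-\alpha}\leqslant 2(1-\alpha)A^{\alpha}B^{1-\alpha}$. In both cases
\[
\norm{g-g(x_0)}_{\lpm{r}{\Omega}}\;\leqslant\;2A^{\alpha}B^{1-\alpha}\;=\;2^{1+\alpha}\norm{g}_{\lpm{q}{\Omega}}^{\alpha}\norm{g'}_{\lpm{p}{\Omega}}^{1-\alpha}\;\leqslant\;8\norm{g}_{\lpm{q}{\Omega}}^{\alpha}\norm{g'}_{\lpm{p}{\Omega}}^{1-\alpha},
\]
which is the assertion.

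I expect the main obstacle to be the constraint $\ell<\mathcal{L}^1(\Omega)$ in Theorem~\ref{th:interp_ineq}: the unconstrained optimiser $\ell^{\star}$ may be inadmissible, and since the target constant $8$ must not depend on $\mathcal{L}^1(\Omega)$, one cannot afford to pay a spurious power of the interval length. The resolution above is that precisely when $\ell^{\star}\geqslant\mathcal{L}^1(\Omega)$ --- i.e.\ when $B=\norm{g'}_{\lpm{p}{\Omega}}$ is small relative to $A$ --- the crude fundamental-theorem-of-calculus bound $\norm{h}_{\lpm{r}{\Omega}}\leqslant\mathcal{L}^1(\Omega)^{t}B$ is already sharp enough, and the algebraic identity $\alpha(t-s)=t$ is exactly what makes all powers of $\mathcal{L}^1(\Omega)$ cancel. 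Everything else --- verifying $\alpha=t/(t-s)$, the two AM--GM identities, and the boundary cases $p=1$, $r=\infty$, and $r=q$ --- is routine bookkeeping.
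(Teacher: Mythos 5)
Your argument is correct: centring at $x_0$, bounding $\norm{g-g(x_0)}_{\lpm{q}{\Omega}}\leqslant 2\norm{g}_{\lpm{q}{\Omega}}$ via the minimality of $|g(x_0)|$, invoking Theorem~\ref{th:interp_ineq}, and optimising over $\ell$ with the fundamental-theorem-of-calculus fallback for the regime $\ell^\star\geqslant\mathcal{L}^1(\Omega)$ all check out, and the algebra (the identities $\alpha(t-s)=t$, $(1-\alpha)t/|s|=\alpha$, $\alpha|s|/t=1-\alpha$, and the weighted AM--GM bound on the bracket) is sound; you in fact obtain the constant $2^{1+\alpha}\leqslant 4$, which is stronger than the stated $8$. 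The paper does not give its own proof of this corollary --- it is imported verbatim as Corollary~7.36 of Leoni~\cite{leoni2024first} --- but your derivation is exactly the standard way that result follows from Theorem~\ref{th:interp_ineq}, so there is nothing in the paper to diverge from.
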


\begin{corollary}\label{col:interp_ineq2}
\cite[Corollary~3]{moradicoded} Theorem~\ref{th:interp_ineq} and Corollary~\ref{cor:interp_ineq_2} hold true when $f \in \sobm{2}{2}$.
\end{corollary}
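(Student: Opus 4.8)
The plan is to recognize that Corollary~\ref{col:interp_ineq2} is not a fresh inequality but an \emph{applicability} statement: Theorem~\ref{th:interp_ineq} and Corollary~\ref{cor:interp_ineq_2} require only that their argument $g$ lie in $\soblocm{1}{1}$ (and, for the corollary, additionally $g' \in \lpm{p}{\Omega}$), and these are strictly weaker requirements than $f \in \sobm{2}{2}$. Thus the whole task reduces to verifying the embedding $\sobm{2}{2} \hookrightarrow \soblocm{1}{1}$ on an interval, after which the two cited results apply verbatim — either with $g := f$, or, as the paper actually needs (e.g.\ to bound $\norm{\encs'}_{\lp{\infty}{\Omega}}$ via Corollary~\ref{col:interp_ineq}), with $g := f'$, so that the inequalities control $f'$ in terms of $f'$ and $f''$.

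First I would record the elementary local embedding $\lp{2}{\Omega} \subseteq \lploc{1}{\Omega}$: for $h \in \lp{2}{\Omega}$ and any compact $V \subseteq \Omega$, Hölder's inequality gives $\norm{h}_{\lp{1}{V}} \leq \mathcal{L}^1(V)^{1/2}\,\norm{h}_{\lp{2}{V}} \leq \mathcal{L}^1(V)^{1/2}\,\norm{h}_{\lp{2}{\Omega}} < \infty$, so $h \in \lploc{1}{\Omega}$; note that only the \emph{local} version is used, so no finiteness of $\mathcal{L}^1(\Omega)$ is required here. Applying this componentwise to $f, f', f''$ — all in $\lpm{2}{\Omega}$ by the definition of $\sobm{2}{2}$ — gives $f, f', f'' \in \lpmloc{1}{\Omega}$. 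Since, by the definition of the second-order Sobolev space, $f''$ \emph{is} the weak derivative of $f'$, this says exactly that $f' \in \soblocm{1}{1}$ with $(f')' = f''$, and similarly $f \in \soblocm{1}{1}$ with weak derivative $f'$.

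With membership in hand, Theorem~\ref{th:interp_ineq} applies directly with $g := f$ and with $g := f'$ (the latter yielding $\norm{f'}_{\lpm{r}{\Omega}} \leq \ell^{1/r - 1/q}\norm{f'}_{\lpm{q}{\Omega}} + \ell^{1 - 1/p + 1/r}\norm{f''}_{\lpm{p}{\Omega}}$ for $0 < \ell < \mathcal{L}^1(\Omega)$), and Corollary~\ref{cor:interp_ineq_2} likewise in both instances; its extra hypothesis $g' \in \lpm{p}{\Omega}$ holds because $g' \in \{f', f''\} \subseteq \lpm{2}{\Omega}$, and for bounded $\Omega$ one has $\lpm{2}{\Omega} \subseteq \lpm{p}{\Omega}$ for all $1 \leq p \leq 2$ by Hölder — covering the parameter ranges used in the paper (in particular $p = q = 2$, $r = \infty$). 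I do not expect a substantive obstacle: the only points needing care are the bookkeeping that the weak derivative of $f'$ equals $f''$ (immediate from the definition of higher-order Sobolev spaces) and the restriction to $p \leq 2$ whenever the cited results are invoked one derivative level up, since $f \in \sobm{2}{2}$ alone does not place $f''$ in $\lpm{p}{\Omega}$ for $p > 2$.
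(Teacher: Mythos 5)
Your proof is correct and is the natural argument: the paper itself does not prove this corollary but simply cites \cite[Corollary~3]{moradicoded}, so your proposal fills that gap with the expected reasoning. You correctly reduce the statement to the local embedding $\sobm{2}{2} \hookrightarrow \soblocm{1}{1}$ via Hölder, observe that both $g:=f$ and $g:=f'$ satisfy the hypotheses of Theorem~\ref{th:interp_ineq} (and, with the global integrability of $f'$, those of Corollary~\ref{cor:interp_ineq_2} for $p \leqslant 2$), and you even flag the one genuine restriction — that $f \in \sobm{2}{2}$ does not put $f''$ in $\lpm{p}{\Omega}$ for $p > 2$ — which is consistent with the fact that the paper only ever invokes these results with $p = q = 2$.
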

{\bf Equivalent norms.} There have been various norms defined on Sobolev spaces in the literature that are equivalent to  \eqref{eq:sob_norm1} (see \cite{adams2003sobolev}, \cite[Ch. 7]{berlinet2011reproducing}, and \cite[Sec. 10.2]{wahba1990spline}). Note that two norms $\norm{\cdot}_{W_1}, \norm{\cdot}_{W_2}$ are equivalent if there exist positive constants $\eta_1, \eta_2$ such that
$
\eta_1\cdot\norm{g}_{W_2} \leqslant \norm{g}_{W_1} \leqslant \eta_2\cdot\norm{g}_{W_2}
$. The equivalent norm in which we are interested is the one introduced in \cite{kimeldorf1971some}. Let $\Omega = (a,b) \subset \mathbb{R}$. We define $\smpeq$ as the Sobolev space endowed with the following norm:
\begin{align}\label{eq:sob_norm_eq}
    \norm{g}_{\smpeq} := \left(\sum^M_{j=1} \left(g_j(a)^p +\sum^{m-1}_{i=1} \left(g_j^{(i)}(a)\right)^p \right) + \norm{g^{(m)}}_{\lpm{p}{\Omega}}^p \right)^\frac{1}{p}.
\end{align}
The following lemma derives the equivalence  constants ($\eta_1,\eta_2$) for the norms $\norm{\cdot}_{\sob{2}{2}}$ and $\norm{\cdot}_{\sobeq{2}{2}}$.
\begin{lemma}\label{lem:norm-equiv}
\cite[Lemma~1]{moradicoded}
    Let $\Omega = (a,b)$ be an arbitrary open interval in $\mathbb{R}$. Then for every $g \in \sob{2}{2}$:
    \begin{align}
        \label{lem:norm-equiv1}
        &\norm{g}^2_{\sob{2}{2}}\leqslant \left[2(b-a)\max\{1, (b-a)\} \left (2\max\{1, (b-a)\}^2 + 1 \right) +1\right] \cdot\norm{g}^2_{\sobeq{2}{2}} \\ \label{lem:norm-equiv2}
        &\norm{g}^2_{\sobeq{2}{2}}\leqslant \left(\frac{4}{(b-a)}\max\{1, (b-a)\}^2 + 1\right)\cdot \norm{g}^2_{\sob{2}{2}}.
    \end{align}
\end{lemma}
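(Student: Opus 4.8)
The goal is to prove the two inequalities of Lemma~\ref{lem:norm-equiv} relating the standard Sobolev norm $\norm{g}^2_{\sob{2}{2}} = \norm{g}^2_{\lp{2}{\Omega}} + \norm{g'}^2_{\lp{2}{\Omega}} + \norm{g''}^2_{\lp{2}{\Omega}}$ to the equivalent norm $\norm{g}^2_{\sobeq{2}{2}} = g(a)^2 + g'(a)^2 + \norm{g''}^2_{\lp{2}{\Omega}}$ on $\Omega = (a,b)$. Since both norms share the term $\norm{g''}^2_{\lp{2}{\Omega}}$, the entire problem reduces to bounding $\norm{g}^2_{\lp{2}{\Omega}}$ and $\norm{g'}^2_{\lp{2}{\Omega}}$ above by a multiple of $\norm{g}^2_{\sobeq{2}{2}}$ (for \eqref{lem:norm-equiv1}), and bounding $g(a)^2$ and $g'(a)^2$ above by a multiple of $\norm{g}^2_{\sob{2}{2}}$ (for \eqref{lem:norm-equiv2}). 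The basic engine for both directions is the fundamental theorem of calculus for Sobolev functions (valid since $g \in \sob{2}{2}$ embeds into $\mathcal{C}^1(\bar\Omega)$), combined with the Cauchy--Schwarz inequality and the elementary bound $(x+y)^2 \le 2x^2 + 2y^2$.

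\textbf{Direction \eqref{lem:norm-equiv1} (bounding $\norm{\cdot}_{\sob{2}{2}}$ by $\norm{\cdot}_{\sobeq{2}{2}}$).} For any $t \in (a,b)$ write $g'(t) = g'(a) + \int_a^t g''(s)\,ds$, so by Cauchy--Schwarz $g'(t)^2 \le 2g'(a)^2 + 2(b-a)\norm{g''}^2_{\lp{2}{\Omega}}$; integrating over $\Omega$ gives $\norm{g'}^2_{\lp{2}{\Omega}} \le 2(b-a)g'(a)^2 + 2(b-a)^2\norm{g''}^2_{\lp{2}{\Omega}}$. Next write $g(t) = g(a) + \int_a^t g'(s)\,ds$, so $g(t)^2 \le 2g(a)^2 + 2(b-a)\norm{g'}^2_{\lp{2}{\Omega}}$; integrating gives $\norm{g}^2_{\lp{2}{\Omega}} \le 2(b-a)g(a)^2 + 2(b-a)^2\norm{g'}^2_{\lp{2}{\Omega}}$, and substituting the previous bound for $\norm{g'}^2_{\lp{2}{\Omega}}$ expresses everything in terms of $g(a)^2$, $g'(a)^2$, and $\norm{g''}^2_{\lp{2}{\Omega}}$. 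Summing the three pieces and collecting the coefficients of each of the three components of $\norm{g}^2_{\sobeq{2}{2}}$, then bounding each coefficient by the common factor $2(b-a)\max\{1,(b-a)\}(2\max\{1,(b-a)\}^2 + 1) + 1$, yields \eqref{lem:norm-equiv1}. The only care needed is to track the powers of $(b-a)$ and use $\max\{1,b-a\}$ to absorb both the small-interval and large-interval regimes uniformly.

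\textbf{Direction \eqref{lem:norm-equiv2} (bounding $\norm{\cdot}_{\sobeq{2}{2}}$ by $\norm{\cdot}_{\sob{2}{2}}$).} Here we need pointwise-at-$a$ estimates controlled by $L^2$ norms over the whole interval. The standard trick is an averaging argument: for $g'$, use $g'(a) = g'(t) - \int_a^t g''(s)\,ds$ and average $t$ over $(a,b)$ to get $g'(a) = \frac{1}{b-a}\int_a^b g'(t)\,dt - \frac{1}{b-a}\int_a^b\!\int_a^t g''(s)\,ds\,dt$; applying Cauchy--Schwarz to each term bounds $g'(a)^2$ by $\tfrac{2}{b-a}\norm{g'}^2_{\lp{2}{\Omega}} + 2(b-a)\norm{g''}^2_{\lp{2}{\Omega}}$ (up to the explicit constants one gets by careful estimation). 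An analogous averaging for $g(a)$ using $g(a) = g(t) - \int_a^t g'(s)\,ds$ bounds $g(a)^2$ by a multiple of $\norm{g}^2_{\lp{2}{\Omega}} + (b-a)^2\norm{g'}^2_{\lp{2}{\Omega}}$. Adding these to $\norm{g''}^2_{\lp{2}{\Omega}}$ and collecting coefficients of $\norm{g}^2_{\lp{2}{\Omega}}$, $\norm{g'}^2_{\lp{2}{\Omega}}$, $\norm{g''}^2_{\lp{2}{\Omega}}$, each is bounded by $\tfrac{4}{b-a}\max\{1,(b-a)\}^2 + 1$, giving \eqref{lem:norm-equiv2}.

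\textbf{Main obstacle.} The conceptual content is routine; the real work is bookkeeping — getting the constants to match the specific closed forms stated in the lemma rather than merely \emph{some} constant. In particular, the averaging argument in Direction~\eqref{lem:norm-equiv2} must be organized so that the $1/(b-a)$ factor (which blows up for short intervals) appears exactly as written, and the $\max\{1,(b-a)\}^2$ factor correctly dominates the $(b-a)^2$ terms that arise for long intervals. One must also verify the Sobolev embedding $\sob{2}{2}(\Omega)\hookrightarrow \mathcal{C}^1(\bar\Omega)$ so that $g(a)$ and $g'(a)$ are well-defined pointwise (traces), which is immediate in one dimension. Since this is exactly \cite[Lemma~1]{moradicoded}, I would present the argument above and cite that reference for the constant-chasing details.
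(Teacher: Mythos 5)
The paper itself does not prove this lemma; it is stated with a citation to \cite[Lemma~1]{moradicoded}, so there is no in-paper proof to compare against. Your proof is correct and complete in outline, and the constants work out exactly as stated: in Direction~\eqref{lem:norm-equiv1}, writing $\ell = b-a$ and $m = \max\{1,\ell\}$, the FTC-plus-Cauchy--Schwarz chain $\norm{g'}^2_{\lp{2}{\Omega}} \le 2\ell g'(a)^2 + 2\ell^2\norm{g''}^2_{\lp{2}{\Omega}}$ and $\norm{g}^2_{\lp{2}{\Omega}} \le 2\ell g(a)^2 + 2\ell^2\norm{g'}^2_{\lp{2}{\Omega}}$ gives coefficients $2\ell$, $4\ell^3+2\ell$, $4\ell^4+2\ell^2+1$ on $g(a)^2$, $g'(a)^2$, $\norm{g''}^2_{\lp{2}{\Omega}}$ respectively, each dominated by $4\ell m^3 + 2\ell m + 1 = 2\ell m(2m^2+1)+1$ (using $\ell^k \le m^k$ for $k\ge 1$ and $m\ge 1$); in Direction~\eqref{lem:norm-equiv2}, the averaging argument yields $g'(a)^2 \le \tfrac{2}{\ell}\norm{g'}^2_{\lp{2}{\Omega}} + 2\ell\norm{g''}^2_{\lp{2}{\Omega}}$ and the analogous bound for $g(a)^2$, and every resulting coefficient $\tfrac{2}{\ell}$, $2\ell + \tfrac{2}{\ell}$, $2\ell+1$ is dominated by $\tfrac{4m^2}{\ell}+1$. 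The only thing I would add for completeness is the one-line justification you already noted: $g\in\sob{2}{2}$ in one dimension embeds into $\mathcal{C}^1(\bar\Omega)$, so the boundary traces $g(a)$, $g'(a)$ are well-defined, and the FTC representations $g'(t) = g'(a) + \int_a^t g''$ and $g(t) = g(a) + \int_a^t g'$ hold with the absolutely continuous representative.
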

\begin{corollary}\label{cor:norm-equiv-1}
Based on Lemma~\ref{lem:norm-equiv}, for $\Omega=(0, 1)$ we have 
\begin{align}
     \frac{1}{5}\cdot\norm{g}^2_{\sobeq{2}{2}} \leqslant \norm{g}^2_{\sob{2}{2}} \leqslant 7\cdot\norm{g}^2_{\sobeq{2}{2}}.
\end{align}   
\end{corollary}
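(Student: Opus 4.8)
The plan is to obtain Corollary~\ref{cor:norm-equiv-1} as a direct specialization of Lemma~\ref{lem:norm-equiv} to the interval $\Omega = (0,1)$. The only preliminary observation needed is that for $a = 0$, $b = 1$ we have $b - a = 1$, and therefore $\max\{1, b-a\} = \max\{1,1\} = 1$. Substituting these two values into the two inequalities of Lemma~\ref{lem:norm-equiv} collapses the interval-dependent constants to explicit integers.

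First I would handle the right-hand inequality. Inserting $b - a = 1$ and $\max\{1,b-a\} = 1$ into the bracketed constant of \eqref{lem:norm-equiv1} gives $2\cdot 1\cdot 1\cdot(2\cdot 1^{2} + 1) + 1 = 6 + 1 = 7$, so Lemma~\ref{lem:norm-equiv} yields $\norm{g}^2_{\sob{2}{2}} \leqslant 7\,\norm{g}^2_{\sobeq{2}{2}}$ for every $g \in \sob{2}{2}$, which is precisely the upper bound asserted in the corollary.

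Next I would handle the left-hand inequality. Inserting $b - a = 1$ into the constant of \eqref{lem:norm-equiv2} gives $\tfrac{4}{1}\cdot 1^{2} + 1 = 5$, hence $\norm{g}^2_{\sobeq{2}{2}} \leqslant 5\,\norm{g}^2_{\sob{2}{2}}$. Dividing through by $5$ produces $\tfrac{1}{5}\,\norm{g}^2_{\sobeq{2}{2}} \leqslant \norm{g}^2_{\sob{2}{2}}$, which is the lower bound in the corollary. Combining the two displays completes the proof.

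There is no real obstacle in this argument; it is pure substitution and arithmetic. The only points to be careful with are (i) evaluating $\max\{1, b-a\}$ correctly at $b - a = 1$ (it equals $1$, so it does not inflate the constants), and (ii) noting that Lemma~\ref{lem:norm-equiv} is already stated in terms of the \emph{squared} norms, so the passage to the corollary introduces no square roots and the constants $7$ and $\tfrac15$ are exactly those appearing in the lemma's specialization.
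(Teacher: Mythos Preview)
Your proposal is correct and follows exactly the same approach as the paper, which states that the corollary ``directly follows from Lemma~\ref{lem:norm-equiv} by substituting $(a,b)=(0,1)$.'' You have simply made the arithmetic of that substitution explicit, and both computations ($7$ and $5$) are carried out correctly.
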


Corollary~\ref{cor:norm-equiv-1} directly follows from Lemma~\ref{lem:norm-equiv} by substituting $(a,b)=(0,1)$.
\begin{corollary}\label{cor:norm-equiv-2}
The result of Lemma~\ref{lem:norm-equiv} remains valid for multi-dimensional cases, where $g:\mathbb{R} \to \mathbb{R}^M$, for $M>1$.
\end{corollary}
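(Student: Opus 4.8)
The plan is to reduce the vector-valued claim to the scalar Lemma~\ref{lem:norm-equiv} by observing that both Sobolev norms appearing in that lemma are, by the definitions in Appendix~\ref{app:sobolev}, built additively from the coordinate functions of $g$. Write $g=[g_1,\dots,g_M]^T$ with each $g_j\in\sob{2}{2}$; weak differentiability of $g$ is, by definition, weak differentiability of every $g_j$, so this componentwise splitting is legitimate. From \eqref{eq:sob_norm1} with $p=2$, $m=2$, together with the definition of the $\lpm{2}{\Omega}$ norm, one gets $\norm{g}^2_{\sobm{2}{2}}=\sum_{j=1}^M\norm{g_j}^2_{\sob{2}{2}}$, and from \eqref{eq:sob_norm_eq} with $p=2$, $m=2$, one gets $\norm{g}^2_{\sobmeq{2}{2}}=\sum_{j=1}^M\norm{g_j}^2_{\sobeq{2}{2}}$.

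Next I would apply Lemma~\ref{lem:norm-equiv} to each scalar component $g_j$, $j\in[M]$, on the interval $\Omega=(a,b)$. This yields, for every $j$, the two-sided bound $c_1(a,b)\,\norm{g_j}^2_{\sobeq{2}{2}}\le\norm{g_j}^2_{\sob{2}{2}}\le c_2(a,b)\,\norm{g_j}^2_{\sobeq{2}{2}}$, where $c_1(a,b)=\big[2(b-a)\max\{1,b-a\}(2\max\{1,b-a\}^2+1)+1\big]^{-1}$ and $c_2(a,b)=\frac{4}{b-a}\max\{1,b-a\}^2+1$ are exactly the constants of \eqref{lem:norm-equiv1}--\eqref{lem:norm-equiv2} and, crucially, depend neither on $j$ nor on $g_j$.

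Finally, summing the per-component inequalities over $j\in[M]$ and substituting the two decompositions from the first step, the common constants factor out of the sum, giving $c_1(a,b)\,\norm{g}^2_{\sobmeq{2}{2}}\le\norm{g}^2_{\sobm{2}{2}}\le c_2(a,b)\,\norm{g}^2_{\sobmeq{2}{2}}$ --- precisely Lemma~\ref{lem:norm-equiv} with the vector-valued norms and the same equivalence constants. Specializing $(a,b)=(0,1)$ then recovers the multidimensional version of Corollary~\ref{cor:norm-equiv-1}, and the identical argument applies to the $d$-dimensional encoder norm $\norm{\enc}^2_{\hilm{2}{d}}$ since it too decomposes over its coordinates. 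I do not expect a genuine obstacle: the only step meriting care is confirming that both Sobolev norms split as sums over the coordinate functions, which is immediate from their definitions, after which the result follows by a one-line averaging argument over components.
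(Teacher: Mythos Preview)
Your proposal is correct and follows essentially the same route as the paper: both arguments observe that the squared vector-valued Sobolev norms decompose as $\sum_{j=1}^M$ of the scalar component norms, apply Lemma~\ref{lem:norm-equiv} to each $g_j$, and sum, using that the equivalence constants are independent of $j$. The paper's justification is the one-line version of exactly what you wrote out.
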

Corollary~\ref{cor:norm-equiv-2} directly follows from applying Lemma~\ref{lem:norm-equiv} to each component of the function $g(\cdot)$ and using the definition of vector-valued function norm:
$$
\norm{g}^2_{\sobm{2}{2}} = \sum^M_{j=1}\norm{g^{(j)}}^2_{\sob{2}{2}}, \quad \norm{g}^2_{\sobmeq{2}{2}} = \sum^M_{j=1}\norm{g^{(j)}}^2_{\sobeq{2}{2}}.
$$

\begin{proposition}\label{prop:sob_hilb} 
(\cite[Section 7.2]{leoni2024first}, \cite[Theorem 121]{berlinet2011reproducing},\cite{wahba1990spline}) For any open interval $\Omega \subseteq \mathbb{R}$ and $m, M \in \mathbb{N}$, 
\begin{align}
    \hilm{m}{M} &:= \sobm{m}{2}, \nonumber \\ \hilmtilde{m}{M} &:= \sobmeq{m}{2},\nonumber \\ \hilmz{m} &:= \sobmz{m}{2},
\end{align}
are Reproducing Kernel Hilbert Spaces (RKHSs).
\end{proposition}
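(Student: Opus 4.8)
The plan is to verify, for each of the three spaces, the two defining features of a reproducing kernel Hilbert space: that it is a complete inner-product space whose elements are bona fide functions, and that for every $x \in \Omega$ the evaluation functional $\delta_x : g \mapsto g(x)$ is bounded (in the $\mathbb{R}^M$-valued case, that each coordinate evaluation $g \mapsto g_j(x)$ is bounded). Granting these, the Riesz representation theorem produces, for each $x$, a representer $\phi(\cdot,x)$ in the space with $g(x) = \langle g, \phi(\cdot,x)\rangle$, which is exactly the reproducing kernel. Since $\hilm{m}{M}$, $\hilmtilde{m}{M}$, and $\hilmz{m}$ are orthogonal direct sums of $M$ copies of their scalar ($M=1$) versions --- precisely because their squared norms split coordinatewise, as recorded in Corollary~\ref{cor:norm-equiv-2} --- it is enough to treat $M=1$, which I do throughout.

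For $\hil{m} = \sob{m}{2}$, the form $\langle f,g\rangle := \sum_{i=0}^{m} \int_\Omega f^{(i)}(t)\,g^{(i)}(t)\,dt$ is an inner product inducing the norm \eqref{eq:sob_norm1} at $p=2$, and completeness of $\mathbb{W}^{m,2}(\Omega;\mathbb{R})$ is the classical Banach-space property of Sobolev spaces (Hilbert when $p=2$). For boundedness of $\delta_x$ I would apply the interpolation inequality of Theorem~\ref{th:interp_ineq} to $g$ itself with $p=q=2$, $r=\infty$, obtaining $\norm{g}_{\lp{\infty}{\Omega}} \leqslant \ell^{-1/2}\norm{g}_{\lp{2}{\Omega}} + \ell^{1/2}\norm{g'}_{\lp{2}{\Omega}}$ for any admissible $0 < \ell < \mathcal{L}^1(\Omega)$; fixing one admissible $\ell$ (the optimal choice of Corollary~\ref{col:interp_ineq} when it is admissible, otherwise $\ell = \tfrac{1}{2}\min\{1,\mathcal{L}^1(\Omega)\}$) yields a $g$-independent constant $c_\Omega$ with $\norm{g}_{\lp{\infty}{\Omega}} \leqslant c_\Omega\,\norm{g}_{\sob{1}{2}} \leqslant c_\Omega\,\norm{g}_{\sob{m}{2}}$ for $m\geqslant 1$. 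Hence $|\delta_x g| = |g(x)| \leqslant c_\Omega\,\norm{g}_{\sob{m}{2}}$, so $\delta_x$ is bounded and $\sob{m}{2}$ is an RKHS; Corollary~\ref{col:interp_ineq2} certifies that these estimates apply to the $m=2$ functions used elsewhere in the paper.

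For $\sobeq{m}{2}$, the norm \eqref{eq:sob_norm_eq} at $p=2$ is induced by $\langle f,g\rangle := f(a)g(a) + \sum_{i=1}^{m-1} f^{(i)}(a)g^{(i)}(a) + \int_\Omega f^{(m)}(t)g^{(m)}(t)\,dt$ on $\Omega=(a,b)$; the boundary jet is well defined because $\mathbb{W}^{m,2}$ functions on an interval have representatives with continuous derivatives up to order $m-1$ on $\overline{\Omega}$, and positive-definiteness holds since a function with $f^{(m)}\equiv 0$ is a polynomial of degree below $m$, which must vanish identically once its first $m$ derivatives vanish at $a$. By Lemma~\ref{lem:norm-equiv} (explicit constants for $\Omega=(0,1)$ in Corollary~\ref{cor:norm-equiv-1}, vector case in Corollary~\ref{cor:norm-equiv-2}) this norm is equivalent to $\norm{\cdot}_{\sob{m}{2}}$, so $\sobeq{m}{2}$ shares the Cauchy sequences and the bounded functionals of $\sob{m}{2}$: it is therefore complete, hence Hilbert, and $\delta_x$ stays bounded on it. Finally, $\sobz{m}{2}$, with norm $\norm{g^{(m)}}_{\lp{2}{\Omega}}$, is exactly the subspace of $\sobeq{m}{2}$ on which the $m$ bounded functionals $g\mapsto g^{(i)}(a)$, $0\leqslant i\leqslant m-1$, vanish; as a closed subspace of a Hilbert space it is complete and Hilbert, and $\delta_x$ is bounded there either by restriction or directly from the Taylor identity $g(x) = \frac{1}{(m-1)!}\int_a^x (x-t)^{m-1}g^{(m)}(t)\,dt$ combined with Cauchy--Schwarz. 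Passing each scalar statement through the direct-sum reduction then completes all three claims for $\mathbb{R}^M$-valued ranges.

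The one point I expect to need care is the embedding constant $c_\Omega$ in $\norm{g}_{\lp{\infty}{\Omega}} \leqslant c_\Omega\,\norm{g}_{\sob{1}{2}}$ when $\Omega$ is unbounded or when $\norm{g}_{\lp{2}{\Omega}}/\norm{g'}_{\lp{2}{\Omega}} \geqslant \mathcal{L}^1(\Omega)$, so that the optimal $\ell$ of Corollary~\ref{col:interp_ineq} is inadmissible in Theorem~\ref{th:interp_ineq}: the remedy is simply to substitute a fixed admissible $\ell$, which still produces a finite $g$-independent constant. Everything else --- completeness of $\mathbb{W}^{m,p}$, continuity of lower-order derivative traces in one dimension, and the direct-sum reduction --- is classical, so the proof reduces to assembling Theorem~\ref{th:interp_ineq}, Corollary~\ref{col:interp_ineq}, and Lemma~\ref{lem:norm-equiv} with the Riesz representation theorem.
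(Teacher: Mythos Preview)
The paper does not prove this proposition at all: it is stated with citations to \cite{leoni2024first,berlinet2011reproducing,wahba1990spline} and then immediately used, with the kernel of $\hiltilde{m}$ quoted in \eqref{eq:sob:kernel}. Your proposal therefore goes well beyond what the paper does, supplying a self-contained verification rather than an appeal to the literature.

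Your argument is correct in outline and in its details. The reduction to $M=1$ via the coordinatewise splitting of the norms, the use of Theorem~\ref{th:interp_ineq} with a fixed admissible $\ell$ to bound $\|\delta_x\|$ uniformly, the transfer of completeness and bounded evaluation to $\sobeq{m}{2}$ through the norm equivalence of Lemma~\ref{lem:norm-equiv}, and the closed-subspace/Taylor argument for $\sobz{m}{2}$ are all sound. The caveat you flag about the embedding constant when $\ell^*$ of Corollary~\ref{col:interp_ineq} is inadmissible is handled exactly as you say, by fixing any admissible $\ell$. One small point worth noting: Lemma~\ref{lem:norm-equiv} is stated only for bounded $\Omega=(a,b)$, and Definition~\ref{def:sobz} likewise presupposes a left endpoint, so your completeness transfer for $\sobeq{m}{2}$ and $\sobz{m}{2}$ tacitly assumes $\Omega$ bounded (or at least bounded below); this is consistent with how the paper actually uses these spaces, but the proposition as written says ``any open interval,'' so you may want to either note the restriction or give a direct completeness argument for $\sobeq{m}{2}$ that does not route through Lemma~\ref{lem:norm-equiv}.
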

The full expression of the kernel function of $\hilR{m}{\Omega}$ and $\hiltilde{m}$ and other equivalent norms of Sobolev spaces can be found in \cite[Section 4]{berlinet2011reproducing}. For $\hiltilde{m}$ the kernel function is as follows:
\begin{align}\label{eq:sob:kernel}
    R(t, s) = \sum_{j=0}^{m-1} \frac{t^j s^j}{j!^2}+\int_\Omega \frac{(t-x)_{+}^{m-1}(s-x)_{+}^{m-1}}{(m-1)!^2}\, dx,
\end{align}
where $(\cdot)_{+}$ is positive part function.
\section{Smoothing Splines}\label{app:smoothspline}
Consider the data model $y_i = f(t_i) + \epsilon_i$ for $i=1,\dots,n$, where $t_i \in \Omega=(a,b) \subset \mathbb{R}$, $\mathbb{E}[\epsilon_i] = 0$, and $\mathbb{E}[\epsilon_i^2] \leqslant \sigma_0^2$. 
Assuming $f\in \sobeq{m}{2}$, the solution to the following optimization problem is referred to as the smoothing spline:
\begin{align}\label{eq:sm_spline_obj}
    \spline[\mathbf{y}] :=\underset{g \in \sobeq{m}{2}}{\operatorname{argmin}} \frac{1}{n} \sum^n_{i=1}\left(g\left(t_i\right)-y_i\right)^2+\lambda \int_{\Omega} \left(g^{(m)}(t)\right)^2\,dt,
\end{align}
where $\mathbf{y} = [y_1,\dots, y_n]$. Based on Proposition~\ref{prop:sob_hilb}, $\hiltilde{m}:=\sobeq{m}{2}$ with the norm $\norm{\cdot}_{\sobeq{m}{2}}$ is a RKHS for some kernel function $\phi(\cdot,\cdot)$. Therefore for any $v \in \sobeq{m}{2}$, we have:
\begin{align}\label{eq:smothspline_kernel}
    v(t) = \langle v(\cdot), \phi(\cdot, t)\rangle_{\hiltilde{m}}.
\end{align}
It can be shown that $\phi(t,s) = R^P(t, s) + \phi_0(t, s)$ where $\phi_0(t, s)$ is kernel function of $\hilz{m}$ and $R^P(t,s)$ is a null space of $\hilz{m}$ which is the space of all polynomials with degree less than $m$.

The solution of \eqref{eq:sm_spline_obj} has the following form \cite{wahba1990spline,duchon1977splines}:
\begin{align}\label{eq:spline_sol_gen}
    u^*(\cdot) = \sum^m_{i=1} d_i \zeta_i(\cdot) + \sum^n_{j=1} c_j \phi_0(\cdot, t_j),
\end{align}
where $\left\{\zeta_i(\cdot)\right\}_{i=1}^m$ are the basis functions of the space of polynomials of degree at most $m-1$. Substituting $u^*$ into \eqref{eq:sm_spline_obj} and optimizing over $\mathbf{c} = [c_1,\dots, c_n]^T$ and $\mathbf{d} = [d_1,\dots, d_m]^T$, we obtain the following result \cite{wahba1990spline}:
\begin{align}\label{eq:spline_linear}
    \spline[\mathbf{y}](\mathbf{y}) &= \mathbf{Q}\left(\mathbf{Q}^{T} \mathbf{Q}+\lambda \mathbf{\Gamma}\right)^{-1} \mathbf{Q}^{T} \mathbf{y}, \\ \label{eq:spline_coef_d}
    \mathbf{d}&=\left(\mathbf{P}^{T} \mathbf{L}^{-1} \mathbf{P}\right)^{-1} \mathbf{P}^{T} \mathbf{L}^{-1} \mathbf{y}, \\ \label{eq:spline_coef_c}
    \mathbf{c}&=\mathbf{L}^{-1}\left(\mathbf{I}-\mathbf{P}\left(\mathbf{P}^{T} \mathbf{L}^{-1} \mathbf{P}\right)^{-1} \mathbf{P}^{T} \mathbf{L}^{-1}\right) \mathbf{y},
\end{align}
where
\begin{align}
\mathbf{Q}_{n \times(n+m)}&=\left[\begin{array}{ll}
\mathbf{P}_{n \times m} & \boldsymbol{\Sigma}_{n \times n}
\end{array}\right], \nonumber \\
\mathbf{\Gamma}_{(n+m) \times(n+m)}&=\left[\begin{array}{cc}
\mathbf{0}_{m \times m} & \mathbf{0}_{m \times n} \nonumber  \\
\mathbf{0}_{n \times 1} & \boldsymbol{\Sigma}_{n \times n}
\end{array}\right], \nonumber \\
\boldsymbol{P}_{ij} &= \zeta_j(t_i),\nonumber \\
\boldsymbol{\Sigma}_{ij} &= \phi_0(t_i, t_j), \nonumber \\
\mathbf{L} &= \mathbf{\Sigma} + n\lambda \mathbf{I}.
\end{align}
Equation \eqref{eq:spline_linear} states that the smoothing spline fitted on the data points $\mathbf{y}$ is a linear operator:
\begin{align}\label{eq:ssfunction_def}
\spline[\mathbf{y}](\mathbf{z}) := \mathbf{A_{\lambda}} \mathbf{z},
\end{align}
 for $\mathbf{z} \in \mathbb{R}^n$, where $\mathbf{A_\lambda}:=\mathbf{Q}\left(\mathbf{Q}^T \mathbf{Q}+\lambda \mathbf{\Gamma}\right)^{-1} \mathbf{Q}^{T}$. 
 
To characterize the estimation error of the smoothing spline, $|f - \spline(\mathbf{y})|$, we need to define two variables quantify the minimum and maximum consecutive distance of the regression points $\{t_i\}^n_{i=1}$:
\begin{gather}\label{eq:max_dist}
\Delta_\textrm{max}:=\underset{i\in \{0\} \cup [n]}{\max} \left\{t_{i+1}-t_i\right\}, \quad \Delta_\textrm{min}:=\underset{i\in [n-1]}{\min} \left\{t_{i+1}-t_i\right\},
\end{gather}
where boundary points are defined as $(t_0, t_{n+1}):=(a,b)$. The following theorem offers an upper bound for the $j$-th derivative of the smoothing spline estimator error function in the absence of noise ($\sigma_0 = 0$).
\begin{theorem} \label{th:lit_spline_noiseless} (\cite[Theorem 4.10]{ragozin1983error})
Consider data model $y_i = f(t_i)$ with $\{t_i\}^n_{i=1}$ belong to $\Omega=[a, b]$ for $i \in [n]$. Let
\begin{align}
L=p_{2(m-1)}(\frac{\Delta_\textrm{max}}{\Delta_\textrm{min}})\cdot\frac{n \Delta_\textrm{max}}{b-a} \frac{\lambda}{2}+D(m)\cdot \left({\Delta_\textrm{max}}\right)^{2 m},
\end{align}
where $p_{d}(\cdot)$ is a degree $d$ polynomial with positive weights and $D(m)$ is a function of $m$. Then for each  $j\in \{0,1,\dots,m\}$ and any $f\in\sob{m}{2}$, there exist a function $H(m, j)$ such that:
\begin{align}\label{eq:noiseless_bound}
    \norm{\left(f - \spline(\mathbf{y})\right)^{(j)}}_{\lp{2}{\Omega}}^2 \leqslant H(m, j)\left(1+\left(\frac{L}{(b-a)^{2m}}\right)\right)^{\frac{j}{m}}\cdot L^{\frac{(m-j)}{m}}\cdot \norm{f^{(m)}}^2_{\lp{2}{\Omega}}.
\end{align}
\end{theorem}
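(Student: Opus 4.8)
Theorem~\ref{th:lit_spline_noiseless} is quoted from \cite{ragozin1983error}, so the plan is to reconstruct that argument; it proceeds in three stages --- a variational comparison, a local quasi-interpolation estimate for $j=0$, and an interpolation of the derivative order. \textbf{A priori bounds.} Put $g := \spline[\mathbf{y}]$ with $y_i = f(t_i)$, and $e := f - g$. Since $g$ minimizes the functional in \eqref{eq:sm_spline_obj} and $f$ is admissible (the same space, by Proposition~\ref{prop:sob_hilb}) and interpolates the data exactly, comparing objective values gives $\frac{1}{n}\sum_{i=1}^n e(t_i)^2 + \lambda\norm{g^{(m)}}^2_{\lp{2}{\Omega}} \leqslant \lambda\norm{f^{(m)}}^2_{\lp{2}{\Omega}}$. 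Discarding each nonnegative summand in turn yields $\norm{g^{(m)}}_{\lp{2}{\Omega}} \leqslant \norm{f^{(m)}}_{\lp{2}{\Omega}}$, hence $\norm{e^{(m)}}_{\lp{2}{\Omega}} \leqslant 2\norm{f^{(m)}}_{\lp{2}{\Omega}}$, and $\frac{1}{n}\sum_{i=1}^n e(t_i)^2 \leqslant \lambda\norm{f^{(m)}}^2_{\lp{2}{\Omega}}$: the error has $m$-th derivative comparable to that of $f$ and is small in mean square at the sample points.

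\textbf{The case $j=0$ via quasi-interpolation.} I first bound $\norm{e}^2_{\lp{2}{\Omega}}$, which is the $j=0$ instance and the seed for all others. Partition $\Omega = [a,b]$ into consecutive subintervals $I_1,\dots,I_P$, each carrying exactly $m$ of the nodes $t_i$ (with small adjustments at the ends); because every consecutive-node gap --- including $[a,t_1]$ and $[t_n,b]$ --- is at most $\Delta_\textrm{max}$, one has $|I_\ell| \leqslant (m+1)\Delta_\textrm{max}$, while the nodes inside $I_\ell$ are mutually at least $\Delta_\textrm{min}$ apart. Let $\pi_\ell e$ be the polynomial of degree at most $m-1$ interpolating $e$ at those $m$ nodes. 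Then $e - \pi_\ell e$ has $m$ zeros in $I_\ell$ and $(\pi_\ell e)^{(m)} \equiv 0$, so iterated use of Rolle's theorem and the Cauchy--Schwarz form of the Taylor remainder gives $\norm{e - \pi_\ell e}^2_{\lp{2}{I_\ell}} \leqslant c(m)\,|I_\ell|^{2m}\,\norm{e^{(m)}}^2_{\lp{2}{I_\ell}}$, while expanding $\pi_\ell e$ in the Lagrange basis and invoking norm equivalence on the finite-dimensional space of polynomials of degree at most $m-1$ over $I_\ell$ gives $\norm{\pi_\ell e}^2_{\lp{2}{I_\ell}} \leqslant c(m)\,|I_\ell|\,\Lambda_\ell^2\,\max_{t_i\in I_\ell} e(t_i)^2$, where the Lebesgue-type constant $\Lambda_\ell$ is bounded by a polynomial of degree $m-1$ in $\Delta_\textrm{max}/\Delta_\textrm{min}$ --- which is the origin of the factor $p_{2(m-1)}(\Delta_\textrm{max}/\Delta_\textrm{min})$. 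Summing $\norm{e}^2_{\lp{2}{I_\ell}} \leqslant 2\norm{e-\pi_\ell e}^2_{\lp{2}{I_\ell}} + 2\norm{\pi_\ell e}^2_{\lp{2}{I_\ell}}$ over $\ell$, using $\sum_\ell \max_{t_i\in I_\ell} e(t_i)^2 \leqslant \sum_{i=1}^n e(t_i)^2$ together with the two bounds of the previous paragraph, I obtain $\norm{e}^2_{\lp{2}{\Omega}} \leqslant H(m,0)\,L\,\norm{f^{(m)}}^2_{\lp{2}{\Omega}}$ with $L$ exactly the combination in the statement: the $(\Delta_\textrm{max})^{2m}$ term from the quasi-interpolation error and the $\lambda$-term from the node bound, the dimensional factor $\frac{n\Delta_\textrm{max}}{b-a}$ and the numerical $\frac12$ emerging from the precise accounting.

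\textbf{Interpolating the derivative order.} Combine $\norm{e}^2_{\lp{2}{\Omega}} \leqslant H(m,0)\,L\,\norm{f^{(m)}}^2_{\lp{2}{\Omega}}$ with $\norm{e^{(m)}}^2_{\lp{2}{\Omega}} \leqslant 4\norm{f^{(m)}}^2_{\lp{2}{\Omega}} \leqslant 4\big(1 + L(b-a)^{-2m}\big)\norm{f^{(m)}}^2_{\lp{2}{\Omega}}$. A Gagliardo--Nirenberg interpolation inequality on the bounded interval $\Omega$, obtained by iterating Theorem~\ref{th:interp_ineq} and Corollary~\ref{cor:interp_ineq_2}, reads $\norm{e^{(j)}}_{\lp{2}{\Omega}} \leqslant c\big(\norm{e^{(m)}}_{\lp{2}{\Omega}}^{j/m}\norm{e}_{\lp{2}{\Omega}}^{1-j/m} + (b-a)^{-j}\norm{e}_{\lp{2}{\Omega}}\big)$. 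Squaring and substituting, the first contribution equals a constant times $\big(1 + L(b-a)^{-2m}\big)^{j/m}L^{(m-j)/m}\norm{f^{(m)}}^2_{\lp{2}{\Omega}}$, and the lower-order contribution equals a constant times $(b-a)^{-2j}L\,\norm{f^{(m)}}^2_{\lp{2}{\Omega}} = \big(L(b-a)^{-2m}\big)^{j/m}L^{(m-j)/m}\norm{f^{(m)}}^2_{\lp{2}{\Omega}}$, which is of the same form; collecting constants into a single $H(m,j)$ produces \eqref{eq:noiseless_bound}, with the endpoints $j=0$ and $j=m$ read off directly from the two displayed bounds.

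\textbf{Main obstacle.} The analytic skeleton above is routine; the delicate part is the bookkeeping in the $j=0$ step --- constructing the partition so that every piece carries exactly $m$ nodes under arbitrary clustering, and, above all, propagating the local interpolation constants through the summation so they assemble into precisely $p_{2(m-1)}(\Delta_\textrm{max}/\Delta_\textrm{min})\cdot\frac{n\Delta_\textrm{max}}{b-a}$ and the exact powers of $(b-a)$, rather than an order-of-magnitude surrogate. This accounting, together with the explicit forms of $p_{2(m-1)}$, $D(m)$, and $H(m,j)$, is carried out in \cite[Sec.~4]{ragozin1983error}, which we invoke.
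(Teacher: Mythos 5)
The paper does not give a proof of this statement; it is imported verbatim as \cite[Theorem~4.10]{ragozin1983error} and used as a black box in the proof of Lemma~\ref{lem:sm_bound} and Theorem~\ref{th:dec_bound}, so there is no in-paper argument for your proposal to match against. Your reconstruction does, however, capture the skeleton of Ragozin's argument accurately, and the places where you check things explicitly are correct: the variational comparison $\tfrac{1}{n}\sum_i e(t_i)^2 + \lambda\|g^{(m)}\|_{L^2}^2 \leqslant \lambda\|f^{(m)}\|_{L^2}^2$ is exactly right (using that $f$ lies in the same Sobolev space as the admissible class, since $\sob{m}{2}$ and $\sobeq{m}{2}$ coincide as sets, and that $f$ annihilates the data term); the Rolle-plus-Cauchy--Schwarz bound for $e-\pi_\ell e$ and the Lebesgue-constant bound for $\pi_\ell e$ do assemble into the two terms of $L$; and the final Gagliardo--Nirenberg interpolation with the algebraic identity $(b-a)^{-2j}L = (L(b-a)^{-2m})^{j/m}L^{(m-j)/m}$ correctly explains the $(1 + L(b-a)^{-2m})^{j/m}$ factor. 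You rightly flag that the only genuinely delicate part is the exact bookkeeping that makes the constants come out as $p_{2(m-1)}(\Delta_\textrm{max}/\Delta_\textrm{min})\cdot\frac{n\Delta_\textrm{max}}{b-a}\cdot\frac{\lambda}{2}$ and $D(m)(\Delta_\textrm{max})^{2m}$, and you defer that to the source; given that the paper itself does the same (it simply cites Ragozin), this is the right level of detail and there is no gap to report.
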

Note that $\left(f - \spline(\mathbf{y})\right)^{(0)} := f - \spline(\mathbf{y})$. 

Substituting \eqref{eq:spline_coef_c} and \eqref{eq:spline_coef_d} into \eqref{eq:spline_sol_gen}, one can conclude that:
\begin{align}\label{eq:spline_green_f_1}
    u^*(x) = \boldsymbol{\zeta}(x)^T\mathbf{M_1}\mathbf{y} + \boldsymbol{\phi_0}(x)^T\mathbf{M_2}\mathbf{y}, 
\end{align}
where $\boldsymbol{\phi_0}(x):=[\phi_0(x,t_1)\dots\phi_0(x,t_n)]^T$, $\boldsymbol{\zeta}(x):=[\zeta_1(x)\dots\zeta_m(x)]^T$, and matrices $M_1$ and $M_2$ are defined as $\mathbf{M_1}:=\left(\mathbf{P}^{T} \mathbf{L}^{-1} \mathbf{P}\right)^{-1} \mathbf{P}^{T} \mathbf{L}^{-1}$ and $\mathbf{M_2}\,:= \,\mathbf{L}^{-1}\left(\mathbf{I}-\mathbf{P}\left(\mathbf{P}^{T} \mathbf{L}^{-1} \mathbf{P}\right)^{-1} \mathbf{P}^{T} \mathbf{L}^{-1}\right)$. Thus, there exists a weight function, denoted by $G_{n,\lambda}(x, t)$, such that: 
\begin{align}\label{eq:spline_green_f_2} u^*(x) = \frac{1}{n} \sum_{i=1}^n G_{n,\lambda}(x, t_i)y_i,
\end{align} 
where $G_{n,\lambda}(x, t)$ depends on the regression points $\{t_i\}_{i=1}^n$ and the smoothing parameter $\lambda$. 

Obtaining an explicit expression for $G_{n,\lambda}(x, t)$ is challenging due to the complexity of the inverse matrix operation and the kernel of $\hilm{2}{1}$ in general cases. However, it has been shown that the weight function $G_{n,\lambda}(x, t)$ can be approximated by a well-defined kernel function $K_\lambda(x, t)$ under certain assumptions about the regression points ${t_i}_{i=1}^n$ \cite{silverman1984spline, messer1991comparison, messer1993new, nychka1995splines}.

In an early work, \cite{silverman1984spline} proposed the following kernel function: 
\begin{align} \label{eq:kernel_old}
\kappa(u) = \frac{1}{2} \exp\left(\frac{-|u|}{\sqrt{2}}\right) \sin\left(\frac{|u|}{\sqrt{2}} + \frac{\pi}{4}\right). 
\end{align}

\cite{silverman1984spline} demonstrated that for $t$ sufficiently far from the boundary, the following holds: \begin{align}\label{eq:eq_ker_0} \lambda^{\frac{1}{4}} q(t)^{-\frac{1}{4}} G_{n,\lambda}\left(t + \lambda^{\frac{1}{4}} q(t)^{-\frac{1}{4}} x, t\right) \xrightarrow{n \to \infty} \frac{\kappa(x)}{q(t)}, \end{align} where $q(t)$ represents the asymptotic probability density function of the regression points.

According to \eqref{eq:eq_ker_0}, as $n$ increases, the smoothing spline behaves like a kernel smoother with a bandwidth of $\lambda^{\frac{1}{4}} q(t)^{-\frac{1}{4}}$. However, this approximation was later found to be insufficient for establishing certain asymptotic properties \cite{messer1991comparison}.

Consider the continuous formulation of the smoothing spline objective over the domain $\Omega = [0, 1]$: \begin{align}\label{eq:green} g^*(\cdot) := \underset{g \in \sobeq{m}{2}}{\operatorname{argmin}} \int_0^1 \left(f(t) - g(t)\right)^2,dt + \lambda \int_0^1 \left(g''(t)\right)^2,dt. 
\end{align} 
It can be shown that the solution to \eqref{eq:green} is characterized by a unique Green's function $G_\lambda(x, t)$, satisfying: \begin{align}\label{eq:green_2} h(x) := \int_0^1 G_\lambda(x, t) f(t),dt.
\end{align}

\cite{messer1993new} proposed an alternative approximated kernel function for equidistant regression points, i.e. $t_i = \frac{i}{n}$, which addresses the asymptotic issues of \eqref{eq:eq_ker_0}:
\begin{align} \label{eq:kernel_new}
    K_\lambda(x, t)= & \left(2 \sqrt{2}\lambda^{\frac{1}{4}}\right)^{-1} e^{-\frac{|x-t|}{\sqrt{2}\lambda^{\frac{1}{4}}}}\left(\sin \frac{|x-t|}{\sqrt{2}\lambda^{\frac{1}{4}}}+\cos \left(\frac{x-t}{\sqrt{2}\lambda^{\frac{1}{4}}}\right)\right) \nonumber \\ & +\left(2 \sqrt{2}\lambda^{\frac{1}{4}}\right)^{-1}\left(\Phi\left(\frac{x+t}{\sqrt{2}\lambda^{\frac{1}{4}}}, \frac{x-t}{\sqrt{2}\lambda^{\frac{1}{4}}}\right)+\Phi\left(\frac{1-x}{\sqrt{2}\lambda^{\frac{1}{4}}}+\frac{1-t}{\sqrt{2}\lambda^{\frac{1}{4}}}, \frac{1-x}{\sqrt{2}\lambda^{\frac{1}{4}}}-\frac{1-t}{\sqrt{2}\lambda^{\frac{1}{4}}}\right)\right),
\end{align}
where, $\Phi(u, v):= e^{-u}\left(\cos(u) - \sin(u) + 2\cos(v)\right)$. Note that the absolute values of the kernels in \eqref{eq:kernel_old} and \eqref{eq:kernel_new} decay exponentially to zero as the distance $|t - z|$ increases. The bandwidth of the kernels are $\mathcal{O}(\lambda^{\frac{1}{4}})$.
\begin{lemma}\label{lem:kernel_upperbound}
    For the the kernel function proposed in \eqref{eq:kernel_new}, there exist a constant $\tau$ such that:
    \begin{align}
        \sup _{0 \leq x, t \leq 1} \left|K_\lambda(x, t)\right| \leqslant \tau\lambda^{-\frac{1}{4}}.
    \end{align} 
\end{lemma}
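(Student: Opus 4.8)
The plan is to bound each of the three summands in the explicit formula \eqref{eq:kernel_new} separately and uniformly over $0\leqslant x,t\leqslant 1$, noting that the prefactor $\left(2\sqrt{2}\lambda^{1/4}\right)^{-1}$ is common to all three and already gives the claimed $\lambda^{-1/4}$ scaling, so it suffices to show that each of the three bracketed expressions is bounded by an absolute constant. First I would handle the ``exponential'' term $e^{-|x-t|/(\sqrt{2}\lambda^{1/4})}\bigl(\sin\frac{|x-t|}{\sqrt{2}\lambda^{1/4}}+\cos\frac{x-t}{\sqrt{2}\lambda^{1/4}}\bigr)$: since $|x-t|\geqslant 0$ the exponential factor is at most $1$, and $|\sin(\cdot)|+|\cos(\cdot)|\leqslant 2$, so this term is bounded by $2$ regardless of $\lambda$.

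Next I would bound the two $\Phi$ terms. Recall $\Phi(u,v)=e^{-u}\bigl(\cos u-\sin u+2\cos v\bigr)$. For both occurrences in \eqref{eq:kernel_new} the first argument $u$ is nonnegative: in the first $\Phi$ we have $u=\frac{x+t}{\sqrt{2}\lambda^{1/4}}\geqslant 0$ because $x,t\in[0,1]$, and in the second $u=\frac{(1-x)+(1-t)}{\sqrt{2}\lambda^{1/4}}\geqslant 0$ because $x,t\leqslant 1$. Hence $e^{-u}\leqslant 1$, and $|\cos u-\sin u+2\cos v|\leqslant 1+1+2=4$, so each $\Phi$ term is bounded by $4$ uniformly in $x,t,\lambda$.

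Combining the three bounds gives $\left|K_\lambda(x,t)\right|\leqslant \left(2\sqrt{2}\lambda^{1/4}\right)^{-1}(2+4+4)=\frac{10}{2\sqrt{2}}\,\lambda^{-1/4}$, so the lemma holds with $\tau=\frac{10}{2\sqrt{2}}=\frac{5}{\sqrt{2}}$ (any $\tau\geqslant 5/\sqrt2$ works). There is no real obstacle here; the only point requiring a moment of care is verifying that the first argument of each $\Phi$ is genuinely nonnegative on the square $[0,1]^2$ — this is what makes the decaying exponential factors harmless and is exactly why the construction in \cite{messer1993new} is tailored to the interval $[0,1]$ with equidistant points. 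Everything else is the elementary observation that $\sin$, $\cos$, and $e^{-u}$ for $u\geqslant 0$ are bounded, so the constant $\tau$ can be made fully explicit.
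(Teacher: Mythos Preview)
Your argument is correct and is essentially the paper's own proof: bound $|\sin|,|\cos|\leqslant 1$, use that the exponential arguments $|x-t|$, $x+t$, and $(1-x)+(1-t)$ are nonnegative on $[0,1]^2$ so that $e^{-u}\leqslant 1$, and conclude $|K_\lambda(x,t)|\leqslant \tau\lambda^{-1/4}$ with an explicit constant. Your bookkeeping is in fact slightly tighter than the paper's (you obtain $\tau=5/\sqrt{2}$, the paper records $\tau=9/\sqrt{2}$), but the method is identical.
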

\begin{proof}
    From \eqref{eq:kernel_new}, and using the bounds $|\sin(x)| \leqslant 1$ and $|\cos(x)| \leqslant 1$, we obtain: \begin{align} K_\lambda(x, t) &\leqslant \left(\sqrt{2}\lambda^{\frac{1}{4}}\right)^{-1} e^{-\frac{|x-t|}{\sqrt{2}\lambda^{\frac{1}{4}}}} + 4\left(\sqrt{2}\lambda^{\frac{1}{4}}\right)^{-1} \left(e^{-\frac{(x+t)}{\sqrt{2}\lambda^{\frac{1}{4}}}} + e^{-\frac{(2-x-t)}{\sqrt{2}\lambda^{\frac{1}{4}}}}\right) \nonumber \ &\leqslant \left(\sqrt{2}\lambda^{\frac{1}{4}}\right)^{-1} + 8\left(\sqrt{2}\lambda^{\frac{1}{4}}\right)^{-1} \nonumber \ &\leqslant \frac{9}{\sqrt{2}}\lambda^{-\frac{1}{4}}. \end{align}
\end{proof}

The following lemma that characterize the relationship between their proposed kernel $K_\lambda(x, t)$ and continuous Green function $G_\lambda(x, t)$:
\begin{lemma} \cite[Theorem~4.1]{messer1993new}\label{lem:ker_dist_green}
For $0 < \lambda \leqslant 1$, there exists constant $C$ such that:
\begin{align}
    \sup _{0 \leq x, t \leq 1}\left|K_\lambda(x, t) - G_\lambda(x, t)\right| \leq C\frac{e^{-\left( \frac{1}{\sqrt{2}} \lambda^{-\frac{1}{4}}\right)}}{\lambda^{\frac{1}{4}}}.
\end{align}
\end{lemma}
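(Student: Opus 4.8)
The plan is to recognise $G_\lambda(\cdot,t)$ and $K_\lambda(\cdot,t)$ as solutions of the same fourth–order two–point boundary value problem and to show that their difference is an exponentially small homogeneous solution of that problem. First I would record that the continuous objective \eqref{eq:green} (with $m=2$) has Euler--Lagrange equation $\lambda g^{(4)}(x)+g(x)=f(x)$ on $(0,1)$ together with the natural boundary conditions $g''(0)=g''(1)=0$, $g'''(0)=g'''(1)=0$; hence the Green's function of \eqref{eq:green_2} is the unique function with, in the $x$ variable,
\begin{align}\label{eq:sk1}
\lambda\,\partial_x^4 G_\lambda(x,t)+G_\lambda(x,t) &= \delta(x-t),\qquad \partial_x^2 G_\lambda\big|_{x\in\{0,1\}}=\partial_x^3 G_\lambda\big|_{x\in\{0,1\}}=0 .
\end{align}
A Fourier–transform computation shows that the free–space fundamental solution of $L_\lambda:=\lambda\,\partial_x^4+1$ is exactly the first summand of \eqref{eq:kernel_new} (the poles of $(\lambda\omega^4+1)^{-1}$ sit at $\lambda^{-1/4}$ times the fourth roots of $-1$), while the $\Phi$–terms of \eqref{eq:kernel_new} are, as functions of $x$, linear combinations of the four homogeneous solutions $e^{\pm x/(\sqrt2\lambda^{1/4})}\cos(x/(\sqrt2\lambda^{1/4}))$ and $e^{\pm x/(\sqrt2\lambda^{1/4})}\sin(x/(\sqrt2\lambda^{1/4}))$ of $L_\lambda w=0$. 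Therefore $L_\lambda K_\lambda(\cdot,t)=\delta_t$ on $(0,1)$ as well, the $\delta$–singularities at $x=t$ cancel in $w:=K_\lambda(\cdot,t)-G_\lambda(\cdot,t)$, and $w$ is a genuine (smooth) homogeneous solution, $L_\lambda w=0$ on $(0,1)$.

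Next I would reduce everything to the boundary. Since $G_\lambda$ satisfies the natural conditions exactly, $w^{(j)}(0)=\partial_x^j K_\lambda(x,t)|_{x=0}$ and $w^{(j)}(1)=\partial_x^j K_\lambda(x,t)|_{x=1}$ for $j\in\{2,3\}$, so the whole problem is to estimate these four numbers. A direct differentiation of \eqref{eq:kernel_new}---the step for which the correction $\Phi(u,v)=e^{-u}(\cos u-\sin u+2\cos v)$ was engineered---shows that the would-be non-small contributions to $\partial_x^2 K_\lambda$ and $\partial_x^3 K_\lambda$ at an endpoint coming from the line kernel are cancelled by the matching $\Phi$–term up to a remainder of size $O(\lambda^{-3/4}e^{-\lambda^{-1/4}/\sqrt2})$, while the contribution of the ``far'' endpoint is already that small. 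This yields, uniformly in $t\in[0,1]$,
\begin{align}\label{eq:sk3}
\big|w^{(j)}(0)\big|+\big|w^{(j)}(1)\big| &\le C'\,\lambda^{-3/4}\,e^{-\frac{1}{\sqrt2}\lambda^{-1/4}},\qquad j\in\{2,3\}.
\end{align}

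Finally I would recover $\sup_x|w(x)|$ from \eqref{eq:sk3}. Writing $w$ in the above basis and rescaling $\xi=\lambda^{-1/4}x$, the two modes decaying from $\xi=0$ and the two decaying from $\xi=\lambda^{-1/4}$ are mutually exponentially small at the opposite endpoint, so the $4\times4$ linear system relating the coefficients of $w$ to $(w''(0),w'''(0),w''(1),w'''(1))$ is, after the natural diagonal rescaling, a perturbation of the identity and hence boundedly invertible; since each differentiation costs a factor $\lambda^{-1/4}$, a unit coefficient produces $|w''|\asymp\lambda^{-1/2}$ near an endpoint, so matching against \eqref{eq:sk3} forces every coefficient to be $O(\lambda^{-1/4}e^{-\lambda^{-1/4}/\sqrt2})$, whence $\sup_x|w(x)|\le C\lambda^{-1/4}e^{-\lambda^{-1/4}/\sqrt2}$; as $t\in[0,1]$ is arbitrary (and $G_\lambda$ is symmetric) this gives the claimed bound uniformly in $x$ and $t$. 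I expect the main obstacle to be the explicit endpoint computation behind \eqref{eq:sk3}: one must track every term in $\partial_x^2 K_\lambda$ and $\partial_x^3 K_\lambda$ at $x\in\{0,1\}$ and verify that the non-exponentially-small pieces really do cancel for all $t\in[0,1]$, including $t$ near a boundary; by contrast, the well-conditioning step is routine once the system is known to be exponentially close to block–diagonal.
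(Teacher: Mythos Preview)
The paper does not prove this lemma at all: it is quoted verbatim from \cite[Theorem~4.1]{messer1993new} and used as a black box, so there is no in-paper argument to compare against. Your sketch is therefore not a reproduction of anything in the paper but an independent outline of the Messer--Goldstein proof.

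That said, your plan is the right one and matches the structure of the original argument: identify $G_\lambda(\cdot,t)$ as the Green's function of $L_\lambda=\lambda\partial_x^4+1$ with natural boundary conditions, recognise the first line of \eqref{eq:kernel_new} as the whole-line fundamental solution and the $\Phi$ terms as boundary-layer corrections built from homogeneous solutions, and then bound the homogeneous solution $w=K_\lambda-G_\lambda$ by its boundary data. One small slip: in \eqref{eq:sk3} the $j=3$ residual should carry an extra factor $\lambda^{-1/4}$ (each $x$-derivative of \eqref{eq:kernel_new} contributes a $\lambda^{-1/4}$), so the correct scaling is $|w^{(j)}|_{\partial\Omega}=O\big(\lambda^{-(j+1)/4}e^{-\lambda^{-1/4}/\sqrt2}\big)$; this is harmless because your back-substitution step recovers exactly the missing factor and the final $\lambda^{-1/4}$ bound on $|w|$ is unchanged. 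The only other point to be careful about is that for $\lambda$ bounded away from $0$ the ``perturbation of the identity'' claim for the $4\times4$ system is not literally true, but in that regime both $K_\lambda$ and $G_\lambda$ are uniformly $O(1)$ and the stated inequality holds trivially for a suitable $C$, so you may assume $\lambda$ small when carrying out the linear-algebra step.
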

According to Lemma~\ref{lem:ker_dist_green}, if $\lambda \to 0$ as $n \to \infty$, the proposed kernel converges exponentially to the true Green's function.
Moreover, with an appropriate choice of $\lambda$, in the case of uniform regression points, it can be ensured that the weight function $G_{n,\lambda}(x, t)$ defined in \eqref{eq:spline_green_f_2} converges to the Green's function $G_\lambda(x, t)$:
\begin{lemma} \cite[Theorem~2.1]{nychka1995splines} \label{lem:ker_dist_green_2} For uniformly distributed regression points in $[0,1]$, if $n\lambda^{\frac{1}{4}} > C_0$ for some constant $C_0 > 0$, then there exist positive constants $\mu_1$ and $\mu_2$ such that:
\begin{align}
    \sup _{0 \leq x, t \leq 1} \left|G_{n,\lambda}(x, t)-\mathrm{G}_\lambda(x, t)\right|<\frac{\mu_1n^{-1}}{1-\mu_1n^{-1}\lambda^{-\frac{1}{4}}} \mathrm{e}^{-{\mu_2\left|t-x\right|}{\lambda^{-\frac{1}{4}}}}.
\end{align}
\end{lemma}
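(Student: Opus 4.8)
The plan is to derive an exact identity from the reproducing-kernel structure of the two problems, and then close the estimate by a self-consistency (contraction) argument engineered to produce a bound of the shape $\tfrac{\mu_1 n^{-1}}{1-\mu_1 n^{-1}\lambda^{-1/4}}$. First I would note that both Green's functions are representers of point evaluation. Endow $\sobeq{2}{2}$ with the inner product $\langle g,h\rangle_\lambda := \lambda\int_\Omega g''h''+\int_\Omega gh$; the Euler--Lagrange equation of \eqref{eq:green} gives $\lambda G_\lambda^{(4)}(\cdot,t)+G_\lambda(\cdot,t)=\delta_t$ with the natural conditions $g''(0)=g''(1)=g'''(0)=g'''(1)=0$, so two integrations by parts yield $\langle G_\lambda(\cdot,t),v\rangle_\lambda=v(t)$ for every $v\in\sobeq{2}{2}$, and $G_\lambda$ is symmetric. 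Endow the same space with the inner product $\langle g,h\rangle_{n,\lambda}:=\lambda\int_\Omega g''h''+\tfrac1n\sum_{j=1}^n g(t_j)h(t_j)$, which is positive definite for $n\geq2$; the normal equations of \eqref{eq:sm_spline_obj} identify the weight function $G_{n,\lambda}(\cdot,t_i)$ of \eqref{eq:spline_green_f_2} with the representer of evaluation at $t_i$, i.e.\ $\langle G_{n,\lambda}(\cdot,t_i),v\rangle_{n,\lambda}=v(t_i)$, and $G_{n,\lambda}$ is symmetric at the nodes. Evaluating the first identity with $v=G_{n,\lambda}(\cdot,t_i)$ at $t=x$ and the second with $v=G_\lambda(\cdot,x)$, the common term $\lambda\int G_\lambda(\cdot,x)''\,G_{n,\lambda}(\cdot,t_i)''$ cancels upon subtraction, leaving the exact identity
\[
G_{n,\lambda}(x,t_i)-G_\lambda(x,t_i)=\Big(\int_\Omega-\tfrac1n\textstyle\sum_{j=1}^n\Big)\big[\,G_\lambda(\cdot,x)\,G_{n,\lambda}(\cdot,t_i)\,\big].
\]

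Thus the difference of the two kernels is literally a single quadrature remainder of the product of the two kernels over the equidistant grid $t_j=j/n$. Writing $G_{n,\lambda}=G_\lambda+E_n$ decomposes the right-hand side into a \emph{principal} remainder $\rho(x,t_i):=(\int-\tfrac1n\sum)[G_\lambda(\cdot,x)G_\lambda(\cdot,t_i)]$, depending only on the continuous kernel, plus a \emph{feedback} remainder $(\int-\tfrac1n\sum)[G_\lambda(\cdot,x)E_n(\cdot,t_i)]$ in which the unknown $E_n$ reappears. The inputs for bounding both are: (i) the exponential localization of $G_\lambda$ and its first three derivatives --- from the characteristic roots $\mu=\lambda^{-1/4}(\pm1\pm i)/\sqrt2$ of $\lambda\mu^4+1=0$ and the natural boundary conditions (which keep the coefficients of the growing modes exponentially small), one gets $|\partial_x^k G_\lambda(x,t)|\lesssim\lambda^{-(1+k)/4}e^{-c|x-t|\lambda^{-1/4}}$ for $k\leq3$, in agreement with Lemmas~\ref{lem:kernel_upperbound}--\ref{lem:ker_dist_green}; (ii) the fact that $G_{n,\lambda}(\cdot,t_i)$, being the minimizer of a $\lambda$-penalized functional, is also ``smooth at scale $\lambda^{1/4}$'', so that $E_n$ inherits $|\partial_s^k E_n|\lesssim\lambda^{-k/4}\|E_n\|_\infty$; and (iii) sharp composite-quadrature (Euler--Maclaurin) estimates for equidistant nodes with endpoint and jump corrections, which applied to integrands that are exponentially localized at scale $\lambda^{1/4}$ produce quadrature errors of order $n^{-1}$ carrying the inherited exponential weight, provided one is in the regime $n\lambda^{1/4}>C_0$.

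Combining (i)--(iii), one obtains $\sup_x|\rho(x,t_i)|\lesssim\mu_1 n^{-1}e^{-\mu_2|x-t_i|\lambda^{-1/4}}$, and for the feedback term a bound $\lesssim\mu_1 n^{-1}\lambda^{-1/4}\,\|E_n\|_\infty\,e^{-\mu_2|x-t_i|\lambda^{-1/4}}$, the extra factor $\lambda^{-1/4}$ being the cost of one derivative falling on $E_n$. Introducing the weighted sup-norm $M_n:=\sup_{x,t}\big|E_n(x,t)\,e^{\mu_2|x-t|\lambda^{-1/4}}\big|$ and taking suprema in the identity gives $M_n\leq\mu_1 n^{-1}+\mu_1 n^{-1}\lambda^{-1/4}M_n$, so that whenever $n\lambda^{1/4}>C_0:=\mu_1$ the contraction resolves to $M_n\leq\dfrac{\mu_1 n^{-1}}{1-\mu_1 n^{-1}\lambda^{-1/4}}$, which is exactly the claimed inequality. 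Along the way one also needs a uniform-in-$n$ coercivity bound $\langle g,g\rangle_{n,\lambda}\gtrsim\min\{1,\lambda\}\,\|g\|_{\sobeq{2}{2}}^2$ (a Poincaré-type estimate exploiting that consecutive nodes are $1/n$ apart) to guarantee that $G_{n,\lambda}$ exists, is bounded, and is symmetric, and a routine density/continuity argument to pass from $t\in\{t_i\}$ to general $t\in[0,1]$.

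The step I expect to be the main obstacle is (iii) together with the near-diagonal and near-boundary bookkeeping: because $G_\lambda(\cdot,x)$ has a jump in its third derivative on the diagonal, and because the natural boundary conditions annihilate only the second and third derivatives (not the first) at the endpoints, the Euler--Maclaurin endpoint and jump contributions must be tracked carefully and shown not to spoil the $n^{-1}$ rate; moreover the summation of the small per-interval errors against the exponential weight over the $O(n\lambda^{1/4})$ effective nodes must be balanced against the $\lambda^{-1/4}$ loss incurred when a derivative lands on $E_n$, and it is precisely this balance that simultaneously forces the restriction $n\lambda^{1/4}>C_0$ and manufactures the singular denominator $1-\mu_1 n^{-1}\lambda^{-1/4}$ in the final bound.
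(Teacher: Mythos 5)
The paper does not prove this lemma at all; it is invoked as a verbatim citation of Nychka's Theorem~2.1 and used only as a black-box input (together with Lemma~\ref{lem:ker_dist_green}) to Lemma~\ref{lem:ker_dist_weight} by the triangle inequality. Your proposal is therefore a from-scratch reconstruction of the cited result, which is a genuinely different route. The structural core is sound: both kernels are representers of point evaluation, and your exact identity
\[
G_{n,\lambda}(x,t_i)-G_\lambda(x,t_i)=\Big(\int_\Omega-\tfrac{1}{n}\sum_{j=1}^n\Big)\big[\,G_\lambda(\cdot,x)\,G_{n,\lambda}(\cdot,t_i)\,\big]
\]
follows cleanly from the two RKHS inner products after the $\lambda\int g''h''$ terms cancel and symmetry of $G_\lambda$ is used. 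The contraction closure is also the right idea: only a Neumann-series/fixed-point argument produces a bound with the singular denominator $1-\mu_1 n^{-1}\lambda^{-1/4}$ and simultaneously forces the standing hypothesis $n\lambda^{1/4}>C_0$, and a plain two-term expansion would not.

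Two steps are genuine gaps rather than routine omissions. First, the feedback bound $\sup_s\lvert\partial_s^k E_n(s,t_i)\rvert\lesssim\lambda^{-k/4}\|E_n\|_\infty$ is asserted but not established, and it is doing the real work: it is exactly where the loose factor $\lambda^{-1/4}$ enters the contraction coefficient $\mu_1 n^{-1}\lambda^{-1/4}$. As written the argument is circular, since you control derivatives of $E_n$ by the same $\|E_n\|_\infty$ you are trying to bound. Closing this requires an a priori coercivity/energy estimate for the representer $G_{n,\lambda}(\cdot,t_i)$ in the $\langle\cdot,\cdot\rangle_{n,\lambda}$ norm combined with a $\lambda$-rescaled Sobolev embedding, and the constants must come out uniform in $n$ under $n\lambda^{1/4}>C_0$. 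Second, the Euler--Maclaurin quadrature estimate is more delicate than the sketch admits. The integrand $G_\lambda(\cdot,x)G_{n,\lambda}(\cdot,t_i)$ is only $C^2$ across the diagonals $s=x$ and $s=t_i$ (jump in the third derivative of $G_\lambda$), and the natural boundary conditions kill $g''$ and $g'''$ at $0,1$ but leave $g'$ free, so the endpoint correction terms survive. You must verify that the jump and endpoint contributions, multiplied by the exponential weight and summed over the $O(n\lambda^{1/4})$ effective nodes near the diagonal, still yield $O(n^{-1})$ with at most a single $\lambda^{-1/4}$ loss; one extra power of $\lambda^{-1/4}$ there destroys the contraction. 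Until those two points are carried out, the proposal is a plausible and structurally insightful sketch, but for the purposes of this paper the lemma should remain a citation.
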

We now present our main lemma, which characterizes the relationship between the weight function $G_{n,\lambda}(x, t)$ and the approximated kernel $K_\lambda(x, t)$:
\begin{lemma}\label{lem:ker_dist_weight}
    For uniformly distributed regression points in $[0,1]$, if $n\lambda^{\frac{1}{4}} > C_0$ for some constant $C_0 > 0$ and $0 < \lambda \leqslant 1$, then there exist positive constants $\mu_3$ and $\mu_4$ depending on $C_0$, such that:
    \begin{align}
        \sup _{0 \leq x, t \leq 1} \left|G_{n,\lambda}(x, t)-K_\lambda(x, t)\right| < \frac{\mu_3}{n}e^{\left(-\lambda^{-\frac{1}{4}}|x-t|\right)} + \mu_4\lambda^{-\frac{1}{4}}e^{\left(-\frac{1}{\sqrt{2}}\lambda^{-\frac{1}{4}}\right)}.
    \end{align}
\end{lemma}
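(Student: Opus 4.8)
The plan is to obtain the bound on $\sup_{x,t}|G_{n,\lambda}(x,t) - K_\lambda(x,t)|$ by inserting the continuous Green's function $G_\lambda(x,t)$ as an intermediary and applying the triangle inequality:
\begin{align*}
\left|G_{n,\lambda}(x,t) - K_\lambda(x,t)\right| \leqslant \left|G_{n,\lambda}(x,t) - G_\lambda(x,t)\right| + \left|G_\lambda(x,t) - K_\lambda(x,t)\right|.
\end{align*}
The two terms on the right are controlled by results already available in the excerpt: Lemma~\ref{lem:ker_dist_green_2} bounds the first term by $\frac{\mu_1 n^{-1}}{1-\mu_1 n^{-1}\lambda^{-1/4}} e^{-\mu_2|t-x|\lambda^{-1/4}}$ under the hypothesis $n\lambda^{1/4} > C_0$, and Lemma~\ref{lem:ker_dist_green} bounds the second term by $C\lambda^{-1/4} e^{-\frac{1}{\sqrt 2}\lambda^{-1/4}}$ for $0 < \lambda \leqslant 1$. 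Since both hypotheses are assumed in the statement, combining them directly gives a bound of the required shape.

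The remaining work is to absorb the awkward prefactor $\frac{\mu_1 n^{-1}}{1-\mu_1 n^{-1}\lambda^{-1/4}}$ into a clean constant $\mu_3/n$. First I would note that $n\lambda^{1/4} > C_0$ is equivalent to $n^{-1}\lambda^{-1/4} < 1/C_0$, so $\mu_1 n^{-1}\lambda^{-1/4} < \mu_1/C_0$. Provided $C_0$ is large enough that $\mu_1/C_0 < 1$ (which one may assume without loss of generality, or otherwise enlarge $C_0$), we get $1 - \mu_1 n^{-1}\lambda^{-1/4} > 1 - \mu_1/C_0 > 0$, hence
\begin{align*}
\frac{\mu_1 n^{-1}}{1-\mu_1 n^{-1}\lambda^{-1/4}} \leqslant \frac{\mu_1}{1-\mu_1/C_0}\cdot\frac{1}{n} =: \frac{\mu_3}{n},
\end{align*}
a constant depending only on $C_0$ (and the universal constant $\mu_1$). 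Then I would set $\mu_4 := C$ and use $e^{-\mu_2|t-x|\lambda^{-1/4}} \leqslant e^{-\lambda^{-1/4}|x-t|}$ after (if necessary) renaming or shrinking $\mu_2$ so that the decay constant matches the one written in the statement; since the lemma only claims existence of \emph{some} positive constants, this cosmetic matching is harmless. Taking the supremum over $0 \leqslant x,t \leqslant 1$ on both sides then yields the claimed inequality.

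I do not anticipate a genuine obstacle here: this lemma is essentially a bookkeeping composition of the two cited results, and the only subtlety is the mild strengthening of the constant $C_0$ needed to keep the denominator $1 - \mu_1 n^{-1}\lambda^{-1/4}$ bounded away from zero. The one point requiring a little care is making sure the exponential decay factors are stated consistently — Lemma~\ref{lem:ker_dist_green_2} carries the spatial decay $e^{-\mu_2|t-x|\lambda^{-1/4}}$ while Lemma~\ref{lem:ker_dist_green} carries the fixed decay $e^{-\frac{1}{\sqrt 2}\lambda^{-1/4}}$ with no spatial dependence, and these should not be conflated; they simply appear as the two separate summands in the final bound, matching the two terms in the statement.
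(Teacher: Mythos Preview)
Your proposal is correct and follows exactly the paper's approach: insert the continuous Green's function $G_\lambda$ as an intermediary, apply the triangle inequality, and invoke Lemmas~\ref{lem:ker_dist_green} and~\ref{lem:ker_dist_green_2} for the two resulting terms. In fact your treatment is more careful than the paper's, which simply states the triangle-inequality split and cites the two lemmas without working out how the prefactor $\frac{\mu_1 n^{-1}}{1-\mu_1 n^{-1}\lambda^{-1/4}}$ is absorbed into $\mu_3/n$.
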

\begin{proof}
    Lemma~\ref{lem:ker_dist_weight} can directly derive from Lemma~\ref{lem:ker_dist_green} and Lemma~\ref{lem:ker_dist_green_2} and using the fact that:
    \begin{align}
        \sup _{0 \leq x, t \leq 1} \left|G_{n,\lambda}(x, t)-K_\lambda(x, t)\right| &\leqslant  \sup _{0 \leq x, t \leq 1} \left|G_{n,\lambda}(x, t)- G_\lambda(x, t)| + |G_\lambda(x, t) +K_\lambda(x, t)\right| \nonumber \\
        &\leqslant 
        \sup _{0 \leq x, t \leq 1} |G_{n,\lambda}(x, t)- G_\lambda(x, t)| + \sup _{0 \leq x, t \leq 1} |G_\lambda(x, t) +K_\lambda(x, t)|
    \end{align}
\end{proof}

\section{Proof of Theorem~\ref{th:dec_bound}}\label{sec:app_th_dec_proof}
Recall from \eqref{eq:decompose} and \eqref{eq:proof_decompose} that $\mathcal{R}(\hat{f}) \leqslant \lenc(\hat{f}) + \ldec^a(\hat{f}) + \ldec^g(\hat{f})$, where 
\begin{align}
    \label{eq:adv_bound_proof}
    \ldec^a(\hat{f}) &= 4\mathop{\sup}_{\mathcal{A}_\gamma}   \mathop{\sup}_{z} \left|\decs\left(z\right) - \decs^\circ(z)\right|^2, \\ \label{eq:dec_bound_proof}   \ldec^g(\hat{f}) &= 4\mathop{\sup}_{z} 
 \left|\decs^\circ\left(z\right) - f(\encs(z))\right|^2,  \\   
 \label{eq:enc_bound_proof} \lenc(
 \hat{f}) &= \frac{2}{K} \sum^K_{k=1} \left|f(\encs(\alpha_k)) - f(x_k)\right|^2.
\end{align}
Let us begin with $\lenc$. Since $|f'|\leqslant \nu$, $f(\cdot)$ is $\nu$-Lipschitz property. Thus, we have:
\begin{align}\label{eq:lenc_final}
     \lenc &= \frac{2}{K} \sum^K_{k=1} \left(f(\encs(\alpha_k))- f(x_k)\right)^2 \nonumber \\ 
     &= \frac{2}{K} \sum^K_{k=1} \left(|f(\encs(\alpha_k))- f(x_k)|\right)^2 \nonumber \\ 
     &\leqslant \frac{2}{K} \sum^K_{k=1} \left(\nu\cdot|\encs(\alpha_k)- x_k|\right)^2 \nonumber \\
     &= \frac{2\nu^2}{K} \sum^K_{k=1}  (\encs(\alpha_k)- x_k)^2.
\end{align}
Next, we continue to $\ldec^g(\hat{f})$. As noted earlier, $\ldec^g(\hat{f})$ quantifies the how generalized the decoder function is. To apply the results from Theorem~\ref{th:lit_spline_noiseless}, it is necessary to ensure that $f\circ\encs \in \sob{2}{2}$.
\begin{lemma} \cite[Lemma~3]{moradicoded}\label{lem:fou}
Let $f:\mathbb{R} \to \mathbb{R}$ be a $\nu$-Lipschitz continuous function with $\norm{f''}_{\lp{\infty}{\Omega}} \leqslant \eta$ and $\Omega \subset \mathbb{R}$ be an open interval. If $\encs \in \sob{2}{2}$ then $f\circ \encs \in \sob{2}{2}$.
\end{lemma}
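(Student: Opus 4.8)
The plan is to verify directly the three conditions defining membership in $\sob{2}{2}$ for the composite function $v := f\circ\encs$, namely that $v$, its weak first derivative $v'$, and its weak second derivative $v''$ all lie in $\lp{2}{\Omega}$. The first thing I would record is that, in one dimension, $\encs\in\sob{2}{2}$ forces $\encs$ to be (absolutely) continuous with $\encs'\in\lp{\infty}{\Omega}$ and $\encs'\in\sob{1}{2}$ (since $\encs''\in\lp{2}{\Omega}$), by the one-dimensional Sobolev embedding $\sob{1}{2}\hookrightarrow \lp{\infty}{\Omega}$ on an interval, or quantitatively by Corollary~\ref{col:interp_ineq} applied to $\encs$ and then to $\encs'$. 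These boundedness facts, together with $|f'|\leqslant\nu$ and $\norm{f''}_{\lp{\infty}{\Omega}}\leqslant\eta$ (the latter meaning $f'$ is $\eta$-Lipschitz), are the only ingredients needed for the estimates.

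\textbf{First derivative.} Since $f$ is $\nu$-Lipschitz and $\encs\in\sob{1}{2}$, the chain rule for the composition of a Lipschitz map with a Sobolev function (see \cite{leoni2024first}) gives $f\circ\encs\in\sob{1}{2}$ with $(f\circ\encs)' = (f'\circ\encs)\,\encs'$ almost everywhere, the right-hand side being interpreted as $0$ on the set $\{\encs'=0\}$. As $|f'\circ\encs|\leqslant\nu$ a.e.\ and $\encs'\in\lp{2}{\Omega}$, we conclude $(f\circ\encs)'\in\lp{2}{\Omega}$; boundedness of $\encs$ on the (bounded) interval also gives $f\circ\encs\in\lp{2}{\Omega}$.

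\textbf{Second derivative.} Write $(f\circ\encs)' = g\cdot\encs'$ with $g := f'\circ\encs$. Because $f'$ is $\eta$-Lipschitz, applying the chain rule again to $g$ yields $g\in\sob{1}{2}$ with $g' = (f''\circ\encs)\,\encs'$ a.e., and $g\in\lp{\infty}{\Omega}$ since $|f'|\leqslant\nu$. Now $g$ and $\encs'$ both lie in $\sob{1}{2}\cap\lp{\infty}{\Omega}$, so their product is again in $\sob{1}{2}$ by the product rule for Sobolev functions, with $(g\,\encs')' = g'\encs' + g\,\encs'' = (f''\circ\encs)(\encs')^2 + (f'\circ\encs)\,\encs''$. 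The first term is bounded in absolute value by $\eta\,\norm{\encs'}^2_{\lp{\infty}{\Omega}}$, hence lies in $\lp{2}{\Omega}$; the second is bounded by $\nu\,|\encs''|$, hence in $\lp{2}{\Omega}$. Therefore $(f\circ\encs)''$ exists as a weak derivative in $\lp{2}{\Omega}$, which combined with the previous step yields $f\circ\encs\in\sob{2}{2}$.

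\textbf{Main obstacle.} The delicate point is the rigorous justification of the chain rule when $f$, and then $f'$, may fail to be differentiable on a Lebesgue-null set: one cannot naively compose $f'$ (or $f''$) with $\encs$, and must instead invoke the fact that weak derivatives vanish a.e.\ on level sets $\{\encs=c\}$ (and $\{\encs'=c\}$) to give $(f'\circ\encs)\,\encs'$ and $(f''\circ\encs)\,\encs'$ a well-defined meaning; this is precisely the content of the standard Lipschitz-composition lemmas I would cite. A minor secondary point is that for an unbounded $\Omega$ the claim $f\circ\encs\in\lp{2}{\Omega}$ would additionally require $f$ to vanish at the limiting values of $\encs$, but since the paper specializes to $\Omega=[0,1]$ this never arises.
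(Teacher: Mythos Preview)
Your argument is correct and complete. The paper does not actually supply its own proof of this lemma: it is quoted verbatim from \cite[Lemma~3]{moradicoded}, so there is no in-paper derivation to compare against. That said, your route matches the computation the paper carries out a few lines later in Lemma~\ref{lem:foue_bound}, where the chain-rule identity $(f\circ\encs)'' = (f'\circ\encs)\,\encs'' + (f''\circ\encs)(\encs')^2$ is used directly and each term is bounded via $|f'|\leqslant\nu$, $|f''|\leqslant\eta$, and the Sobolev control of $\encs',\encs''$. Your version is more careful on the functional-analytic side (invoking the Lipschitz chain rule and the $\sob{1}{2}\cap\lp{\infty}{\Omega}$ product rule rather than differentiating formally), which is appropriate given the hypotheses do not assert classical differentiability of $f$; the paper's treatment simply takes the chain rule for granted.
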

Define the function $h(t) := \decs^\circ(t) - f(\encs(t))$. By Lemma~\ref{lem:fou}, and under the assumption that both $\decs^\circ$ and $f \circ \encs$ belong to the Sobolev space $\sob{2}{2}$, it follows that $h \in \sob{2}{2}$. The subsequent lemmas derive upper bounds for $\norm{h}_{\lp{\infty}{\Omega}}$ and $\norm{h'}_{\lp{\infty}{\Omega}}$ by utilizing the fundamental properties of Sobolev spaces.
\begin{lemma}\label{lem:norm_infty_bound_1}
    If $\Omega=(a, b)$, then:
\begin{align}
    \|h\|^2_{\lp{2}{\Omega}} \leqslant\left\|h^{\prime}\right\|^2_{\lp{2}{\Omega}} \cdot \left(x_{0}^{2} - x_0(a+b)+\frac{a^{2} + b^2}{2}\right)
\end{align}
\end{lemma}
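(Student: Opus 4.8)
The plan is to exploit the fact that $h=\decs^\circ-f\circ\encs$ is the residual of a second‑order smoothing spline fitted to the \emph{noiseless} samples $\{f(\encs(\beta_n))\}_{n=1}^N$, which forces $h$ to vanish at some point $x_0\in[a,b]$; from such a zero the stated inequality drops out of one Cauchy--Schwarz estimate plus an elementary integral. I would take $x_0$ to be the point of $[a,b]$ at which $|h|$ is minimized (the same convention as in Corollary~\ref{cor:interp_ineq_2}), and the only substantive task is to verify $h(x_0)=0$.

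The hard part is this zero. Here I would use the linear representation \eqref{eq:ssfunction_def}: the vector of fitted values of $\decs^\circ$ at the design points is $\mathbf{A}_\lambda\mathbf{y}$ with $\mathbf{y}=[f(\encs(\beta_1)),\dots,f(\encs(\beta_N))]^{T}$. Since the penalty $\int_\Omega (u'')^2$ has null space equal to the polynomials of degree $\leqslant 1$, which contains the constants, the smoothing spline reproduces constant data exactly, i.e. $\mathbf{A}_\lambda\mathbf{1}=\mathbf{1}$; using symmetry of $\mathbf{A}_\lambda$,
\[
  \sum_{n=1}^N h(\beta_n)=\sum_{n=1}^N\big(\decs^\circ(\beta_n)-f(\encs(\beta_n))\big)=\mathbf{1}^{T}(\mathbf{A}_\lambda-\mathbf{I})\mathbf{y}=(\mathbf{A}_\lambda\mathbf{1}-\mathbf{1})^{T}\mathbf{y}=0 .
\]
Because $h\in\sob{2}{2}$ is continuous on $[a,b]$, either all $h(\beta_n)$ vanish or $h$ takes both signs, so by the intermediate value theorem there is $x_0\in[a,b]$ with $h(x_0)=0$; this point also realizes $\min_{[a,b]}|h|$, matching the $x_0$ in the statement.

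The remaining steps are routine. With $h(x_0)=0$, for any $t\in(a,b)$ write $h(t)=\int_{x_0}^{t}h'(s)\,ds$ and apply Cauchy--Schwarz to get $|h(t)|^2\leqslant |t-x_0|\,\|h'\|_{\lp{2}{\Omega}}^2$. Integrating over $(a,b)$,
\[
  \|h\|_{\lp{2}{\Omega}}^2\leqslant\|h'\|_{\lp{2}{\Omega}}^2\int_a^b|t-x_0|\,dt=\|h'\|_{\lp{2}{\Omega}}^2\cdot\frac{(x_0-a)^2+(b-x_0)^2}{2},
\]
and expanding the squares turns the last factor into $x_0^2-x_0(a+b)+\frac{a^2+b^2}{2}$, which is exactly the claimed bound. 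Thus the only place requiring care is the identity $\sum_n h(\beta_n)=0$ (equivalently $\mathbf{A}_\lambda\mathbf{1}=\mathbf{1}$), which rests on constants lying in the null space of the regularizer for $m=2$; everything after that is calculus.
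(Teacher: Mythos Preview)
Your argument is correct, and the Cauchy--Schwarz\,/\,integration step for the inequality itself is exactly what the paper does (it just splits the integral at $x_0$ explicitly, which you fold into $\int_a^b|t-x_0|\,dt$). The one genuine difference is how you produce the zero $x_0$. The paper argues by contradiction from optimality: if $h$ were, say, strictly positive on $\Omega$, then replacing $\decs^\circ$ by $\decs^\circ-\min_n h(\beta_n)$ leaves the penalty unchanged and strictly lowers the empirical mean-squared term, contradicting that $\decs^\circ$ solves \eqref{eq:decoder_opt_}. Your route instead exploits the algebraic structure of the smoother matrix: $\mathbf{A}_\lambda$ reproduces constants (null space of the penalty) and is symmetric, so $\sum_n h(\beta_n)=\mathbf{1}^{T}(\mathbf{A}_\lambda-\mathbf{I})\mathbf{y}=0$, and the intermediate value theorem does the rest. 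Your argument is cleaner and more quantitative (it even shows the residuals at the design points have zero mean), but it relies on the linear representation \eqref{eq:ssfunction_def} and on verifying symmetry of $\mathbf{A}_\lambda$; the paper's shift argument is more self-contained and would extend to nonlinear fitting procedures as long as the penalty is translation invariant. Either way the lemma follows.
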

\begin{proof}
Assume $\exists\, x_{0} \in \Omega: h\left(x_{0}\right)=0$. Therefore, $|h(x)|=\left|\int_{x_{0}}^{x} h^{\prime}(x)\,dx\right|$ for $x \in\left[x_{0}, b\right)$. Thus
\begin{align}
    |h(x)| = \left|\int_{x_{0}}^{x} h^{\prime}(x)\,dx\right| \lec{(a)}{\leqslant} \int_{x_{0}}^{x}\left|h^{\prime}(x)\right|\,dx \lec{(b)}{\leqslant} \left(\int_{x_{0}}^{x} 1^{2}\,dx\right)^{\frac{1}{2}}\left(\int_{x_{0}}^{x} |h^{\prime}(x)|^2\,dx\right)^{\frac{1}{2}},
\end{align}
where (a) and (b) are followed by the triangle and Cauchy-Schwartz inequalities respectively. Integrating the square of both sides over the interval $[x_0, b)$ yields:
\begin{align}\label{eq:norm_infty_bound_1}
 \int_{x_{0}}^{b}|h(x)|^{2}\,dx \lec{}{\leqslant} \int_{x_{0}}^{b}\left(z-x_{0}\right) \cdot \left( \int_{x_{0}}^{b}\left|h^{\prime}(x)\right|^{2} \,dx \right) \,dz \lec{(a)}{\leqslant} \int_{x_{0}}^{b}\left(z-x_{0}\right)\,dz \cdot \int_{a}^{b}\left|h^{\prime}(x)\right|^{2}\,dx,
\end{align}
where (a) follows by $x_0 \geqslant a$. On the other side, we have the following for every $x \in\left(a, x_{0}\right]$:
\begin{align}
& |h(x)|=\left|\int_{x}^{x_{0}} h^{\prime}(x)\,dx\right| \leqslant \int_{x}^{x_{0}}\left|h^{\prime}(x)\right|\,dx \leqslant\left(\int_{x}^{x_{0}} 1^{2}\,dx\right)^{\frac{1}{2}} \cdot\left(\int_{x}^{x_{0}}\left|h^{\prime}(x)\right|^{2}\,dx\right)^{\frac{1}{2}}.
\end{align}
Therefore, we have a similar inequality:
\begin{align}\label{eq:norm_infty_bound_2}
& \qquad \int_{a}^{x_{0}}|h(x)|^{2}\,dx \leqslant \int_{a}^{x_{0}}\left(x_{0}-x\right)\,dx \cdot \int_{a}^{b}|h'(x)|^{2}\,dx.
\end{align}
Using \eqref{eq:norm_infty_bound_1} and \eqref{eq:norm_infty_bound_2} completes the proof:
\begin{align}
\|h\|_{L^{2}(\Omega)}^{2}&=\int_{a}^{x_{0}}|h(x)|^{2}\,dx+\int_{x_{0}}^{b}|h(x)|^{2} \,dx \nonumber \\ 
&\leqslant\left\|h^{\prime}\right\|_{\lp{2}{\Omega}}^{2} \cdot \left(\int_{a}^{x_{0}}\left(x_{0}-x\right)\,dx+\int_{x_{0}}^{b}\left(x-x_{0}\right)\,dx\right) \nonumber \\
&\leqslant\left\|h^{\prime}\right\|_{\lp{2}{\Omega}}^{2} \cdot\left(x_{0}\left(x_{0}-a\right)-\left(\frac{x_{0}^{2}-a^2}{2}\right)+\left(\frac{b^{2}-x_{0}^{2}}{2}\right)-x_{0}\left(b-x_{0}\right)\right) \\
&= \left\|h^{\prime}\right\|_{\lp{2}{\Omega}}^{2} \cdot\left(x_{0}^{2} - x_0(a+b)+\frac{a^{2} + b^2}{2}\right).
\end{align}
Thus, if $x_0$ exists, the proof is complete. In the next step, we prove the existence of  $x_0\in\Omega$ such that $h\left(x_{0}\right)=0$. Recall that $h(t) = \decs^\circ(t) - f(\encs(t))$ and $\decs^\circ(\cdot)$ is the solution of \eqref{eq:decoder_opt_}. Assume there is no such $x_0$. Since $h \in \sob{2}{2}$, then $h(\cdot)$ is continuous. Therefore, if there exist $t_1,t_2 \in \Omega$ such that $h(t_1) < 0$ and $h(t_2) > 0$, then the intermediate value theorem states that there exists $x_0 \in (t_1, t_2)$ such that $h(x_0)=0$. Thus, $h(t) >0$ or $h(t)<0$ for all $t\in\Omega$. Without loss of generality, assume the first case where $h(t) > 0$ for all $t\in\Omega$. It means that $\decs^\circ(t) > f(\encs(t))$ for all $t\in\Omega$. Let us define $$\beta^* := \underset{\beta \in [N]}{\operatorname{argmin}}\, \decs^\circ(\beta) - f(\encs(\beta)).$$ Let $\decsbar^\circ(t) := \decs^\circ(t) - \decs^\circ(\beta^*)$. Note that $\int_\Omega \left(\decsbar^{0''}(t)\right)^2\,dt = \int_\Omega \left(\decs^{0''}(t)\right)^2\,dt$. Therefore, 
\begin{align}
    \sum_{v \in [N]}\left[\decs^\circ\left(\beta_v\right)-f\left(\encs\left(\beta_v\right)\right)\right]^2 &= \sum_{v \in [N]}\left[\decsbar^\circ \left(\beta_v\right) + \decs^\circ(\beta^*)-f\left(\encs\left(\beta_v\right)\right)\right]^2 \nonumber \\
    &= \sum_{v \in [N]}\left[\decsbar^\circ\left(\beta_v\right)-f\left(\encs\left(\beta_v\right)\right)\right]^2 + N\cdot\decs^\circ(\beta^*)^2 \nonumber \\
    &\quad\quad\quad + 2\decs^\circ(\beta^*) \sum_{v\in [N]}[\decsbar^\circ(\beta_v) - f(\encs(\beta_v))] \nonumber \\
    &\lec{(a)}{\geqslant} \sum_{v \in [N]} \left[\decsbar^\circ\left(\beta_v\right)-f\left(\encs\left(\beta_v\right)\right)\right]^2,
\end{align}
where (a) follows from $\decs^\circ(\beta^*) > 0$ and $\decsbar^\circ(\beta_v) > f(\encs(\beta_v))$ for all $v \in [N]$. This leads to a contradiction since it implies that $\decs^\circ$ is not the solution of the \eqref{eq:decoder_opt_}. Therefore, our initial assumption must be wrong. Thus, there exists $x_0\in \Omega$ such that $h(x_0) = 0$.
\end{proof}
\begin{lemma}\label{lem:norm_infty_bound_2}
Let $\Omega=(0,1)$. For $h(t) = \decs^\circ(t) - f(\encs(t))$ we have:
\begin{align}\label{eq:norm_infty_bound_2_1}
  \norm{h}_{\lp{\infty}{\Omega}} \leqslant 2\norm{h}^\frac{1}{2}_{\lp{2}{\Omega}}\cdot\norm{h'}^\frac{1}{2}_{\lp{2}{\Omega}} < \infty,
\end{align}
and
\begin{align}\label{eq:norm_infty_bound_2_2}
  \norm{h'}_{\lp{\infty}{\Omega}} \leqslant \norm{h'}_{\lp{2}{\Omega}} + \norm{h''}_{\lp{2}{\Omega}} < \infty.
\end{align}
\end{lemma}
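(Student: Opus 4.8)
The plan is to obtain both bounds from the one-dimensional interpolation inequalities already recorded in the Appendix (Theorem~\ref{th:interp_ineq} and Corollary~\ref{col:interp_ineq}), after first recording that $h \in \sob{2}{2}$ so that all $L^2$-norms in play are finite and all hypotheses of those results are met. The membership $h \in \sob{2}{2}$ holds because $\decs^\circ$, being a second-order smoothing spline, is a continuous piecewise cubic on the compact interval $\overline{\Omega}$ and hence lies in $\sob{2}{2}$, while $f\circ\encs \in \sob{2}{2}$ by Lemma~\ref{lem:fou} (using $|f'|\leqslant\nu$, $\norm{f''}_{\lp{\infty}{\Omega}}\leqslant\eta$, and $\encs\in\sob{2}{2}$); therefore $h=\decs^\circ-f\circ\encs\in\sob{2}{2}$, so $h,h',h''\in\lp{2}{\Omega}$ with finite norms, which already yields the ``$<\infty$'' assertions.

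For \eqref{eq:norm_infty_bound_2_1}, I would apply Corollary~\ref{col:interp_ineq} (legitimate for $\sob{2}{2}$ functions by Corollary~\ref{col:interp_ineq2}) with $g=h$. The one thing to verify is the ratio condition $\norm{h}_{\lp{2}{\Omega}}/\norm{h'}_{\lp{2}{\Omega}}<\mathcal{L}^1(\Omega)=1$, and this is exactly what Lemma~\ref{lem:norm_infty_bound_1} provides: with $(a,b)=(0,1)$ it gives $\norm{h}^2_{\lp{2}{\Omega}}\leqslant\norm{h'}^2_{\lp{2}{\Omega}}\bigl(x_0^2-x_0+\tfrac12\bigr)$, and since $x\mapsto x^2-x+\tfrac12$ attains its maximum $\tfrac12$ on $[0,1]$ at the endpoints, we get $\norm{h}^2_{\lp{2}{\Omega}}\leqslant\tfrac12\norm{h'}^2_{\lp{2}{\Omega}}$, so the ratio is at most $1/\sqrt2<1$. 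The degenerate case $\norm{h'}_{\lp{2}{\Omega}}=0$ is handled separately: then $h$ is constant, and since $h(x_0)=0$ (from Lemma~\ref{lem:norm_infty_bound_1}), $h\equiv 0$ and both sides of \eqref{eq:norm_infty_bound_2_1} vanish. Otherwise Corollary~\ref{col:interp_ineq} delivers precisely $\norm{h}_{\lp{\infty}{\Omega}}\leqslant 2\norm{h}^{1/2}_{\lp{2}{\Omega}}\norm{h'}^{1/2}_{\lp{2}{\Omega}}$.

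For \eqref{eq:norm_infty_bound_2_2}, I would invoke Theorem~\ref{th:interp_ineq} with $g=h'$, $p=q=2$, $r=\infty$; this is admissible since $h'\in\sob{1}{2}\subset\sobloc{1}{1}$ on the bounded interval $\Omega$. The theorem gives $\norm{h'}_{\lp{\infty}{\Omega}}\leqslant \ell^{-1/2}\norm{h'}_{\lp{2}{\Omega}}+\ell^{1/2}\norm{h''}_{\lp{2}{\Omega}}$ for every $0<\ell<\mathcal{L}^1(\Omega)=1$. Since the left-hand side does not depend on $\ell$, letting $\ell\to 1^-$ in the right-hand side yields $\norm{h'}_{\lp{\infty}{\Omega}}\leqslant\norm{h'}_{\lp{2}{\Omega}}+\norm{h''}_{\lp{2}{\Omega}}$.

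I do not expect a genuine obstacle here: the two only points needing care are (i) justifying $\decs^\circ\in\sob{2}{2}$, hence $h\in\sob{2}{2}$, which is what makes every norm finite and every interpolation inequality applicable, and (ii) that Theorem~\ref{th:interp_ineq} permits only $\ell$ strictly below $\mathcal{L}^1(\Omega)$, so the clean bound in \eqref{eq:norm_infty_bound_2_2} must be read off as a limit rather than by directly substituting $\ell=1$.
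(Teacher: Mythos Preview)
Your proposal is correct and follows essentially the same approach as the paper: use Lemma~\ref{lem:norm_infty_bound_1} on $(0,1)$ to get the ratio bound $\norm{h}_{\lp{2}{\Omega}}/\norm{h'}_{\lp{2}{\Omega}}\leqslant 1/\sqrt{2}$, then apply Corollary~\ref{col:interp_ineq} for \eqref{eq:norm_infty_bound_2_1}, and invoke Theorem~\ref{th:interp_ineq} with $g=h'$, $p=q=2$, $r=\infty$ for \eqref{eq:norm_infty_bound_2_2}. You are in fact slightly more careful than the paper, which simply plugs in $\ell=1$ while Theorem~\ref{th:interp_ineq} requires $0<\ell<\mathcal{L}^1(\Omega)$; your limiting argument $\ell\to 1^-$ and your treatment of the degenerate case $\norm{h'}_{\lp{2}{\Omega}}=0$ are welcome refinements.
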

\begin{proof}
    Using Lemma~\ref{lem:norm_infty_bound_1} for $a=0, b=1$, one can conclude
    \begin{align}
        \frac{\norm{h}_{\lp{2}{\Omega}}}{\norm{h'}_{\lp{2}{\Omega}}} \leqslant \frac{1}{\sqrt{2}}.
    \end{align}
    Since $h \in \sob{2}{2}$ we can apply Corollary~\ref{col:interp_ineq} and Theorem~\ref{th:interp_ineq} with $r=\infty$ and $p,q=2$ to complete the proof of \eqref{eq:norm_infty_bound_2_1}. Furthermore, using Theorem~\ref{th:interp_ineq} with $r=\infty$ and $p,q=2$ and $\ell = 1$ completes the proof of \eqref{eq:norm_infty_bound_2_2}.
\end{proof}
Using Lemma~\ref{lem:norm_infty_bound_2} and beginning with \eqref{eq:dec_bound_proof}, an upper bound for $\ldec^g(\hat{f})$ can be obtained:
\begin{align}\label{eq:ldec_step3}
    \ldec^g(\hat{f}) &= 4\mathop{\sup}_{z\in \Omega} 
 \left|\decs^\circ\left(z\right) - f(\encs(z))\right|^2
 \nonumber \\  &\lec{}{=}  4 \norm{h}^2_{\lp{\infty}{\Omega}}
  \nonumber \\  &\lec{(a)}{\leq}  8 \norm{h}_{\lp{2}{\Omega}}\cdot \norm{h'}_{\lp{2}{\Omega}},
\end{align}
where (a) follows from Lemma~\ref{lem:norm_infty_bound_2}. Applying Theorem~\ref{th:lit_spline_noiseless} with $\Omega=(0, 1), m=2$, we have
\begin{align}\label{eq:h_1}
    \norm{h}^2_{\lp{2}{\Omega}} \lec{}{\leqslant}H_0\norm{(f\circ\encs)^{(2)}}^2_{\lp{2}{\Omega}}\cdot L,
\end{align}
and
\begin{align}\label{eq:hp_1}
\norm{h'}^2_{\lp{2}{\Omega}} &\lec{}{\leqslant}H_1\norm{(f\circ\encs)^{(2)}}^2_{\lp{2}{\Omega}}\cdot L^{\frac{1}{2}}(1 + \frac{L}{16})^{\frac{1}{2}},
\end{align}
where $L=p_2\left(\frac{\Delta_\textrm{max}}{\Delta_\textrm{min}}\right)\cdot\frac{N \Delta_\textrm{max}}{4} \declamb + D(2)\cdot \left({\Delta_\textrm{max}}\right)^{4}$ and $H_0, H_1:= H(2,0), H(2,1)$ as defined in  Theorem~\ref{th:lit_spline_noiseless}. Since, $\beta_i = \frac{i}{N}$, $\Delta_\textrm{max} = \Delta_\textrm{min}=\frac{1}{N}$. Thus, we have:
\begin{align}\label{eq:ldec_final}
    \ldec^g(\hat{f}) &\lec{}{\leqslant} \sqrt{H_0H_1} \cdot \norm{(f\circ\encs)^{(2)}}^2_{\lp{2}{\Omega}} L^\frac{3}{4} \left(1+\frac{L}{16}\right)^\frac{1}{4}
     \nonumber \\
     &\lec{}{\leqslant} \sqrt{H_0H_1} \cdot \norm{(f\circ\encs)^{(2)}}^2_{\lp{2}{\Omega}} \left(c_1\declamb + c_2N^{-4}\right)^\frac{3}{4} \left(1+c_1\declamb\ + c_2N^{-4} \right)^\frac{1}{4}
     \nonumber \\
     &\lec{(a)}{\leqslant} \sqrt{H_0H_1} \cdot \norm{(f\circ\encs)^{(2)}}^2_{\lp{2}{\Omega}} \left(c_1\declamb + c_2N^{-4}\right)^\frac{3}{4} \left(1+c_1+c_2\right)^\frac{1}{4}
     \nonumber \\
     &\lec{(b)}{\leqslant} \sqrt{H_0H_1} \cdot \norm{(f\circ\encs)^{(2)}}^2_{\lp{2}{\Omega}} \left(\left(c_3\declamb\right)^\frac{3}{4} + \left( c_4N^{-4}\right)^\frac{3}{4}\right),
\end{align}
where \( c_1 := \frac{p_2(1)}{4} \), \( c_2 := D(2) \), \( c_3 := c_1(1 + c_1 + c_2)^{\frac{1}{3}} \), and \( c_4 := c_2(1 + c_1 + c_2)^{\frac{1}{3}} \). Step (a) follows from the conditions \( N > 1 \) and \( \declamb \leqslant 1 \), while (b) is derived from the inequality \( (x + y)^r \leqslant x^r + y^r \), which holds for \( x, y > 0 \) and \( 0 < r < 1 \).

Finally, for $\ldec^a(\hat{f})$, let us define $y_i:=f(\encs(\beta_i))$. Leveraging the weight function representation of smoothing splines we have:
\begin{alignat}{2}
    \mathop{\sup}_{\mathcal{A}_\gamma}   \mathop{\sup}_{z} \left|\decs\left(z\right) - \decs^\circ(z)\right| 
     &\lec{}{=} \mathop{\sup}_{\mathcal{A}_\gamma}   \mathop{\sup}_{z}
      &&\left|\frac{1}{N} \sum_{i=1}^N G_{N,\lambda}(z, \beta_i)\overline{y}_i - \frac{1}{N} \sum_{i=1}^N G_{N,\lambda}(z, \beta_i)y_i  \right| \nonumber \\
     &\lec{(a)}{=} \mathop{\sup}_{\mathcal{A}_\gamma}   \mathop{\sup}_{z}
      &&|\frac{1}{N} \sum_{i=1}^N G_{N,\lambda}(z, \beta_i)\overline{y}_i - \frac{1}{N} \sum_{i=1}^N K_\lambda(z, \beta_i)\overline{y}_i \nonumber \\ & &&+ \frac{1}{N} \sum_{i=1}^N K_\lambda(z, \beta_i)\overline{y}_i - \frac{1}{N} \sum_{i=1}^N K_{\lambda}(z, \beta_i)y_i \nonumber \\ & &&+ \frac{1}{N} \sum_{i=1}^N K_\lambda(z, \beta_i){y_i} - \frac{1}{N} \sum_{i=1}^N G_{N, \lambda}(z, \beta_i){y_i}  | 
     \nonumber \\
     &\lec{}{\leqslant} \mathop{\sup}_{\mathcal{A}_\gamma}   \mathop{\sup}_{z}
      && \frac{1}{N} \sum_{i=1}^N \left|y_i - \overline{y}_i\right| \cdot \left|G_{N,\lambda}(z, \beta_i) -  K_\lambda(z, \beta_i) \right| \nonumber \\ & &&+  \frac{1}{N} \sum_{i=1}^N \left|y_i - \overline{y}_i\right|\cdot \left|K_\lambda(z, \beta_i)\right|,
\end{alignat}
where (a) follows from adding and subtracting two terms: $\frac{1}{N} \sum_{i=1}^N K_\lambda(z, \beta_i)\overline{y}_i, \frac{1}{N} \sum_{i=1}^N K_\lambda(z, \beta_i)y_i$. 
By defining $d_z := \min_{i \in [N]} \left|z - \beta_i\right|$, we have:
\begin{alignat}{2}
\mathop{\sup}_{\mathcal{A}_\gamma}   \mathop{\sup}_{z} \left|\decs\left(z\right) - \decs^\circ(z)\right| 
 &\lec{(a)}{\leqslant}  &&\mathop{\sup}_{z}
       \frac{2M\gamma}{N} \left(\frac{\mu_3}{N}e^{\left(-d_z\lambda^{-\frac{1}{4}}\right)} + \mu_4\lambda^{-\frac{1}{4}}e^{\left(-\frac{1}{\sqrt{2}}\lambda^{-\frac{1}{4}}\right)}\right) +  \frac{2M\gamma}{N} \frac{9}{\sqrt{2}}\lambda^{-\frac{1}{4}}
       \nonumber \\
     & \lec{}{\leqslant} 
       &&\mu_3 \frac{2M\gamma}{N^2} + \mu_4\frac{2M\gamma}{N}\lambda^{-\frac{1}{4}}e^{\left(-\frac{1}{\sqrt{2}}\lambda^{-\frac{1}{4}}\right)} + \frac{9\sqrt{2}M\gamma}{N} \lambda^{-\frac{1}{4}},
\end{alignat}
where (a) comes from applying Lemmas~\ref{lem:ker_dist_weight} and \ref{lem:kernel_upperbound}. Therefore, we have:
\begin{align}\label{eq:ladv_final}
    \ldec^a(\hat{f}) &= 4\mathop{\sup}_{\mathcal{A}_\gamma}   \mathop{\sup}_{z}  \left|\decs\left(z\right) - \decs^\circ(z)\right|^2  \nonumber \\
    &\lec{}{\leqslant} 4  \left|\mu_3 \frac{2M\gamma}{N^2} + \mu_4\frac{2M\gamma}{N}\lambda^{-\frac{1}{4}}e^{\left(-\frac{1}{\sqrt{2}}\lambda^{-\frac{1}{4}}\right)} + \frac{9\sqrt{2}M\gamma}{N} \lambda^{-\frac{1}{4}}\right|^2  \nonumber \\
    &\lec{(a)}{\leqslant}  C_1 \frac{M^2\gamma^2}{N^4} + C_2\frac{M^2\gamma^2}{N^2}\lambda^{-\frac{1}{2}}e^{\left({-\sqrt{2}}\lambda^{-\frac{1}{4}}\right)} + C_3\frac{M^2\gamma^2}{N^2} \lambda^{-\frac{1}{2}},
\end{align}
where (a) follows by AM-GM inequality and  $C_1:=48(\mu_3)^2, C_2:=48(\mu_4)^2, C_3:=4\times 3 \times (9\sqrt{2})^2=1944$.

Combining \eqref{eq:ladv_final}, \eqref{eq:ldec_final}, and \eqref{eq:lenc_final} along with defining $C_4:=c_3^{\frac{3}{4}}$ and $C_5:=c_4^{\frac{3}{4}}$ complete the proof.

\section{Proof of Theorem~\ref{th:enc_bound}}\label{sec:app_th_enc}
According to Theorem~\ref{th:dec_bound}, we have:
\begin{alignat}{2}\label{eq:th2_bound_1}
    \mathcal{R}(\hat{f}) &\leqslant 
    &&C_1 \frac{M^2\gamma^2}{N^4} + C_2\frac{M^2\gamma^2}{N^2}\declamb^{-\frac{1}{2}}\left( e^{\left({-\sqrt{2}}\declamb^{-\frac{1}{4}}\right)} + C_3\right) + \left(C_4\declamb^\frac{3}{4} +C_5N^{-3}\right) \norm{(f \circ \encs)''}^{2}_{\lp{2}{\Omega}} + 
     \frac{2\nu^2}{K} \sum^K_{k=1} (\encs(\alpha_k) - x_k)^2
    \nonumber \\
    &\lec{(a)}{\leqslant}
    &&C_1 \frac{M^2\gamma^2}{N^4} + C_2(1+C_3)\frac{M^2\gamma^2}{N^2}\declamb^{-\frac{1}{2}} + \left(C_4\declamb^\frac{3}{4} +C_5N^{-3}\right) \norm{(f \circ \encs)''}^{2}_{\lp{2}{\Omega}} 
    + \frac{2\nu^2}{K} \sum^K_{k=1} (\encs(\alpha_k) - x_k)^2, 
\end{alignat}
where (a) holds because \( \declamb > 0 \). 

The following lemma provides a uniform upper bound for the complex term \( \norm{(f \circ \encs)''}^{2}_{\lp{2}{\Omega}} \) in terms of \( \norm{\encs}^2_{\hiltilde{2}} \).
\begin{lemma}\label{lem:foue_bound}
    If $f$ is $\nu$-Lipschitz function and $|f''|\leqslant \eta$, we have:
    \begin{align}
        \norm{(f \circ \encs)''}^{2}_{\lp{2}{\Omega}} \leqslant (\eta^2 + \nu^2)\cdot \xi\left(\norm{\encs}^2_{\hiltilde{2}}\right).
    \end{align}
\end{lemma}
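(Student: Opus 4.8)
\textbf{Proof sketch for Lemma~\ref{lem:foue_bound}.}
The plan is to differentiate the composition $f\circ\encs$ twice via the chain rule, estimate the result pointwise by powers of $\encs'$ and $\encs''$, integrate, and then convert the resulting $L^4$ and $L^2$ norms of the derivatives of $\encs$ into a monotonically increasing function of $\norm{\encs}^2_{\hiltilde{2}}$ using the interpolation inequality (Theorem~\ref{th:interp_ineq}) and the norm equivalence of Corollary~\ref{cor:norm-equiv-1}.

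\emph{Step 1: chain rule and pointwise estimate.} Since $f$ is $\nu$-Lipschitz with $\norm{f''}_{\lp{\infty}{\Omega}}\leqslant\eta$ (i.e. $f\in C^{1,1}$), and $\encs\in\hiltilde{2}=\sobeq{2}{2}$, Lemma~\ref{lem:fou} (and its proof) gives $f\circ\encs\in\sob{2}{2}$ together with the a.e. identity $(f\circ\encs)''(t)=f''(\encs(t))\,(\encs'(t))^2+f'(\encs(t))\,\encs''(t)$. Applying $(a+b)^2\leqslant 2a^2+2b^2$ and the bounds $|f'|\leqslant\nu$, $|f''|\leqslant\eta$ yields $|(f\circ\encs)''(t)|^2\leqslant 2(\eta^2+\nu^2)\big((\encs'(t))^4+(\encs''(t))^2\big)$ pointwise, and integrating over $\Omega$ gives $\norm{(f\circ\encs)''}^2_{\lp{2}{\Omega}}\leqslant 2(\eta^2+\nu^2)\big(\norm{\encs'}^4_{\lp{4}{\Omega}}+\norm{\encs''}^2_{\lp{2}{\Omega}}\big)$.

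\emph{Step 2: control by $\norm{\encs}^2_{\hiltilde{2}}$.} Write $\norm{\encs'}^4_{\lp{4}{\Omega}}\leqslant\norm{\encs'}^2_{\lp{\infty}{\Omega}}\cdot\norm{\encs'}^2_{\lp{2}{\Omega}}$. Using Theorem~\ref{th:interp_ineq} with $r=\infty$, $p=q=2$ on $\Omega=(0,1)$ (exactly as in Lemma~\ref{lem:norm_infty_bound_2}), $\norm{\encs'}_{\lp{\infty}{\Omega}}\leqslant\norm{\encs'}_{\lp{2}{\Omega}}+\norm{\encs''}_{\lp{2}{\Omega}}$, so $\norm{\encs'}^2_{\lp{\infty}{\Omega}}\leqslant 2\norm{\encs}^2_{\sob{2}{2}}$; combined with $\norm{\encs'}^2_{\lp{2}{\Omega}}\leqslant\norm{\encs}^2_{\sob{2}{2}}$, $\norm{\encs''}^2_{\lp{2}{\Omega}}\leqslant\norm{\encs}^2_{\sob{2}{2}}$, and Corollary~\ref{cor:norm-equiv-1} ($\norm{\encs}^2_{\sob{2}{2}}\leqslant 7\norm{\encs}^2_{\hiltilde{2}}$), every quantity is bounded by $s:=\norm{\encs}^2_{\hiltilde{2}}$: in particular $\norm{\encs'}^4_{\lp{4}{\Omega}}\leqslant 98\,s^2$ and $\norm{\encs''}^2_{\lp{2}{\Omega}}\leqslant 7\,s$.

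\emph{Step 3: conclusion and main obstacle.} Substituting into Step 1, $\norm{(f\circ\encs)''}^2_{\lp{2}{\Omega}}\leqslant 2(\eta^2+\nu^2)\,(98\,s^2+7\,s)$, so it suffices to take $\xi(s):=196\,s^2+14\,s$, which is monotonically increasing on $[0,\infty)$ and independent of all scheme parameters. The only genuinely delicate point is justifying the a.e. second-order chain rule for a composition of a $C^{1,1}$ function with a $W^{2,2}$ function; this is precisely what underlies Lemma~\ref{lem:fou}, so it may be invoked directly rather than re-derived. A minor technicality is that Theorem~\ref{th:interp_ineq} is stated for $0<\ell<\mathcal{L}^1(\Omega)=1$ rather than $\ell=1$; this is handled by letting $\ell\to 1^-$ or by fixing any $\ell\in(0,1)$ at the cost of slightly larger absolute constants, neither of which affects the existence of a valid $\xi$.
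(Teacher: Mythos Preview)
Your proof is correct and follows essentially the same route as the paper's: chain rule, pointwise control using $|f'|\leqslant\nu$ and $|f''|\leqslant\eta$, reduction to $\norm{\encs'}^4_{\lp{4}{\Omega}}+\norm{\encs''}^2_{\lp{2}{\Omega}}$, then interpolation (Theorem~\ref{th:interp_ineq}) and the norm equivalence of Corollary~\ref{cor:norm-equiv-1} to land on a polynomial $\xi$ in $s=\norm{\encs}^2_{\hiltilde{2}}$. The only cosmetic differences are that the paper uses Cauchy--Schwarz instead of $(a+b)^2\leqslant 2(a^2+b^2)$ at the pointwise step (saving a factor of $2$), and it bounds $\norm{\encs'}_{\lp{4}{\Omega}}$ directly via Theorem~\ref{th:interp_ineq} with $r=4$, $p=q=2$ rather than going through $\norm{\encs'}_{\lp{\infty}{\Omega}}$; this yields $\xi(t)=7t+196t^2$ in the paper versus your $\xi(s)=14s+196s^2$, an immaterial difference.
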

\begin{proof}
    By applying the chain rule and following similar steps as those outlined in \cite[Theorem~3]{moradicoded}, we can demonstrate that:
\begin{align}\label{eq:optimal_encoder_1}
    \int_\Omega\left[f(\encs(t))''\right]^2 \,dt &\lec{(a)}{=} \int_\Omega \left[ \encs''(t)\cdot f'(\encs(t)) + \encs'(t)^2 \cdot f''(\encs(t))\right]^2\,dt \nonumber \\ 
    &\lec{(b)}{\leqslant} \int_\Omega \left[\encs''(t)^2 + \encs'(t)^4\right] \left[f'(\encs(t))^2 + f''(\encs(t))^2\right]\,dt \nonumber \\
    &\lec{(c)}{\leqslant} (\eta^2 +\nu^2) \int_\Omega \left[\encs''(t)^2 + \encs'(t)^4\right]\,dt \nonumber\\ 
    &\lec{}{=} (\eta^2 + \nu^2) \left(\norm{\encs''(t)}^2_{\lp{2}{\Omega}} + \norm{\encs'(t)}^4_{\lp{4}{\Omega}} \right) \nonumber\\
    &\lec{(d)}{\leqslant} (\eta^2 + \nu^2) \left(\norm{\encs''(t)}^2_{\lp{2}{\Omega}} + 
    \left(\norm{\encs'(t)}_{\lp{2}{\Omega}} + \norm{\encs''(t)}_{\lp{2}{\Omega}}\right)^4\right) \nonumber \\ 
    &\lec{(e)}{\leqslant} (\eta^2 + \nu^2) \left(\norm{\encs''(t)}^2_{\lp{2}{\Omega}} + 
    4\left(\norm{\encs'(t)}^2_{\lp{2}{\Omega}} + \norm{\encs''(t)}^2_{\lp{2}{\Omega}}\right)^2\right)
    \nonumber \\ 
    &\lec{(f)}{\leqslant} (\eta^2 + \nu^2) \left(\norm{\encs}^2_{\sob{2}{2}} + 
    4\norm{\encs}^4_{\sob{2}{2}}\right)
    \nonumber \\ 
    &\lec{(g)}{\leqslant} (\eta^2 + \nu^2) \left(7\norm{\encs}^2_{\sobeq{2}{2}} + 
    4\times7^2\norm{\encs}^4_{\sobeq{2}{2}}\right)
    \nonumber \\ 
    &\lec{(h)}{=} (\eta^2 + \nu^2) \cdot \xi\left(\norm{\encs}^2_{\sobeq{2}{2}}\right),
    \nonumber \\
    &\lec{(i)}{=} (\eta^2 + \nu^2) \cdot \xi \left(\norm{\encs}^2_{\hiltilde{2}}\right),
\end{align}
where (a) follows from the chain rule, (b) is derived using the Cauchy-Schwartz inequality, (c) relies on the bound \( \norm{f''}_{\lp{\infty}{\Omega}} \leqslant \eta \), (d) utilizes Theorem~\ref{th:interp_ineq} with \( r=4 \), \( p=q=2 \), and \( l=1 \), (e) is based on the AM-GM inequality, (f) results from adding positive terms \( \norm{\encs}^2_{\sob{2}{2}} \) and \( \norm{\encs'}^2_{\sob{2}{2}} \) to the first term and \( \norm{\encs}^2_{\sob{2}{2}} \) to the second term within the parentheses, (g) follows from applying Corollary~\ref{cor:norm-equiv-1}, (h) is derived by defining \( \xi(t):=7t + 4\times7^2t^2 \), and (i) is justified by Proposition~\ref{prop:sob_hilb}.
\end{proof}
As a result, by using Lemma~\ref{lem:foue_bound} and \eqref{eq:th2_bound_1}, we have:
\begin{align}
    \mathcal{R}(\hat{f}) &\lec{(a)}{\leqslant} \frac{2\nu^2}{K}\sum^K_{k=1} (\encs(\alpha_k) - x_k)^2 + C_1 \frac{M^2\gamma^2}{N^4} + \widetilde{C_2}\frac{M^2\gamma^2}{N^2}\declamb^{-\frac{1}{2}} + (\eta^2 + \nu^2)\left(\left(C_4\declamb\right)^\frac{3}{4} +\left( C_5N^{-4}\right)^\frac{3}{4}\right) \cdot \xi \left(\norm{\encs}^2_{\hiltilde{2}}\right) \nonumber \\ 
    &\lec{(b)}{\leqslant} \frac{2\nu^2}{K}\sum^K_{k=1} (\encs(\alpha_k) - x_k)^2 + \enclamb \cdot \left(2 +\xi \left(\norm{\encs}^2_{\hiltilde{2}}\right)\right)
    \nonumber \\ 
    &\lec{(c)}{\leqslant} \frac{2\nu^2}{K}\sum^K_{k=1} (\encs(\alpha_k) - x_k)^2 + \enclamb \cdot \psi \left(\norm{\encs}^2_{\hiltilde{2}}\right),
\end{align}
where (a) follows from $\widetilde{C_2}:=C_2(1+C_3)$, (b) follows by $\enclamb:=\max\{C_1 \frac{M^2\gamma^2}{N^4},
\widetilde{C}_2\frac{M^2\gamma^2}{N^2}\declamb^{-\frac{1}{2}},(\nu^2+\eta^2)(C_4\declamb^\frac{3}{4} + C_5N^{-3})\}$, (c) is comes from $\psi(t):=\xi(t) + 2$.

\section{Proof of Corollary~\ref{cor:conv_rate}}\label{sec:app_cor_rate_proof}

Consider a natural spline function, $\widetilde{\encs}(\cdot)$, fitted to the data points $\{\alpha_k, x_k\}_{k=1}^K$. Let $\encs^*(\cdot)$ represent the minimizer of the upper bound specified in \eqref{eq:optimcal_encdoer}. This leads to the following inequality:

\begin{align}\label{eq:proof_cor_1}
    \mathcal{R}(\hat{f}) &\leqslant \frac{2\nu^2}{K} \sum_{k=1}^K \left(\encs^*(\alpha_k) - x_k\right)^2 + \enclamb \cdot \psi \left(\norm{\encs^*}^2_{\hiltilde{2}}\right)
    \nonumber \\
    &\lec{(a)}{\leqslant} \frac{2\nu^2}{K} \sum_{k=1}^K \left(\widetilde{\encs}(\alpha_k) - x_k\right)^2 + \enclamb \cdot \psi \left(\norm{\widetilde{\encs}}^2_{\hiltilde{2}}\right) \nonumber \\
    &\stackrel{(b)}{\leq} \max\left\{C_1 \frac{M^2\gamma^2}{N^4},
\widetilde{C}_2\frac{M^2\gamma^2}{N^2}\declamb^{-\frac{1}{2}},(\nu^2+\eta^2)(C_4\declamb^\frac{3}{4} + C_5N^{-3})\right\} \cdot \psi \left(\norm{\encs^*}^2_{\hiltilde{2}}\right),
\end{align}
where (a) follows from the optimality of $\encs^*(\cdot)$, and (b) results from substituting the definition of $\enclamb$ and the fact that the natural spline overfits the data, i.e., $\widetilde{\encs}(\alpha_k) = x_k$ for all $k \in [K]$.

The upper bound in \eqref{eq:proof_cor_1} holds for all \( 1 \geqslant \declamb > CN^{-4} \). By optimizing over $\lambda$, we conclude that 
\(\declamb^* = J N^{\frac{8}{5}(a-1)}\), where the constant $J$ depends on $\widetilde{C}_2, C_4, C_5, \nu, \eta$, and $M$. Substituting $\declamb^*$ yields:
\begin{align}
    \mathcal{R}(\hat{f}) \leqslant \max\left\{C_1 M^2 N^{2a-4},\,
    \widetilde{C}_2 M^2 J^{-\frac{1}{2}} N^{\frac{6}{5}(a-1)},\,
    (\nu^2 + \eta^2)\left(C_4 J^{\frac{3}{4}} N^{\frac{6}{5}(a-1)} + C_5 N^{-3}\right)\right\}.
\end{align}
Since \( a \in [0,1) \), we have \( \frac{6}{5}(a-1) \geqslant \max\{2a-4, -3\} \). Consequently, 
\(\mathcal{R}(\hat{f}) \leqslant \mathcal{O}\left(N^{\frac{6}{5}(a-1)}\right)\).

\section{Proof of Theorem~\ref{th:letcc_enc_ss}}\label{sec:app_th_enc_ss}
For the first part, note that if $t\leqslant E$, we have:
\begin{align}
    \psi(t) &\lec{}{=} 2 + \xi(t) \nonumber \\
    &\lec{}{\leqslant} 2 + t(7 + 196E) \nonumber \\
    &\lec{}{=} c_1(E) + c_2(E)t.
\end{align}
Therefore, we have:
\begin{align}
    \psi\left(\norm{\encs}^2_{\hiltilde{2}}\right) &\lec{}{\leqslant} c_1(E) + c_2(E)\left(\norm{\encs}^2_{\hiltilde{2}}\right) \nonumber \\
    &\lec{(a)}{=} c_1(E) + c_2(E)\left(\encs(0)^2 + \encs'(0)^2 + \int_{\Omega} |\encs''(t)|^2\,dt\right) \nonumber \\
    &\lec{(b)}{\leqslant} c_1(E) + c_2(E)\left(E + \int_{\Omega} |\encs''(t)|^2\,dt\right),
\end{align}
where (a) follows from the definition of $\norm{\cdot}^2_{\hiltilde{2}}$, and (b) holds because $\norm{\encs}^2_{\hiltilde{2}} \leqslant E$. Defining $D_1(E) := c_1(E) + E \cdot c_2(E)$ and $D_2(E) := c_2(E)$ completes the proof of the first part.

For the second part, the same steps as those in \cite[Proposition~1]{moradicoded} can be followed to prove the statement.
\end{document}